\newtheorem{thm}{Theorem}[section]
\newtheorem{definition}[thm]{Definition}
\newtheorem{remark}[thm]{Remark}
\newtheorem{example}[thm]{Example}
\newtheorem{lemma}[thm]{Lemma}
\newtheorem{proposition}[thm]{Proposition}
\definecolor{input}{HTML}{303060}
\definecolor{output}{HTML}{804000}
\definecolor{string}{HTML}{A02020}
\definecolor{parent}{HTML}{A020A0}
\definecolor{function}{HTML}{205080}
\definecolor{constructor}{HTML}{205080}
\definecolor{method}{HTML}{205080}
\definecolor{keyword}{HTML}{008000}
\definecolor{error}{HTML}{B01010}
\definecolor{comment}{HTML}{60A060}
\definecolor{import}{HTML}{A46519}
\definecolor{vertex1}{RGB}{0,255,125}
\definecolor{vertex2}{RGB}{255,0,0}
\definecolor{vertex3}{RGB}{0,0,255}
\definecolor{vertex4}{RGB}{255,165,0}
\newcommand{\cIn}{{\color{input} \tt \phantom{C}In}:}
\newcommand{\cOut}{{\color{output} \tt Out}:}
\newcommand{\sage}{\textsc{SageMath}\xspace}
\newcommand{\R}[0]{\mathbb{R}}
\newcommand{\N}[0]{\mathbb{N}}
\newcommand{\Z}[0]{\mathbb{Z}}
\newcommand{\Q}[0]{\mathbb{Q}}
\newcommand{\lt}{\textnormal{lt}}
\newcommand{\lm}{\textnormal{lm}}
\newcommand{\lc}{\textnormal{lc}}
\newcommand{\valr}[1]{\textnormal{val}_{\mathbf{#1}}}
\newcommand{\val}{\textnormal{val}}
\newcommand{\rem}{\textnormal{rem}}
\newcommand{\valP}{\textnormal{val}_P}
\newcommand{\K}{K}
\newcommand{\Kz}{\K^\circ}
\newcommand{\X}{\mathbf{X}}
\newcommand{\XL}{\mathbf{X^{\pm 1}}}
\newcommand{\PolynomialMonoid}{T_{\ge 0}}
\newcommand{\LaurentMonoid}{T}
\newcommand{\PolynomialRing}{K[\X]}
\newcommand{\LaurentRing}{K[\XL]}
\newcommand{\LaurentPolytopalRing}{K\{\X;P\}}
\newcommand{\LaurentPolytopalRingForR}{K\{\X;r\}}
\newcommand{\MonoidOfTerms}{T\{\X\}}
\newcommand{\MonoidOfTermsForR}{T\{\X\}}
\newcommand{\MonoidOfTermsWithGMO}{T\{\X;\le_\omega\}}
\newcommand{\drawVector}[4]{
\pgfmathsetmacro{\vNorm}{sqrt((#1)^2 + (#2)^2)} 
\pgfmathsetmacro{\endX}{#3 * #1 / \vNorm}  
\pgfmathsetmacro{\endY}{#3 * #2 / \vNorm}  
\draw[#4] (0,0) -- (\endX,\endY);         
}
\begin{document}

\begin{frontmatter}
\title{Gröbner bases over polytopal affinoid algebras}

\author[1]{Moulay Barkatou}
\ead{moulay.barkatou@unilim.fr}
\author[1]{Lucas Legrand}
\ead{lucas.legrand@unilim.fr}
\author[1]{Tristan Vaccon}
\ead{tristan.vaccon@unilim.fr}

\affiliation[1]{organization={{Universit\'{e} de Limoges; CNRS, XLIM UMR 7252}},
city={Limoges},
country={France}}

\begin{keyword}
Algorithms \sep Gröbner bases \sep Tate algebra \sep Laurent polynomials \sep Tropical analytic geometry
\end{keyword}

\begin{abstract}
     Polyhedral affinoid algebras have been introduced by Einsiedler, Kapranov and Lind in \cite{EKL:2006}
     to connect rigid analytic geometry
     (analytic geometry over non-archimedean fields)
     and tropical geometry.
     In this article, 
     we present 
     a theory of Gröbner
     bases for 
     polytopal affinoid
     algebras that
     extends both
     Caruso et al.'s 
     theory of 
     Gröbner bases
     on Tate algebras of \cite{CVV:2019}
     and Pauer et al.'s 
     theory of Gröbner
     bases on Laurent
     polynomials of \cite{PU:1999}.
     
     We provide effective
     algorithms
     to compute Gröbner
     bases for both
     ideals of Laurent 
     polynomials and
     ideals in polytopal affinoid algebras.
     Experiments with
     a Sagemath implementation
     are provided.
\end{abstract}

\end{frontmatter}

%
%
%
%
%

\section{Introduction}

Rigid geometry was born in Tate's article \cite{Tate:1971}. He defined
affinoid algebras as ideal quotients of the Tate algebras,
the algebras of converging power series on the unit ball
of some complete non-archimedean valued field.
They are the building blocks of Tate's rigid geometry
in the same way ideal quotients in polynomial rings
are the building blocks of algebraic geometry.
In the past 50 years, they have found various applications,
among them one can cite Raynaud's contribution to solving
Abhyankar's conjecture \cite{Raynaud:1994}.

Polyhedral and polytopal affinoid algebras
have been defined by Einsiedler et al. in \cite{EKL:2006}
using Tate algebras with convergence conditions
given by a polyhedron or a polytope (respectively).
They are one of the main ingredients of the development
of tropical analytic geometry as in \textit{e.g.} \cite{Gubler:2007,Rabinoff:2012,FM:2023bis}.
It has found applications with Gubler in \cite{Gubler:2007bis}
to prove the Bogomolov conjecture for totally degenerate abelian varieties.
One motivation of this article is to make progress
toward an \textit{effective counterpart}
to tropical analytic geometry.

To do so, the most natural tool to implement
is Gröbner bases (GB).
The case of GB over Tate algebras has
been studied in \cite{CVV:2019}.
Modern algorithms like F5 and FGLM
have been generalized to this context
in \cite{CVV:2020,CVV:2021}.
Overconvergence has been studied in 
\cite{CVV:2022,VV:2023} culminating
in the definition and computation
of Universal Analytic Gröbner Bases.
They allow a first step toward
tropical analytic geometry 
in the context of polynomial ideals.

To extend these results to polytopal affinoid algebras,
it is natural to work with Laurent polynomials and series.
To define GB in this context,
one could of course, instead of working with $x_1^{\pm1},\dots,x_n^{\pm1}$,
choose to use nonnegative monomials $x_1,y_1,\dots,x_n,y_n$
along with relations $x_1y_1-1=0,\dots, x_ny_n-1=0$.
As we believe a direct approach would be more suitable for tropical applications,
we have chosen to implement an instance of Laurent polynomials
and series which does not hide its monomials with negative exponents,
and which has been introduced by Pauer and Unterkircher in \cite{PU:1999}.
Motivated by the study of systems of linear partial difference equations, 
they have developed a theory of generalized monomial orderings
and Gröbner bases for Laurent
polynomials.
We provide a short introduction to the generalized monomial ordering part of this theory in Section \ref{sec:GMO}. 
In addition, in Section \ref{sec:implem}, we fill in a gap that
prevented their theory to be completely effective: we
provide an algorithm to compute all the necessary leading monomials
in this context.

Equipped with a new notion of term ordering built on those of \cite{CVV:2019,VV:2023} and \cite{PU:1999}, 
we study the multivariate division in polytopal affinoid algebras
in Section \ref{sec:MultiVarDivision} in the particular case when the polytope is a point.
These results enable us to provide in Section \ref{sec:GBpolytopal}
a theory of Gröbner bases for ideals in polytopal affinoid domains in the particular case when the polytope is a point..
Adecisive step toward effectivity is made in Subsection  \ref{subsec:Buchberger}
with an adapted Buchberger algorithm.
Section  \ref{sec:implem}
provides additional tools needed for effective
GB computations.
Section \ref{sec:gb_valP} concludes the
article by extending the previous
results to the case of Gröbner bases for ideals in polytopal affinoid domains for 
a general polytope.

In addition, a short software demonstration in Sagemath \cite{sagemath}
can be read in Appendix.
Three proofs of results of
Section \ref{sec:gb_valP} are also given in the Appendix.

\par{Expanded version and shorter version:}
A shorter version of this article has been published in the Proceedings of the 49th International Symposium on Symbolic and Algebraic Computation (ISSAC 2024) \cite{BLV:2024}. The main differences between the shorter version and this paper are the following corrections and additions. 
Sections \ref{sec:MultiVarDivision} and \ref{sec:GBpolytopal}
have been revised and updated:
for their constructions,
a valuation compatible with
multiplication by 
monomials is needed and their
scope has been \emph{restricted} to
division and Gröbner bases in
$\LaurentPolytopalRingForR$ (see Section \ref{sec:setting}).
As such a completely new section
is added with Section \ref{sec:gb_valP} to define
and compute Gröbner bases for
ideals in $\LaurentPolytopalRing$ (see Section \ref{sec:setting}) using
a conic decomposition adapted
from the polytope $P$.








\section{Setting} \label{sec:setting}
Let $K$ be a field with a discrete valuation $\val{}:K \to \mathbb{R}\sqcup \infty$
making it complete,
and let $\Kz$ be the subring of $K$ consisting of
elements of nonnegative valuation. 
Let $\pi$ be a uniformizer 
of $K$, $\textit{i.e.}$ a generator of the maximal ideal $\left\lbrace x \in \Kz, \: \val(x)>0 \right\rbrace$ of $\Kz.$
Typical examples of such a setting are $p$-adic fields like
 $K=\Q_p$ with $\Kz=\Z_p$ and
 $\pi = p$,
or Laurent series fields like $K=\Q(\!(U)\!)$ with $\Kz=\Q \llbracket U \rrbracket$ and
 $\pi = U$.

 We fix a positive integer $n$.
 Let $X_1,\dots,X_n$ be $n$ variables.
We use the short notations $\X$ for $(X_1,\dots,X_n)$ and $\X^{\pm 1}$ for $(X_1^{\pm 1},\dots,X_n^{\pm 1})$.
 If $\mathbf{i} = (i_1,\dots,i_n) \in \Z^n$, we shall write $\X^\mathbf{i}$ for $X_1^{i_1}\cdots X_n^{i_n}$.
 We define $\PolynomialMonoid$ and $\LaurentMonoid$ to be the multiplicative monoids $\{\X^\mathbf{i},\ \mathbf{i} \in \N^n\}$ and $\{\X^\mathbf{i},\ \mathbf{i} \in \Z^n\}$ respectively.
We will frequently represent a monomial in either the set $\PolynomialMonoid$ or $\LaurentMonoid$ by utilizing the $n$-tuple of its exponents in $\N^n$ or $\Z^n$.

Let $\mathcal{P} \subset{\R^n}$ be a polytope with vertices $\textnormal{vert}(P) \subset \Q^n$. Let $P=\mathcal{P}\cap\mathbb{Q}^n$. 
For $a,b \in  \R^n$, $a \cdot b$ denote the usual scalar product in $\R^n$.
Following \cite{Rabinoff:2012}, we define the polytopal affinoid algebra $\LaurentPolytopalRing$ as the following algebra:
\[\left\{\sum_{u \in \Z^n}a_u\mathbf{X}^u\ :\ a_u \in K,\ \forall r \in P,\ \val(a_u) - r\cdot u \xrightarrow[|u| \rightarrow +\infty]{} +\infty \right\}.\]

If $ P = \{r\}  $, we simply write $ K\{\X; r\} $ for $ K\{\X;\{r\} \} $.
From the definition of $ \LaurentPolytopalRing $, it is immediate that if $ P_1 \subseteq P_2 $, then 
$K \{\X; P_2\} \subseteq $ $ K \{\X;P_1\}  $.
For $f = \sum_{u \in \Z^n} a_u \X^u \in \LaurentPolytopalRing$, we define 
$\textnormal{supp}(f) := \{\X^u \mid u \in \Z^n, a_u \neq 0\}$.
For $ r \in P $, we define a valuation $\valr{r}$ on $\LaurentPolytopalRing$ as follows:
\[
\valr{r}(f) = \valr{r} \left(\sum_{u \in \Z^n} a_u \mathbf{X}^u \right) = \min_{u \in \Z^n} \val(a_u) - r \cdot u.
\]
The valuation $\valr{r}$ is multiplicative: $\valr{r}(fg) = \valr{r}(f) + \valr{r}(g)$.
We then define $\valP(f) = \inf_{r \in P} \valr{r}(f)$.
According to \cite{Rabinoff:2012}, $\valP$ is also a valuation on $\LaurentPolytopalRing$, and
\[
\valP(f) = \inf_{r \in P} \valr{r}(f) = \min_{r \in P} \valr{r}(f) = \min_{r \in \textnormal{vert}(P)} \valr{r}(f).
\]
Unlike $\valr{r}$, $\valP$ is only sub-multiplicative: 
$\valP(fg) \ge \valP(f) + \valP(g)$.
This distinction between $\valr{r}$ and $\valP$ will play an important role in the paper.
Additionally, $f$ is in $\LaurentPolytopalRing$ if and only if $\val(a_u) - r \cdot u \to +\infty$ for all $r \in \textnormal{vert}(P)$.

In other words:
\[ \LaurentPolytopalRing = \bigcap_{r \in  \textnormal{vert}(P)} K\{\mathbf{X};r\}\]
Elements of $\LaurentPolytopalRing$ are exactly the Laurent power series with coefficients in $K$ which converge on the set $\val^{-1}(P) \subset (\overline{K}^\times)^n$ (where $\textnormal{val}$ is extended to $\overline{K}$), and $\valP(f)$ is the minimum valuation reached by $f$ on $\val^{-1}(P)$.

A case of particular interest is when $P = \prod_{i=1}^n[r_i,s_i]$ for some $r_i < s_i \in \Q$.
For such $P$, $\val{}^{-1}(P)$ is the polyannulus 
$\{(\epsilon_1,\dots,\epsilon_n) \in (\overline{K}^\times)^n\ :\ r_i \le \val(\epsilon_i) \le s_i\}$ 
and in rigid geometry, such a $\LaurentPolytopalRing$ is called a Laurent domain.

\section{Generalized monomial order}
\label{sec:GMO}
In the realm of Gröbner basis theory for ideals within the polynomial ring $\PolynomialRing$, a fundamental component involves monomial orders defined on the monoid $\PolynomialMonoid$.
Recall that a monomial order on $\PolynomialMonoid$ is a total order that satisfies, for all $r, s, t \in \PolynomialMonoid$:

\begin{center}
  \begin{enumerate*}[(1),itemjoin={\hskip10mm}]
      \item $1 \le t$ 
      \item $r < s \implies rt < st$
  \end{enumerate*}
\end{center}

When attempting to extend this theory to Gröbner bases for ideals in the Laurent polynomial ring $\LaurentRing$, a natural inclination is to consider "monomial orders on $\LaurentMonoid$." 
These should be total orders on $\LaurentMonoid$ that adhere to conditions (1) and (2). 
However, such orders cannot exist.

To illustrate this, take any non-trivial element $t \in \LaurentMonoid$. 
According to (1), we have $1 < t$, which implies $t^{-1} < 1$ by (2). 
This leads to a contradiction. 
The problem is that, unlike $\PolynomialMonoid$, $\LaurentMonoid$ contains non-trivial invertible elements.

In the article \cite{PU:1999}, the authors introduced generalized monomial orders (or g.m.o) as a workaround to the previous problem.
Their approach involves representing $\LaurentMonoid$ as a finite union $\cup_i T_i$ of submonoids $T_i$, ensuring that each $T_i$ does not contain any non-trivial invertible elements, thereby allowing the definition of a monomial order on each of them.
The crucial requirement is that the order on $\LaurentMonoid$ must restrict to a monomial order on each $T_i$ in a compatible manner (refer to condition 2 in Definition \ref{def:gmo}).
This final condition ensures that the leading term of a product, which typically is not the product of the leading terms for a g.m.o, predominantly depends on the submonoids $T_i$.

In this section, we expose this notion, simplifying a little the exposition in \cite{PU:1999} to fit our use case.

\smallbreak
Sometimes, we will need to consider orders defined on various sets.
As a general guideline, the symbols $\le$ and $<$ will be employed irrespective of the set on which the order is established.
The clarity of the set should always be apparent from the context.
In instances where ambiguity might arise, we employ subscripts (such as $<_{G}$, $<_\omega$, $\dots$) to distinguish between different orders.


\begin{definition}[{\cite[Definition 2.1]{PU:1999}}]
	\label{def:conic_decomposition}
	A conic decomposition of a sub-monoid $S \subseteq \LaurentMonoid$ is a finite family $(S_i)_{i \in I}$ of finitely generated submonoids of $S$ such that
	\begin{enumerate}[(1)]
		\item for each $i$, the only invertible element in the monoid $S_i$ is $1$ and the group generated by $S_i$ is $\LaurentMonoid$.
		\item the union of all the $S_i$'s equal $S$
	\end{enumerate}
\end{definition}

We will be exclusively interested in the case where $S = T$.
Examples \ref{ex:standard_conic_decomposition} and \ref{ex:second_conic_decomposition} illustrate Definition \ref{def:conic_decomposition}.
We will use the conic decomposition of Example \ref{ex:standard_conic_decomposition} in our implementation in \S \ref{subsec:particular_conic_decompo}
and in the Appendix.

\begin{example}
\label{ex:standard_conic_decomposition}
Let $T_0 := \{X^k,\ k \in \N^n\}$ and for $1 \le j \le n$ let $T_j$ be the monoid generated by $\{X_1^{-1}\dots X_n^{-1}\}\cup\{X_1,\dots,\hat{X_j},\dots,X_n\}$ where the hat symbol indicates that the corresponding element is omitted.
In other words, $T_j$ contains all monomials for which the exponent of $X_j$ is non-positive and smaller than any other exponent.
Then $(T_j)_{0 \le j \le n}$ is a conic decomposition of $\LaurentMonoid$ containing $n+1$ cones.
\end{example}

\begin{example}
\label{ex:second_conic_decomposition}
Let $D_n$ be the set of all maps from $\{1,\dots,n\}$ to $\{-1,1\}$. 
For $d$ in $D_n$, let $T_d$ be the monoid generated by the set $\{X_1^{d(1)},\dots,X_n^{d(n)}\}$.
Then $(T_d)_{d \in D_n}$ is a conic decomposition of $\LaurentMonoid$ containing $2^n$ cones.
\end{example}

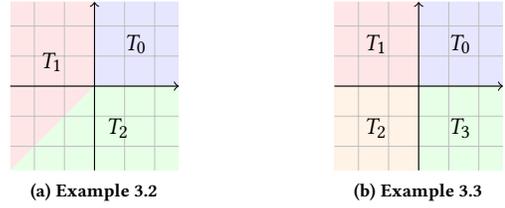
\begin{figure}
		\centering
		\subfloat[Example \ref{ex:standard_conic_decomposition}]{
		    \centering
			\begin{tikzpicture}[scale=0.8]
			\fill[blue!10!white] (0,0) rectangle (1.4,1.4);
			\fill[red!10!white] (0,0) rectangle (-1.4,1.4);
			\fill[red!10!white] (0,0) -- (-1.4,0) -- (-1.4,-1.4);
			\fill[green!10!white] (0,0) rectangle (1.4,-1.4);
			\fill[green!10!white] (0,0) -- (0,-1.4) -- (-1.4,-1.4);
			\draw[step=0.5cm,gray!50!white,very thin] (-1.4, -1.4) grid (1.4,1.4);
			\draw[thin,->] (-1.4,0) -- (1.4,0);
			\draw[thin,->] (0,-1.4) -- (0,1.4);
			\node at (0.7,0.7) {$T_0$};
			\node at (-0.7,0.4) {$T_1$};
			\node at (0.4,-0.7) {$T_2$};
			\end{tikzpicture}
		} \quad \quad \quad \quad \quad \quad
		\subfloat[Example \ref{ex:second_conic_decomposition}]{
	        \centering
            \begin{tikzpicture}[scale=0.8]
			\fill[blue!10!white] (0,0) rectangle (1.4,1.4);
			\fill[red!10!white] (0,0) rectangle (-1.4,1.4);
			\fill[green!10!white] (0,0) rectangle (1.4,-1.4);
			\fill[orange!10!white] (0,0) rectangle (-1.4,-1.4);
			\draw[step=0.5cm,gray!50!white,very thin] (-1.4, -1.4) grid (1.4,1.4);
			\draw[thin,->] (-1.4,0) -- (1.4,0);
			\draw[thin,->] (0,-1.4) -- (0,1.4);
			\node at (0.7,0.7) {$T_0$};
			\node at (-0.7,0.7) {$T_1$};
			\node at (0.7,-0.7) {$T_3$};
			\node at (-0.7,-0.7) {$T_2$};
			\end{tikzpicture}
		}\caption{Conic decompositions for $n=2$}
	\label{schema:conic_decompositions}
\end{figure}

\begin{definition}[{\cite[Definition 2.2]{PU:1999}}]
\label{def:gmo}
	Let $(T_i)_{i \in I}$ be a conic decomposition of $\LaurentMonoid$.
	A generalized monomial order (or g.m.o) on $\LaurentMonoid$ for the decomposition $(T_i)_{i \in I}$ is a total order $<$ on $\LaurentMonoid$ such that
	\begin{enumerate}[(1)]
		\item $\forall t \in \LaurentMonoid, 1 \le t$
		\item $\forall r \in \LaurentMonoid, \forall i \in I$, $(s,t \in  T_i \ \textnormal{and}\ r < s) \implies rt < st$ \label{item:compatibilite_multiplication_dans_Ti}
	\end{enumerate}
\end{definition}

\begin{remark}
	For each $i \in I$, the restriction of $<$ to $T_i$ is a monomial order (take $r$ in $T_i$ in 2. of Definition \ref{def:gmo}).
\end{remark}

Given a conic decomposition, Lemma \ref{lemma:construct_gto} provides a method for constructing a g.m.o by employing an auxiliary function, $\phi: \LaurentMonoid \to \mathbb{Q}_{\ge 0}$, which exhibits favorable behavior in relation to the decomposition.

\begin{lemma}[{\cite[Lemma 2.1]{PU:1999}}]
\label{lemma:construct_gto}
Let $(T_i)_{i \in I}$ be a conic decomposition of $\LaurentMonoid$ and $E$ be either $\{1\}$ or one of the $T_i$.
Let $<_G$ be a total group order on $\LaurentMonoid$ (e.g. the lexicographical order). 
Let $\phi : \LaurentMonoid \to \mathbb{Q}_{\ge 0}$ be a function fulfilling the following conditions:
\begin{enumerate}[(1)]
	\item $\forall t \in \LaurentMonoid\setminus E$, $\phi(t) > 0$
	\item $\forall s,t \in \LaurentMonoid$, $\phi(st) \le \phi(s) + \phi(t)$
	\item $\forall i \in I$, $\phi|_{T_i}$ is a monoid homomorphism.
\end{enumerate}
\noindent Then the order $<$ defined by
\[ r < s \iff \phi(r) < \phi(s) \textnormal{ or } (\phi(r) = \phi(s) \textnormal{ and } r <_G s)\]
is a g.m.o on $\LaurentMonoid$ for the decomposition $(T_i)_{i \in I}$.
\end{lemma}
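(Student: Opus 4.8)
The plan is to verify directly the three requirements on the relation $<$: that it is a total order on $\LaurentMonoid$, and that it satisfies conditions~(1) and~(2) of Definition~\ref{def:gmo}. The first is formal. The rule for $<$ is exactly the lexicographic comparison of the pair $\bigl(\phi(t),t\bigr)$, where the first coordinate lives in $\mathbb{Q}_{\ge 0}$ with its usual order and the second is compared by $<_G$; since $t\mapsto t$ is injective and both of these are total orders, the induced relation is irreflexive, transitive, and satisfies trichotomy, so it is a strict total order. I would dispatch this in a couple of lines, and at the same time record the one elementary fact about $\phi$ used throughout: a conic decomposition is a nonempty family, any $T_i$ contains $1$, and applying condition~(3) to it forces $\phi(1)=0$.

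Next I would check condition~(1) of Definition~\ref{def:gmo}, namely $1\le t$ for every $t\in\LaurentMonoid$. For $t=1$ this is reflexivity. If $t\ne 1$ and $t\notin E$, then $\phi(t)>0=\phi(1)$ by condition~(1) on $\phi$, so the first clause of the definition of $<$ gives $1<t$. The remaining possibility, $t\in E$ with $t\ne 1$, can occur only when $E$ is one of the cones, say $E=T_{i_0}$ (the alternative $E=\{1\}$ being excluded by $t\ne 1$); then $t\in T_{i_0}$, $\phi(t)$ may vanish, and the claim reduces to $1\le_G t$, which is where the choice of the group order $<_G$ must be brought in (it holds, for instance, for the lexicographic order together with the standard conic decomposition of Example~\ref{ex:standard_conic_decomposition}, where $T_0=\{X^k:k\in\N^n\}$ sits on the lex-positive side).

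The substance is condition~(2). Fix $i\in I$, an element $r\in\LaurentMonoid$, and $s,t\in T_i$ with $r<s$; the goal is $rt<st$. Two facts feed the argument: because $s,t\in T_i$ and $\phi|_{T_i}$ is a monoid homomorphism, $\phi(st)=\phi(s)+\phi(t)$, while subadditivity (condition~(2) on $\phi$) gives $\phi(rt)\le\phi(r)+\phi(t)$ (note $r$ need not lie in $T_i$ for this). Now distinguish the two ways $r<s$ can hold. If $\phi(r)<\phi(s)$, then $\phi(rt)\le\phi(r)+\phi(t)<\phi(s)+\phi(t)=\phi(st)$, so $rt<st$ by the first clause of the definition of $<$. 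If instead $\phi(r)=\phi(s)$ and $r<_G s$, the same computation yields $\phi(rt)\le\phi(st)$; if this inequality is strict we conclude as before, and if $\phi(rt)=\phi(st)$ we use that $<_G$ is a group order, so $r<_G s$ implies $rt<_G st$, and the second clause gives $rt<st$.

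The computations are all immediate substitutions, the one idea being the observation that makes condition~(2) close: the subadditive upper bound for $\phi(rt)$ and the exact value of $\phi(st)$ both carry the summand $\phi(t)$, which therefore drops out of the comparison, reducing the tie-breaking case to the multiplicative compatibility of $<_G$. Accordingly, I expect no real obstacle beyond the bookkeeping, the only step needing genuine care being the boundary case of condition~(1) discussed above, where the hypotheses on $<_G$ relative to $E$ are used.
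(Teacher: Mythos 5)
Your verification is correct and is the standard direct argument; note that the paper itself gives no proof of this lemma (it is imported verbatim from \cite{PU:1999}), so the only comparison to be made is with the statement. The totality of $<$ (lexicographic comparison of the pairs $(\phi(t),t)$), the observation $\phi(1)=0$ from condition~(3), and your proof of condition~(2) of Definition~\ref{def:gmo} --- combining the exact value $\phi(st)=\phi(s)+\phi(t)$ for $s,t\in T_i$ with the subadditive bound $\phi(rt)\le\phi(r)+\phi(t)$, so that $\phi(t)$ cancels from the comparison, and invoking translation-invariance of $<_G$ only in the tie case --- are exactly right, including the remark that $r$ need not lie in $T_i$.

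The one step you leave open is also the one place where the lemma, as transcribed, is genuinely incomplete: when $E=T_{i_0}$ and $t\in E\setminus\{1\}$ has $\phi(t)=0$, condition~(1) of Definition~\ref{def:gmo} reduces to $1<_G t$, and this does \emph{not} follow from the stated hypotheses for an arbitrary total group order $<_G$. Concretely, take the data of Example~\ref{ex:standard_gmo_2} but replace the lexicographic order by its opposite, $u<_G v\iff v<_{\mathrm{lex}}u$ (still a total group order): for $t=X_1\in T_0=E$ one has $\phi(t)=\phi(1)=0$ and $t<_G 1$, hence $t<1$, and condition~(1) fails. So the hypothesis ``$1<_G t$ for all $t\in E\setminus\{1\}$'' must be added; it is vacuous when $E=\{1\}$ and it holds for the lexicographic order together with the decompositions used in the paper, which is why all the paper's examples are unaffected. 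You correctly identified this pressure point; the only improvement is to state explicitly that it is an extra assumption on $<_G$ relative to $E$ rather than something derivable from conditions (1)--(3) on $\phi$.
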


The notions of leading monomial, leading coefficient and leading term are defined for a Laurent polynomial as in the polynomial case.
\begin{definition}
\label{def:leadings}
Fix a g.m.o on $T$ and let $f \in \LaurentRing$. 
The leading monomial $\lm(f)$ of $f$ is defined as $\max(\X^j,\ j \in \textnormal{supp(f)})$.
The leading coefficient $\lc(f)$ of $f$ is the coefficient of $\lm(f)$ in $f$. 
The leading term $\lt(f)$ of $f$ is the product $\lc(f)\lm(f)$.
\end{definition}

Examples \ref{ex:standard_gmo_1} to \ref{ex:second_gmo} illustrate Lemma \ref{lemma:construct_gto} and Definition \ref{def:leadings}.
In each case, the group order on $\LaurentMonoid$ is taken to be the lexicographical order, and we show in the case $n=2$ how the monomials of $f = 2xy^{-2} + x^{-2}y^{-2} + 3x^{-1}y^{-2} + y^2 \in K[x^{\pm 1},y^{\pm 1}]$ are ordered.

\begin{example}
\label{ex:standard_gmo_1}
Take the standard decomposition $(T_0,T_1,\dots,T_n)$ of Example \ref{ex:standard_conic_decomposition}.
Let $E = \{1\}$ and define $\phi : \LaurentMonoid \to \mathbb{Q}_{\ge 0}$ by $\phi(i_1,\dots,i_n) = i_1 + \dots + i_n - (n+1)\min(0,i_1,\dots,i_n)$.
We have $xy^{-2} > x^{-1}y^{-2} > y^2 > x^{-2}y^{-2}$, $\lm(f)= xy^{-2}$, $\lc(f) = 2$ and $\lt(f) = 2xy^{-2}$.
\end{example}

\begin{example}
\label{ex:standard_gmo_2}
Take the standard decomposition $(T_0,T_1,\dots,T_n)$ of Example \ref{ex:standard_conic_decomposition}. 
Let $E = T_0$ and define $\phi : \LaurentMonoid \to \mathbb{Q}_{\ge 0}$ by $\phi(i_1,\dots,i_n) = -\min(0,i_1,\dots,i_n)$. 
We have $xy^{-2} > x^{-1}y^{-2} > x^{-2}y^{-2} > y^2$, $\lm(f) = xy^{-2}$, $\lc(f) = 2$ and $\lt(f) = 2xy^{-2}$.
\end{example}

\begin{example}
\label{ex:second_gmo}
Take the conic decomposition $(T_d)_{d \in D_n}$ of Example \ref{ex:second_conic_decomposition}. 
Let $E = \{1\}$  and define $\phi: \LaurentMonoid	\to \mathbb{Q}_{\ge 0}$ by $\phi(i_1,\dots,i_n) = \lvert i_1 \rvert + \dots + \lvert i_n \rvert $. 
We have $x^{-2}y^{-2} > xy^{-2} > x^{-1}y^{-2} > y^2$, $\lm(f) = x^{-2}y^{-2}$, $\lc(f) = 3$ and $\lt(f) = 3x^{-2}y^{-2}$.
\end{example}

In Example \ref{ex:problem_leading_term}, we illustrate the fact that for $t \in \LaurentMonoid$ and $f \in \LaurentRing$, the leading monomial of $tf$ is generally not equal to the product of $t$ by $\lm(f)$.

\begin{example}
\label{ex:problem_leading_term}
Take the g.m.o of Example \ref{ex:standard_conic_decomposition} in the case $n=2$ and $f = xy + y^{-1} \in K[x^{\pm 1}, y^{\pm 1}]$.
We have $\lm(yf) = xy^2$ but $y\lm(f) = 1 \neq xy^2$.
\end{example}

However, thanks to the compatibility condition (2) of Definition \ref{def:gmo}, if two monomials $t_1$, $t_2$ are such that $\lm(t_1f)$ and $\lm(t_2f)$ lies in the same cone $T_i$, then the monomials $a_1, a_2$ of $f$ such that $\lm(t_1f) = t_1a_1$ and $\lm(t_2f) = t_2a_2$ are equal. 
This is proven in Lemma \ref{lemma:lmi_independent} and leads naturally to the definition of "one leading monomial per cone" in Definition \ref{def:lmi}.
For the rest of this section, we fix a g.m.o for a conic decomposition $(T_i)_{i \in I}$ of $T$.
\begin{definition}
	\label{def:T_i(f)}
	For $i \in  I$ and $f \in \LaurentRing$, define \[T_i(f) := \{ t \in \LaurentMonoid,\ \lm(tf) \in T_i\}\].
\end{definition}

\begin{lemma}[{\cite[Lemma 2.3]{PU:1999}}]
	\label{lemma:lmi_independent}
	Let $i \in I$, $f \in \LaurentRing$, $u,v \in T_i(f)$. 
	Write $\lm(uf) = ut_u \in T_i$ and $\lm(vf) = vt_v \in T_i$ for some monomials $t_u,t_v$ of $f$. 
	Then $t_u = t_v$. 
\end{lemma}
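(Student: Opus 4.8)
The plan is to argue by contradiction: assume $t_u \neq t_v$ and manufacture two contradictory strict inequalities in the g.m.o. The whole point where the hypothesis "$\lm(uf)$ and $\lm(vf)$ lie in the \emph{same} cone $T_i$" gets used is that the compatibility axiom (2) of Definition~\ref{def:gmo} only lets one multiply a strict inequality $a < b$ by a monomial $c$ when $b$ and $c$ both lie in one common cone $T_i$; so the argument has to be organized so that every product formed respects this constraint.

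First I would unwind the leading-monomial hypotheses. Multiplication by $u$ is injective on monomials, so $\operatorname{supp}(uf) = u\cdot\operatorname{supp}(f)$ and $\lm(uf) = \max\{ut : t\in\operatorname{supp}(f)\}$, and similarly for $v$; note also that $t_u,t_v\in\operatorname{supp}(f)$. Hence $\lm(uf) = ut_u$ forces $ut_v < ut_u$ and $\lm(vf) = vt_v$ forces $vt_u < vt_v$, both strict since $t_u\neq t_v$. Now use axiom (2): since $ut_u\in T_i$ and $vt_v\in T_i$, multiply the first inequality by $vt_v$ and the second by $ut_u$, obtaining, after setting $m := uv\,t_ut_v\in T_i$ and $\delta := t_ut_v^{-1}\in \LaurentMonoid$ (with $\delta\neq 1$ because $t_u\neq t_v$),
\[ m\,\delta^{-1} \;=\; uv\,t_v^2 \;<\; m \qquad\text{and}\qquad m\,\delta \;=\; uv\,t_u^2 \;<\; m. \]

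The finishing step exploits condition~(1) of Definition~\ref{def:conic_decomposition}: $T_i$ generates the whole group $\LaurentMonoid$, and since $\LaurentMonoid$ is commutative every element of it — in particular $\delta$ — can be written $\delta = s\,t^{-1}$ with $s,t\in T_i$. Multiplying $m\delta < m$ by $t\in T_i$ (legitimate, as $m\in T_i$) yields $ms < mt$, while multiplying $m\delta^{-1} < m$ by $s\in T_i$ yields $mt < ms$ — a contradiction. Therefore $t_u = t_v$.

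\textbf{Expected main obstacle.} There is no deep difficulty; the only thing to be careful about is bookkeeping of which products are "allowed", i.e.\ making sure that each invocation of axiom (2) has its larger term and its multiplier genuinely inside $T_i$. The two places this matters are the initial multiplications by $vt_v$ and $ut_u$ (needing $ut_u,vt_v\in T_i$, which is exactly the hypothesis $u,v\in T_i(f)$) and the final multiplications by $t$ and $s$ (needing $T_i$ to generate $\LaurentMonoid$, so that $\delta$ factors through $T_i$). If the two leading monomials had landed in different cones, this organization would break down — and indeed the statement would be false.
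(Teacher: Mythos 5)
Your proof is correct. The paper itself gives no proof of this lemma --- it is quoted directly from \cite[Lemma 2.3]{PU:1999} --- so there is nothing in the text to compare against, but your argument is a valid and carefully bookkept verification: the two initial applications of axiom (2) of Definition~\ref{def:gmo} are legitimate precisely because $ut_u$ and $vt_v$ both lie in $T_i$ (which is exactly the hypothesis $u,v\in T_i(f)$), and the final step correctly uses the fact that $T_i$ generates the abelian group $\LaurentMonoid$, so that $\delta=st^{-1}$ with $s,t\in T_i$ and each of the two closing multiplications again has its larger term and its multiplier inside $T_i$. This is essentially the standard argument from the cited reference, and you have identified accurately where the single-cone hypothesis is indispensable.
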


\begin{definition}
	\label{def:lmi}
	Let $i \in  I$, $f \in  \LaurentRing$ and $t \in T_i(f)$. 
	We define $\lm_i(f) := \lm(tf)t^{-1}$. 
	This is well defined by Lemma \ref{lemma:lmi_independent} (i.e it does not depend on a particular $t \in T_i(f)$).
	We also define $\lc_i(f)$ as the coefficient of $\lm_i(f)$ in $f$, and $\lt_i(f) = \lc_i(f)\lm_i(f)$.
\end{definition}

\begin{remark}
   Before giving some examples, we explain how to compute $\lm_i(f)$. 
By Lemma \ref{lemma:lmi_independent}, we have $\lm_i(f) = \lm(tf)t^{-1}$ whenever $t \in T_i(f)$. 
So we just need to find a $t \in T_i(f)$, which can be done as follows.
Recall that $T_i$ is finitely generated and generates $T$ as a group. 
This implies that for each monomial $s$ of $f$, we can find $u_s, v_s \in T_i$ such that $s = u_sv_s^{-1}$. 
Define $t$ as the product of the monomials $v_s$ for $s$ a monomial of $f$.
Then $\textnormal{supp}(tf) \subset  T_i$, and $\lm(tf) \in T_i$. 
The latter means that $t \in T_i(f)$ and we are done. \label{rem:how_to_compute_the_Tis}
\end{remark}


\begin{example}[Example \ref{ex:standard_gmo_1} continued.]
\label{ex:standard_gmo_1_2}
 We have $\lm_0(f) = \lm_1(f) = y^2$, $\lm_2(f) = \lm(f) = xy^{-2}$, $T_0(f) = y^2T_0$, $T_1(f) = y^2T_1$ and $T_2(f) = yT_2$.
\end{example}

\begin{example}[Example \ref{ex:standard_gmo_2} continued.]
\label{ex:standard_gmo_2_2}
 We have $\lm_0(f) = \lm_2(f) = \lm(f) = xy^{-2}$, $\lm_1(f) = x^{-2}y^{-2}$, $T_0(f) = x^2y^2T_0$, $T_1(f) = xy^2T_1$ and $T_2(f) = x^2y^2T_2$.
\end{example}

\begin{example}[Example \ref{ex:second_gmo} continued.]
\label{ex:second_gmo_2}
We index the 4 cones as in b) of Figure \ref{schema:conic_decompositions}.
We have $\lm_0(f) = \lm_1(f)= y^2$, $\lm_2(f) = \lm(f) = x^{-2}y^{-2}$ and $\lm_3(f)  = xy^{-2}$.
\end{example}

\begin{remark}
	We have $\lm(f)$ = $\lm_i(f)$ for at least one index $i \in I$. 
	For a fixed $f$, the $\lm_i(f)$'s are not necessarily distincts, and $\lm_i(f)$ is generally not an element of $T_i$.
\end{remark}

The following lemma states that a g.m.o shares with monomial orders the crucial property of being a well-order.
\begin{lemma}[{\cite[Lemma 2.2]{PU:1999}}]
\label{lemma:stric_descending_seq}
Let $<$ be a g.m.o on $\LaurentMonoid$ (for a conic decomposition $(T_i)_{i \in I}$ of $\LaurentMonoid$). 
Then every strictly descending sequence in $\LaurentMonoid$ is finite. 
In particular, any subset of $\LaurentMonoid$ contains a smallest element.
\end{lemma}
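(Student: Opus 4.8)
The plan is to deduce the well-ordering property from the fact that each restriction $<|_{T_i}$ is an honest monomial order on the finitely generated, cancellative, torsion-free commutative monoid $T_i$ in which $1$ is the only unit, together with the finiteness of the index set $I$. First I would recall that on each $T_i$ the restricted order is a monomial order (the Remark after Definition~\ref{def:gmo}), and that a monomial order on such a monoid is a well-order: this is the classical Dickson-lemma argument, since $T_i$ is finitely generated one embeds it in $\N^m$ compatibly with the ordering, or one invokes directly that $T_i \subset T$ with $1 \le t$ for all $t \in T_i$ forces every nonempty subset of $T_i$ to have a least element. Thus each $<|_{T_i}$ admits no infinite strictly descending chain.

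Next I would take an arbitrary strictly descending sequence $(s_k)_{k \in \N}$ in $\LaurentMonoid$ and use condition~(2) of Definition~\ref{def:conic_decomposition}: since $T = \bigcup_{i \in I} T_i$ and $I$ is finite, by the pigeonhole principle there is an index $i_0 \in I$ with $s_k \in T_{i_0}$ for infinitely many $k$. Restricting to that infinite subsequence yields a strictly descending sequence inside $T_{i_0}$, contradicting the previous paragraph. Hence every strictly descending sequence in $\LaurentMonoid$ is finite. The "in particular" clause then follows by the standard argument: if a nonempty subset $S \subseteq \LaurentMonoid$ had no smallest element, one could recursively extract an infinite strictly descending sequence in $S$, which is impossible.

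The main obstacle is the first step, namely justifying that a monomial order on each $T_i$ is a well-order — everything else is pigeonhole plus a routine extraction. I expect to handle it by using that $T_i$ is finitely generated with $1$ its only unit and $T_i$ cancellative: pick generators $g_1,\dots,g_m$ of $T_i$, so every element is $g_1^{a_1}\cdots g_m^{a_m}$ with $a_j \in \N$, and by Dickson's lemma the exponent vectors of any subset have finitely many minimal elements; combined with $1 \le g_j$ for all $j$ (so that coordinatewise-smaller exponent vectors give order-smaller monomials), this forces a least element. One should be slightly careful that the representation of an element of $T_i$ by an exponent vector in $\N^m$ need not be unique, but this causes no problem since we only need the existence of \emph{some} minimal exponent vector among those representing elements of a given subset, and the corresponding monomial is then $\le$ all others in that subset. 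Alternatively, since the paper cites \cite[Lemma 2.2]{PU:1999} for this statement, one may simply cite it; but the self-contained argument above is short enough to include.
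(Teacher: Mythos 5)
Your argument is correct: the paper itself gives no proof of this lemma (it only cites \cite[Lemma 2.2]{PU:1999}), and your Dickson-plus-pigeonhole argument — each nonempty subset of a finitely generated $T_i$ has a least element because exponent vectors in $\N^m$ have finitely many coordinatewise-minimal elements and $1\le t$ together with condition (2) of Definition~\ref{def:gmo} makes coordinatewise domination imply domination for $<$, and then finiteness of $I$ lets you extract from any infinite strictly descending sequence an infinite strictly descending subsequence inside a single $T_i$ — is exactly the standard proof of the cited result. The only minor point is that the final clause should of course read ``any \emph{nonempty} subset,'' which your extraction argument already implicitly assumes.
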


\section{Multivariate division in $K\{\X;r\} $}
\label{sec:MultiVarDivision}
In this section, we demonstrate the adaptability of the division algorithm, originally presented in the Tate algebras setting in \cite[Proposition 3.1]{CVV:2019}, to the polytopal setting in the case $ P = \{r\}  $ is a single point. The general case will be studied in Section \ref{sec:gb_valP}.

The novelty lies in the method employed to eliminate the leading term of an element $f \in \LaurentPolytopalRingForR$ by $g \in  \LaurentPolytopalRingForR$.
We can not solely rely on the leading terms of $f$ and $g$ (see Example \ref{ex:division_failure}), but have to consider also the terms $\lt_i(g)$ and the sets $T_i(g)$ for $i \in I$.

\subsection{The monoid of terms $ \MonoidOfTermsForR $}
\label{subsec:monoid_of_terms}

\begin{definition}
	\label{def:order}
Let $\le_\omega$ be a g.m.o on $\LaurentMonoid$. 
We define the monoid of terms $\MonoidOfTermsWithGMO$ (or simply $\MonoidOfTerms$) as the multiplicative monoid $\{a\mathbf{X}^u:\ a \in K^\times, u \in \Z^n\}$.
We define a preorder $\le_{r}$ on $\MonoidOfTerms$ by
\begin{align*}
	a\X^u \le_r b\X^v \iff &(\valr{r}(a\X^u) > \valr{r}(b\X^v)) \textnormal{ or }\\ &(\valr{r}(a\X^u) = \valr{r}(b\X^v) \textnormal{ and } \X^u \le_{\omega} \X^v) 
\end{align*}
 \end{definition}

\begin{remark}
The preorder $\le_r$ is not antisymmetric (and so not an order).
For terms $t_1,t_2$, the fact that $t_1 \le_r t_2$ and $t_2 \le_r t_1$ is equivalent to the existence of $a \in (\Kz)^\times$ such that $t_1 = at_2$.
\end{remark}
\begin{remark}
The preorder $\le_r$ is compatible with multiplication by elements of $K^\times$ but is not compatible with multiplication
by monomials (because $\le_\omega$ itself is not).
It is also not a well-order (there exists infinite strictly decreasing sequences). 
\end{remark}

However, Lemma \ref{lemma:topological_order} shows that the preorder $\le_r$ is a \textit{topological} well-order as in \cite[Lemma 2.14]{CVV:2019}.

\begin{lemma}
	\label{lemma:topological_order}
	Let $(t_j)_{j \in \N}$ be a strictly decreasing sequence for $ \le_r $ in $\MonoidOfTerms$. 
	Then $\lim_{j \to \infty}\valr{r}(t_j) = +\infty$, or equivalently $\sum_{j \in \N}t_j \in\LaurentPolytopalRingForR\}$ .
\end{lemma}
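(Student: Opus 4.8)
\emph{Proof plan.} Write $t_j = a_j\X^{u_j}$ with $a_j \in K^\times$ and $u_j \in \Z^n$. The starting observation is that the sequence $\bigl(\valP(t_j)\bigr)_{j\in\N}$ is non-decreasing: by definition of the preorder on $\MonoidOfTerms$, the strict inequality $t_{j+1} < t_j$ forces either $\valP(t_{j+1}) > \valP(t_j)$, or else $\valP(t_{j+1}) = \valP(t_j)$ together with $\X^{u_{j+1}} <_\omega \X^{u_j}$; in both cases $\valP(t_{j+1}) \ge \valP(t_j)$.

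Next I would record that $\valP$ takes values in a discrete subset of $\R$ on $\MonoidOfTerms$. From the formula $\valP = \min_{r \in \textnormal{vert}(P)}\valr{r}$ and $\valr{r}(a\X^u) = \val(a) - r\cdot u$ one gets $\valP(a\X^u) = \val(a) - \max_{r \in \textnormal{vert}(P)}(r\cdot u)$. Since the vertices of $P$ have rational coordinates, choose a positive integer $N$ clearing all their denominators; then $\max_{r\in\textnormal{vert}(P)}(r\cdot u) \in \tfrac1N\Z$ for every $u \in \Z^n$, while $\val(a) \in \val(K^\times)$, a discrete subgroup of $\R$ because $\val$ is discrete. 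Hence every $\valP(t_j)$ lies in the fixed discrete set $\val(K^\times) + \tfrac1N\Z$.

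A non-decreasing sequence with values in a discrete subset of $\R$ either tends to $+\infty$ or is eventually constant. Suppose for contradiction the latter, say $\valP(t_j) = c$ for all $j \ge j_0$. Then for $j \ge j_0$ the strict inequality $t_{j+1} < t_j$ can only come from the second clause of the preorder, so $\X^{u_{j+1}} <_\omega \X^{u_j}$; thus $(\X^{u_j})_{j \ge j_0}$ is an infinite strictly decreasing sequence in $(\LaurentMonoid, <_\omega)$, contradicting Lemma~\ref{lemma:stric_descending_seq}. Therefore $\valP(t_j) \to +\infty$.

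For the equivalence, once $\valP(t_j) \to +\infty$ the partial sums $S_M = \sum_{j=0}^{M} t_j$ satisfy $\valP(S_M - S_{M'}) \ge \min_{M' < j \le M}\valP(t_j) \to +\infty$ as $M' \to \infty$, so $(S_M)_M$ is Cauchy for $\valP$; since $\LaurentPolytopalRing$ is complete for $\valP$ (by \cite{Rabinoff:2012}), the series converges and its sum is $\sum_{j\in\N}t_j \in \LaurentPolytopalRing$ (alternatively, one can regroup the $t_j$ by monomial and check the convergence condition at each vertex of $P$ directly, using only completeness of $K$). Conversely, if $\sum_{j}t_j \in \LaurentPolytopalRing$ then its general term tends to $0$ for $\valP$, i.e. $\valP(t_j)\to+\infty$. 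The substantive work is in the first three steps; the step I expect to require the most care is the discreteness observation, namely pinning down that $\max_{r\in\textnormal{vert}(P)}(r\cdot u)$ ranges in a single lattice $\tfrac1N\Z$, since this is exactly what reduces the problem to the well-ordering property of the g.m.o.
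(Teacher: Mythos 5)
Your proof is correct and follows essentially the same route as the paper's: monotonicity of $\bigl(\valP(t_j)\bigr)_j$ from the definition of the preorder, discreteness of the value set (the paper likewise absorbs the value group of $\val$ and the denominators of the vertices of $P$ into a single lattice $\tfrac1D\Z$), and the well-ordering of $\le_\omega$ to rule out the sequence stalling at a fixed valuation. The only difference is that you spell out the convergence equivalence, which the paper leaves implicit.
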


\begin{proof}
	By definition of the preorder $\le_r$, the sequence $(\valr{r}(t_j))_{j \in \N}$ 
is nondecreasing and takes its values in a discret subset $\frac{1}{D}\Z$ of $\R$ 
for some integer $D$ (taking into account the image group of $\val$ and 
the product of the denominators of $ r $). 
Since $\le_\omega$ is a well-order, there can be for a fixed 
$v \in \frac{1}{D}\Z$ only a finite number of indices $j$ for which $\valr{r}(t_j) = v$. 
Combining these two facts, $\valr{r}(t_j)$ must tend to $+\infty$.
\end{proof}

\begin{remark}
	\label{remark:leading_term}
	If $i \neq j \in \Z^n$, the terms $a_i\X^i$ and $b_j\X^j$  are never "equal"
	(that is $a_i\X^i \le_r b_j\X^j$ and $b_j\X^j \le_r a_i\X^i$) for $\le_r$. 
	Therefore, any nonzero $f \in \LaurentPolytopalRingForR$ has a unique leading term for $\le_r$.
\end{remark}

\begin{definition}
Let $f=\sum_{\alpha \in \Z^n} c_\alpha \X^\alpha \in \LaurentPolytopalRingForR$.
We define
\begin{equation}
	\textnormal{in}_r(f) := \sum_{\alpha \in \Z^n \textrm{ s.t. } \valr{r}(c_\alpha \X^\alpha)= \valr{r}(f)}  c_\alpha \X^\alpha  \in \LaurentRing.
\end{equation}

Using the map $\textnormal{in}_r: \LaurentPolytopalRingForR \to \LaurentRing$,
we extend to $f \in \LaurentPolytopalRingForR$ the definitions of 
$\lm(f)$, $\lc(f)$, $\lt(f)$, $\lm_i(f)$, $\lc_i(f)$, $\lt_i(f)$ and $T_i(f)$. 
\end{definition}

The leading term for the term order on $\MonoidOfTerms$ (as in Remark \ref{remark:leading_term})
equals $\lt(f) = \lt(\textnormal{in}_r(f))$, and same for $\lt_i(f), \, T_i(f)$ and so on.

\subsection{Multivariate division algorithm}
\label{subsec:division}

In Example \ref{ex:division_failure}, we illustrate how the classical procedure to cancel
the leading term of a polynomial can fail in the Laurent polynomial setting equipped with a g.m.o.
\begin{example}
	\label{ex:division_failure}
	Take the g.m.o of Example \ref{ex:standard_gmo_1} in the case $n=2$ and $ f = x + y$, $g = x^{-1}y + y^{-1} \in K[x^{\pm 1},y^{\pm1}]$.
	We have $\lt(f) = x$ and $\lt(g) = x^{-1}y$. 
	To cancel out $\lt(f)$ in $f$ the classical way, we substract $\frac{\lt(f)}{\lt(g)}g$ from $f$ and obtain $y - x^2y^{-2}$.
	But then we have $\lm(y - x^2y^{-2} ) = x^2y^{-2} > x = \lm(f)$, that is the leading monomial after cancellation is strictly superior to the leading monomial of $f$!
\end{example}

The issue in Example \ref{ex:division_failure} arises from assuming that $\lt(tg) = t\lt(g)$.
For a g.m.o, this equality does not necessarily hold (e.g Example \ref{ex:problem_leading_term}),
and it becomes possible for the leading term of $tg$ to exceed the leading term of $f$. 
While the leading term of $f$ is successfully canceled out, a larger term originating from $tg$ emerges after the cancellation.
To properly eliminate the leading term of $f$ by a multiple of $g$, we need instead to identify a term $t$ such that $\lt(tg) = \lt(f)$. 
According to Lemma \ref{lemma:lmi_independent}, depending on the specific cone $T_i$ in
which $\lm(f)$ is contained, this equality can be expressed again as $\lt(f) = \lt(tg) = t\lt_i(g)$.
This shows that the potential candidates for $t$ are the terms $\frac{\lt(f)}{\lt_i(g)}$
for varying indices $i$ for which $\lm(f) \in T_i$.
Summarizing the process:
\begin{enumerate}[(1)]
	\item Find an index $i$ such that $\lm(f) \in T_i$ (there exists at least one).
	\item Check if $\lm \left(\frac{\lm(f)}{\lm_i(g)}g\right) = \lm(f)$. 
		If this condition is satisfied, the leading term of $f$ can be effectively canceled out by subtracting $\frac{\lt(f)}{\lt_i(g)}g$ from $f$, and $\frac{\lm(f)}{\lm_i(g)} \in  T_i(g)$; otherwise, it cannot.
\end{enumerate}

\begin{remark}
	The monomial $\lm(f)$ can be contained in more than one cone, but it suffices to test the condition for cancellation within any one of those cones.
\end{remark}

The preceding discussion is applied in Proposition \ref{prop:multi_div} and Algorithm \ref{alg:multi_div}
to formulate a multivariate division algorithm in the ring $\LaurentPolytopalRingForR$ using $ \le_r $ to order terms.
Same as \cite[Algo 1]{CVV:2019}, Algorithm \ref{alg:multi_div} needs a countable amount of steps to terminate,
but only a finite amount of steps to reach  a given finite precision in $\valr{r}.$
\begin{proposition}
	\label{prop:multi_div}
	Let $f \in \LaurentPolytopalRingForR$ and $G$ be a finite subset of $\LaurentPolytopalRingForR$. 
	Algorithm \ref{alg:multi_div} produces a family $(q_g)_{g \in  G}$ and $s$ in $\LaurentPolytopalRingForR$ such that:
	\begin{enumerate}[(1)]
		\item $f = \sum_{g \in  G}q_gg + s$
		\item for all monomial $t$ in $s$, $t \notin \bigcup_{i \in I, g \in G}T_i(g)\lm_i(g)$
		\item for all $g \in  G$ and all monomial $t$ in $q_g$, $\lt(tg) \le_r \lt(f)$.
	\end{enumerate}
\end{proposition}

\begin{proof}
We construct by induction sequences $(f_j)_{j\ge 0}$, $(q_{g,j})_{j \ge 0}$ for $g \in G$ and 
$(s_j)_{j\ge 0}$ such that for all $j \ge 0$:
\[ f = f_j + \sum_{g \in G}q_{g,j}g + s_j,\]
and $\lt(f_j)_{j\ge 0}$ is strictly decreasing for $ \le_r$.
We first set $f_0 = f$, $s_0 = 0$ and $q_{g,0} = 0$ for all $g \in G$.
If there exists $i \in I$ and $g \in G$ such that
\[\lm \left( \frac{\lm(f_j)}{\lm_i(g)}g \right) = \lm(f_j) \in T_i, \]
we set $f_{j+1} = f_j - tg$ and $q_{g,j+1} = q_{g,j} + tg$ where $t = \frac{\lt(f_j)}{\lt_i(g)}$,
and leave unchanged $s_j$ and the other $q_{g,j}$'s. 
Otherwise, we set $f_{j+1} = f_j - lt(f_j)$ and $s_{j+1} = s_j + lt(f_j)$ and leave unchanged the $q_{g,j}$'s. 
By construction, the sequence $(\lt(f_j))_{j \ge 0}$ is strictly decreasing for $ \le_r $.
By Lemma \ref{lemma:topological_order}, we deduce that 
$\valr{r}(s_{j+1}-s_j)$ and the $\valr{r}(q_{g,j+1}-q_{g,j})$'s tend to $+\infty$ when $j \to +\infty$.
Thus $s_j$ and the $q_{g,j}$'s converge in $K\{\mathbf{X};\{r\} \}$.
Their limits 
satisfy the requirements of the proposition.
\end{proof}

\begin{definition}
    The output $s$ obtained from Algo \ref{alg:multi_div} with entries $f$ and $G=(g_1,\dots,g_m)$ is denoted $\rem(f,G).$
\end{definition}

\begin{algorithm}
	\SetKwInOut{Input}{input}\SetKwInOut{Output}{output}
	\caption{Multivariate division algorithm in $\LaurentPolytopalRingForR$ for $ \le_r $}
	\label{alg:multi_div}
	\Input{$f,g_1,\dots,g_m \in \LaurentPolytopalRingForR$}
	\Output{$q_1,\dots,q_m,s \in \LaurentPolytopalRingForR$ satisfying Prop \ref{prop:multi_div}}
	\BlankLine
	$q_1,\dots,q_m,s\gets 0$\;
	\While{ $f \neq 0$}{
		\While{ $\exists (i,j) \in I \times \llbracket 1,m \rrbracket$ such that $\lm\left(\frac{\lm(f)}{\lm_i(g_j)}g_j\right) = \lm(f)$}{
			$t \gets \frac{\lt(f)}{\lt_i(g_j)}$\;
			$q_j \gets q_j + t $\;
			$f \gets f - tg_j $\;
		}
		$s \gets s + \lt(f)$

		$f \gets f - \lt(f)$;
	}
	\Return $q_1,\dots,q_m,s$
\end{algorithm}

\section{Gröbner bases for ideals in $ \LaurentPolytopalRingForR $}
\label{sec:GBpolytopal}

Let $\le$ be a g.m.o for a conic decomposition $(T_i)_{i \in I}$. 
For an ideal $J$ in $\LaurentPolytopalRingForR$, define $\lm(J) := \left\{ \lm(f),\ f \in J,\ f \neq 0 \right\}$.
Notice that 

\begin{equation}
	\label{eq:lmJ}
\lm(J) = \bigcup_{f \in  J, i \in I}T_i(f)\lm_i(f).
\end{equation}

A finite subset $G$ of $J$ is then a Gröbner basis when equation \eqref{eq:lmJ} still holds while considering in the set union only elements from $G$ instead of all members of $J.$ 

\subsection{Gröbner bases}

\begin{definition}
	Let $J$ be an ideal in $\LaurentPolytopalRingForR$ and $G$ be a finite subset of $J\setminus\{0\}$. 
	We say that $G$ is a Gröbner basis of $J$ (with respect to the g.m.o $\le$ and the conic decomposition $(T_i)_{i \in I}$) when:
	\[ \lm(J) = \bigcup_{g \in G,i \in I}T_i(g)\lm_i(g).\]
\end{definition}

In Proposition \ref{prop:has_gb}, we prove that every ideal $J$ within $\LaurentPolytopalRingForR$ contains a Gröbner basis. 
The demonstration relies on the fact that the set $T_i(f)$ is a finitely generated $T_i$-module, as proved in the following lemma:

\begin{lemma}
\label{lemma:fin_gen}
For all $i \in I$, there is a finite subset $E_i \subseteq \lm(J)\cap T_i$ such that $\lm(J)\cap T_i = T_i\cdot E_i$.
In case $J = \langle f \rangle$ is principal, we have $\lm(\langle f \rangle)\cap T_i = T_i(f)\lm_i(f) = T_i\cdot E_i$, and so $T_i(f) = T_i\cdot  \Bigl\{\frac{t}{\lm_i(f)},\ t \in E_i\Bigr\}$ is finitely generated over $T_i.$
\end{lemma}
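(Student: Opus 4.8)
The plan is to first reduce everything to a statement about finitely generated modules over the Noetherian-like monoids $T_i$. The key algebraic fact I want to invoke is Dickson's lemma: since $T_i$ is a finitely generated submonoid of $T$ containing no nontrivial unit and generating $T$ as a group, it is isomorphic (as a monoid) to a finitely generated submonoid of $\Z^n$ sitting inside a rational polyhedral cone, and any $T_i$-stable subset $S \subseteq T_i$ (i.e. $T_i \cdot S \subseteq S$) has finitely many minimal elements with respect to divisibility in $T_i$; these minimal elements generate $S$ as a $T_i$-module. So the real content is to check that $\lm(J) \cap T_i$ is a $T_i$-stable subset of $T_i$.

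The stability claim is where I expect the argument to need care, because of the phenomenon highlighted in Example \ref{ex:problem_leading_term}: $\lm$ is \emph{not} multiplicative. So I cannot argue naively that if $\X^a \in \lm(J)$ then $\X^{a}\cdot s \in \lm(J)$ for $s \in T_i$. Instead I would use the description \eqref{eq:lmJ} together with Lemma \ref{lemma:lmi_independent}. Concretely: suppose $\X^a \in \lm(J) \cap T_i$. By \eqref{eq:lmJ} there is $f \in J$ and an index $k$ with $\X^a \in T_k(f)\lm_k(f)$, say $\X^a = u\,\lm_k(f)$ with $u \in T_k(f)$, so $\lm(uf) = u\,\lm_k(f) = \X^a \in T_i$; hence $u \in T_i(f)$ as well, and $\lm_i(f) = \X^a u^{-1}$. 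Now for any $s \in T_i$, the monomial $su \in T_i$, and I claim $su \in T_i(f)$: indeed, writing $\lm(uf) = u t_u \in T_i$ with $t_u$ a monomial of $f$, compatibility condition (2) of Definition \ref{def:gmo} applied with $r = s$ (and using that $u t_u$ is the max over the finitely many monomials $u t$ of $uf$, all of which lie... ) — this is the delicate point — shows $\lm(s u f) = s u t_u = s \X^a \in T_i$, so $su \in T_i(f)$ and $s\X^a = (su)\lm_i(f) \in T_i(f)\lm_i(f) \subseteq \lm(J)$. Thus $\lm(J)\cap T_i$ is $T_i$-stable, and Dickson's lemma yields the finite $E_i$ with $\lm(J)\cap T_i = T_i \cdot E_i$.

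For the principal case $J = (f)$, I would first record that $\lm((f)) \cap T_i = T_i(f)\lm_i(f)$: the inclusion $\supseteq$ is \eqref{eq:lmJ} restricted to the single generator, and for $\subseteq$ one uses that every nonzero element of $(f)$ is of the form $hf$ for $h \in \LaurentPolytopalRing$, together with an analysis (as in the discussion preceding Algorithm \ref{alg:multi_div}) showing $\lm(hf)$ is $\lm(tf) = t\lm_k(f)$ for $t = \lm(h)$ in the appropriate cone $T_k$, hence lands in $\bigcup_k T_k(f)\lm_k(f)$; intersecting with $T_i$ and invoking that distinct $\lm_k(f)$ contributions in $T_i$ still generate the same $T_i$-module reduces it to $T_i(f)\lm_i(f)$. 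Combining with the first part gives $T_i(f)\lm_i(f) = T_i \cdot E_i$, and since $\lm_i(f)$ is a fixed monomial we may divide through: $T_i(f) = T_i \cdot \{\, t/\lm_i(f) : t \in E_i \,\}$, which is the stated finite generation over $T_i$.

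The main obstacle, as flagged, is making the non-multiplicativity of $\lm$ harmless in the stability step — i.e. verifying cleanly that $u \in T_i(f) \implies su \in T_i(f)$ for $s \in T_i$. I would handle this by working with the finite support of $f$ explicitly: $\lm(uf) = \max_{t \in \mathrm{supp}(f)} \lt(ut)$, and once $ut_u$ (the arg-max) lies in $T_i$, condition \eqref{item:compatibilite_multiplication_dans_Ti} of Definition \ref{def:gmo} lets me multiply the inequalities $ut < ut_u$ (those with $ut \in T_i$) by $s$, while the remaining monomials of $uf$ need a short separate argument that multiplying by $s$ cannot push them above $s u t_u$ — this likely follows from \eqref{eq:lmJ}/Lemma \ref{lemma:lmi_independent} applied to $uf$ itself rather than $f$. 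Everything else is routine once Dickson's lemma is in hand.
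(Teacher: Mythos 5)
Your overall strategy coincides with the paper's: apply Dickson's lemma to the $T_i$-stable set $\lm(J)\cap T_i$ (the paper phrases this as extracting finite generators of the ideal $\langle\lm(J)\cap T_i\rangle$ of $K[T_i]$), and then deduce the principal case from the identification $\lm((f))\cap T_i=T_i(f)\lm_i(f)$ together with Definition \ref{def:lmi}. The step you flag as delicate and leave unresolved --- that $u\in T_i(f)$ and $s\in T_i$ force $su\in T_i(f)$ --- is in fact immediate, and your hesitation comes from misreading the quantifiers in condition (\ref{item:compatibilite_multiplication_dans_Ti}) of Definition \ref{def:gmo}: there $r$ ranges over \emph{all} of $\LaurentMonoid$, and only the two elements $s,t$ being used as multipliers must lie in $T_i$. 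Concretely, set $g=uf\in J$, so $\lm(g)=\X^a\in T_i$; for every monomial $m$ of $g$ with $m<\X^a$, condition (\ref{item:compatibilite_multiplication_dans_Ti}) applied with $r=m$ and the pair $\X^a,s\in T_i$ yields $ms<\X^a s$, with no case distinction on whether $m$ lies in $T_i$. Hence $\lm(sg)=s\X^a\in T_i$, giving both $su\in T_i(f)$ and $s\X^a\in\lm(J)\cap T_i$. This is exactly the computation carried out in the proof of Proposition \ref{prop:has_gb} (and packaged as Lemma \ref{lemma:inequality}); it also makes your detour through \eqref{eq:lmJ} and the auxiliary index $k$ unnecessary --- just pick any $g\in J$ with $\lm(g)=\X^a$. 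One last remark: for the principal case, the inclusion $\lm((f))\cap T_i\subseteq T_i(f)\lm_i(f)$ is the genuinely nontrivial one (cancellation among the top terms of the various $tf$, for $t$ a monomial of $h$, can a priori make $\lm(hf)$ smaller than every $\lm(tf)$); your sketch of it is vague, but the paper asserts it without proof as well, so this is not a defect specific to your write-up.
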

\begin{proof}
	By Dickson's lemma, there exists a finite subset $E_i$ of $\lm(J)\cap T_i$ which generates the ideal $\langle \lm(J)\cap T_i \rangle_{K[T_i]}$ of $K[T_i]$.
	Then $\lm(J)\cap T_i =  T_i\cdot E_i$. 
	If $(J) = \langle f \rangle$ is principal, then $\lm(\langle f \rangle)\cap T_i = T_i(f)\lm_i(f)$ and we can conclude with Def. \ref{def:lmi}.
\end{proof}

\begin{proposition}
\label{prop:has_gb}
Let $J$ be an ideal of $\LaurentPolytopalRingForR$. Then $J$ contains a Gröbner basis.
\end{proposition}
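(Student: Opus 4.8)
The plan is to build a Gröbner basis incrementally, using the finite generation result from Lemma \ref{lemma:fin_gen} as the engine. The key observation is that equation \eqref{eq:lmJ} expresses $\lm(J)$ as a union, over all $i \in I$, of sets $\lm(J) \cap T_i$ (after intersecting with each cone), and each of these is, by Lemma \ref{lemma:fin_gen}, finitely generated as a $T_i$-module by a finite set $E_i \subset \lm(J) \cap T_i$. Since $I$ is finite (a conic decomposition is a finite family) and each $E_i$ is finite, the total collection $\bigcup_{i \in I} E_i$ is a finite subset of $\lm(J)$.

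**First I would** fix, for each $i \in I$, the finite set $E_i \subset \lm(J) \cap T_i$ provided by Lemma \ref{lemma:fin_gen}, so that $\lm(J) \cap T_i = T_i \cdot E_i$. **Then** for each monomial $t \in E_i$, by definition of $\lm(J)$ and of the decomposition in \eqref{eq:lmJ}, there exists a nonzero element of $J$ that witnesses $t$ as a leading-type monomial; concretely, since $t \in \lm(J) \cap T_i \subseteq \lm(J) = \bigcup_{f \in J, j \in I} T_j(f)\lm_j(f)$, there is some $f_{i,t} \in J \setminus \{0\}$ and some index $j$ with $t \in T_j(f_{i,t})\lm_j(f_{i,t})$. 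I collect all these finitely many $f_{i,t}$ into a finite set $G \subseteq J \setminus \{0\}$.

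**Next I would** verify that $G$ is a Gröbner basis, i.e. that $\lm(J) = \bigcup_{g \in G, i \in I} T_i(g)\lm_i(g)$. The inclusion $\supseteq$ is immediate since each $T_i(g)\lm_i(g) \subseteq \lm(J)$ by \eqref{eq:lmJ} applied to $g \in J$. For $\subseteq$, take any $m \in \lm(J)$; it lies in some cone $T_i$, so $m \in \lm(J) \cap T_i = T_i \cdot E_i$, meaning $m = s \cdot t$ for some $s \in T_i$ and $t \in E_i$. I then need $m \in \bigcup_{g \in G, j \in I} T_j(g)\lm_j(g)$. Here is the delicate point: I chose $g = f_{i,t} \in G$ with $t \in T_{j_0}(g)\lm_{j_0}(g)$ for some $j_0$, but I must argue that multiplying by $s \in T_i$ keeps $m = st$ inside the relevant set — and $T_i$ is the cone containing $m$, which need not be $T_{j_0}$. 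The cleanest route is to use the $T_i$-module structure directly: since $t \in \lm(J) \cap T_i$ and this set equals $T_i(g')\lm_{i}(g')$ for a principal-ideal witness — actually, the right move is to note that $\lm(J) \cap T_i$ is a union over $g \in G$ of the sets $T_i(g)\lm_i(g)$ restricted to $T_i$, each of which is $T_i$-stable in the sense that $T_i \cdot (T_i(g)\lm_i(g) \cap T_i) \subseteq T_i(g)\lm_i(g)$, because $T_i(g)$ is a $T_i$-module (Lemma \ref{lemma:fin_gen}, principal case, or directly: if $u \in T_i(g)$ then $su \in T_i(g)$ since $\lm(sug) = s\lm(ug)$ when $s \in T_i$ and $\lm(ug) \in T_i$, using g.m.o compatibility). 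So I should choose $G$ so that the union $\bigcup_{g} T_i(g)\lm_i(g) \cap T_i$ already contains $E_i$ for each $i$ — achievable by picking, for each $i$ and each $t \in E_i$, a witness $g$ with $t \in T_i(g)\lm_i(g)$ (same cone $i$); such a witness exists precisely because $t \in \lm(J) \cap T_i$.

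**The main obstacle** I expect is exactly this cone-matching bookkeeping: ensuring that the generators chosen for $\lm(J) \cap T_i$ can be realized with leading data $\lm_i(g)$ in the \emph{same} cone $T_i$, and that $T_i$-module closure ($T_i \cdot T_i(g) \subseteq T_i(g)$) then propagates a single generator $t$ to its whole orbit $T_i \cdot t$. Once the correct choice of $G$ is pinned down — one can even take $G$ to realize, for every $i$, a set of $T_i(g)\lm_i(g)$'s whose union covers $E_i$ — the verification of both inclusions is routine. Finiteness of $G$ is guaranteed by finiteness of $I$ and of each $E_i$, and $G \subseteq J \setminus \{0\}$ by construction.
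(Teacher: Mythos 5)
Your proposal is correct and follows essentially the same route as the paper: extract the finite sets $E_i$ from Lemma \ref{lemma:fin_gen}, pick for each $t \in E_i$ a witness $f_t \in J$ with $\lm(f_t) = t$ (equivalently $t \in T_i(f_t)\lm_i(f_t)$ with the cone index matching), and propagate along $T_i$ via the g.m.o.\ compatibility condition. The one step you compress --- that multiplication by $s \in T_i$ preserves the strict inequality between the leading term and the other terms --- also requires checking the $\valP$-component of the term order on $\MonoidOfTerms$, not just condition (2) of the g.m.o.\ definition, and that is precisely the part the paper's proof spells out.
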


\begin{proof}
 By Lemma \ref{lemma:fin_gen}, there are finite subsets $E_i \subset T_i$ such that $\lm(J) = \bigcup_{i \in I}T_i\cdot E_i$. 
 For all $t \in \bigcup_{i \in I}E_i$ choose an element $f_t \in J$ such that $\lm(f_t) = t$. Thanks to Def \ref{def:lmi}, $\lm_i(f_t)=\lm(f_t)=t.$
 We prove that the finite set $G=\{f_t \ |\ t \in \bigcup_{i \in I}E_i\}$ is a Gröbner basis of $J$.
 Let $u \in \lm(J).$ Let $i$ be such that $u \in T_i.$ Then there is some $(v,t) \in T_i \times E_i$ such that 
 $u=vt.$
 Let $c_t = \lc(f_t)$ and $c_\alpha \X^\alpha$ be a term of $f_t.$
 Then $c_\alpha \X^\alpha <_r c_t t$. 
 We remark that on the one hand $\valr{r} (c_\alpha \X^\alpha ) > \valr{r} (c_t t)$ 
 if and only if $ \valr{r} (c_\alpha v \X^\alpha ) > \valr{r} (c_t vt)$,
 and on the other hand, if $\valr{a} (c_\alpha \X^\alpha ) = \valr{r} (c_t t)$ ,
 then by item \ref{item:compatibilite_multiplication_dans_Ti} of Def \ref{def:gmo}
 and the fact that $v$ and $t$ are in $T_i$, we can remark that
 $\X^\alpha <_\omega t$ implies $\X^\alpha v <_\omega tv.$
 In any case, $c_\alpha \X^\alpha v <_r c_t vt$.
 Thus, $\lm(v f_t)=vt=u \in T_i.$
 Consequently, $v \in T_i(f_t)$ and $u \in T_i(f_t) \lm_i(f_t).$
We can then conclude that $G$ is a GB of $J.$
\end{proof}

\begin{proposition}
\label{prop:classical_easy}
Let $G$ be a Gröbner basis for an ideal $J$ of $\LaurentPolytopalRingForR$. We have
\begin{enumerate}[(1)]
	\item   $\forall f \in \LaurentPolytopalRingForR$, $f \in J \iff \rem(f,G) = 0$
	\item  $G$ generates the ideal $J$
\end{enumerate}
\end{proposition}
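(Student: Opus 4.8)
The plan is to prove the two statements in sequence, using the multivariate division algorithm (Proposition~\ref{prop:multi_div}) and the Gröbner basis property as the main tools.

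\medskip

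\textbf{Step 1: the reverse implication of (1).} If $\rem(f,G) = 0$, then by Proposition~\ref{prop:multi_div}(1) we have $f = \sum_{g \in G} q_g g$ with each $q_g \in \LaurentPolytopalRing$ and $g \in J$, so $f \in J$ immediately. This direction is essentially free.

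\medskip

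\textbf{Step 2: the forward implication of (1).} Suppose $f \in J$ and let $r = \rem(f,G)$. By Proposition~\ref{prop:multi_div}(1), $f = \sum_{g \in G} q_g g + r$, and since $f \in J$ and each $g \in J$, we get $r = f - \sum_g q_g g \in J$. Now I argue $r = 0$ by contradiction: if $r \neq 0$, then $\lm(r) \in \lm(J)$, so by the Gröbner basis hypothesis $\lm(r) \in \bigcup_{g \in G, i \in I} T_i(g)\lm_i(g)$. But this contradicts Proposition~\ref{prop:multi_div}(2), which says no monomial of $r$ — in particular not $\lm(r)$ — lies in $\bigcup_{i \in I, g \in G} T_i(g)\lm_i(g)$. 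Hence $r = 0$. The one subtlety to be careful about: $r$ is a possibly infinite series in $\LaurentPolytopalRing$, so I should note that "$\lm(r)$" makes sense (via $\textnormal{in}_P$, as set up in the excerpt, using Remark~\ref{remark:leading_term} and the fact that a nonzero element has a well-defined leading term) and that the "monomial $t$ in $r$" in Proposition~\ref{prop:multi_div}(2) covers $\lm(r)$.

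\medskip

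\textbf{Step 3: statement (2).} This follows from (1). Given any $f \in J$, apply (1) to get $\rem(f,G) = 0$, hence $f = \sum_{g \in G} q_g g$ with $q_g \in \LaurentPolytopalRing$; since also $G \subseteq J$, the ideal generated by $G$ equals $J$. I expect the main (minor) obstacle to be purely expository: making sure the leading-monomial machinery for elements of $\LaurentPolytopalRing$ (as opposed to Laurent polynomials) is invoked cleanly, and confirming that Proposition~\ref{prop:multi_div}(2) as stated genuinely rules out $\lm(r)$ — i.e. that the leading monomial of the series $r$ is among "the monomials $t$ in $r$." Everything else is a short, formal deduction.
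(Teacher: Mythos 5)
Your proposal is correct and follows essentially the same route as the paper's (much terser) proof: the reverse implication of (1) comes from the division identity, the forward implication combines the Gröbner basis property with Proposition~\ref{prop:multi_div}(2) applied to the leading monomial of a putative nonzero remainder, and (2) is deduced from (1). Your added care about $\lm(r)$ being well defined for a series via $\textnormal{in}_P$ and being among the monomials of $r$ is a legitimate filling-in of a detail the paper leaves implicit.
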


\begin{proof}
For the first part, if $\rem(f,G) = 0$, then the multivariate division algorithm gives $f \in J$. 
	Reciprocally, if $f \in J$ and $G$ is a Gröbner basis of $J$, then by (2) of Prop \ref{prop:multi_div} the remainder of the division of $f$ by $J$ is necessarily $0$.
For the second part, if $f \in J$,then, thanks to the above first part, $\rem(f,G) = 0$, and so $f \in J$.
\end{proof} 

\subsection{S-pairs and Buchberger criterion}

In the polynomial case, the S-pair of two polynomials $f$ and $g$ is formed to eliminate the minimal multiple of their leading terms, relying on the concept of least common multiple in $\PolynomialMonoid$.

However, in the Laurent setting, where $\LaurentMonoid$ is a group and the notions of lcm and gcd of two monomials in $T$ are useless, the replacement for the lcm of $f$ and $g$ is the finite set introduced in Proposition \ref{prop:Uifg}.
Notice that this set now depends on the cone in which the cancellation occurs.

By the same argument as in Lemma \ref{lemma:fin_gen}, we have:
\begin{proposition}
	\label{prop:Uifg}
	For $i \in I$ and $f,g \in \LaurentPolytopalRingForR$,
	$\lm_i(f)T_i(f) \cap \lm_i(g)T_i(g) \subset T_i$ is a finitely generated $T_i$-module.
\end{proposition}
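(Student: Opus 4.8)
The statement claims that $\lm_i(f)T_i(f) \cap \lm_i(g)T_i(g)$ is a finitely generated $T_i$-module. The phrase "by the same argument as in Lemma \ref{lemma:fin_gen}" is the key hint: Lemma \ref{lemma:fin_gen} uses Dickson's lemma to show that any subset of $T_i$ of the form "$T_i$-module" (equivalently, a monomial ideal in the polynomial ring $K[T_i]$) is finitely generated, because $T_i$ is a finitely generated monoid with no nontrivial units, hence isomorphic to a submonoid of $\N^n$ on which Dickson's lemma applies. So the proof reduces to two things: (a) observing that $\lm_i(f)T_i(f)$ is itself a $T_i$-module contained in $T_i$, and likewise for $g$; and (b) checking that the intersection of two $T_i$-modules is again a $T_i$-module, so that Dickson/the argument of Lemma \ref{lemma:fin_gen} applies directly to the intersection.

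First I would establish (a). By Lemma \ref{lemma:fin_gen} applied to the principal ideal $(f)$, we have $\lm((f)) \cap T_i = \lm_i(f)T_i(f)$, and this equals $T_i \cdot E_i$ for a finite $E_i \subset T_i$; in particular $\lm_i(f)T_i(f) \subseteq T_i$ and it is closed under multiplication by elements of $T_i$, i.e. it is a $T_i$-submodule of $T_i$ (equivalently, the monomial set underlying a monomial ideal of $K[T_i]$). The same holds for $\lm_i(g)T_i(g)$.

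Next, (b): the intersection $M := \lm_i(f)T_i(f) \cap \lm_i(g)T_i(g)$ of two $T_i$-submodules of $T_i$ is again a $T_i$-submodule of $T_i$ — if $m \in M$ and $s \in T_i$ then $sm$ lies in both factors, hence in $M$. Viewing $T_i$ as a submonoid of $\N^n$ (possible since $T_i$ is finitely generated and $1$ is its only unit), the set $M$ corresponds to the monomials of a monomial ideal of $K[T_i]$, and Dickson's lemma (exactly as invoked in the proof of Lemma \ref{lemma:fin_gen}) yields a finite generating set $E \subset M$ with $M = T_i \cdot E$. That is the claim.

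The only mild subtlety — the "main obstacle," such as it is — is making precise that a $T_i$-submodule of $T_i$ really does satisfy the hypotheses of Dickson's lemma; this is exactly the content already used in Lemma \ref{lemma:fin_gen} and relies on condition (1) of Definition \ref{def:conic_decomposition} (no nontrivial units in $T_i$) together with finite generation, which together embed $T_i$ into $\N^n$. Since the present statement is literally an instance of that same argument applied to the intersection set $M$ rather than to $\lm(J) \cap T_i$, nothing new is needed; I would simply cite the proof of Lemma \ref{lemma:fin_gen} and note that intersections of $T_i$-modules are $T_i$-modules.
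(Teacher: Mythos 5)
Your proof is correct and follows exactly the route the paper intends: the paper gives no separate argument for Proposition~\ref{prop:Uifg} beyond the phrase ``by the same argument as in Lemma~\ref{lemma:fin_gen}'', and your proposal correctly unpacks this as (a) each factor $\lm_i(f)T_i(f)=\lm((f))\cap T_i$ is a $T_i$-submodule of $T_i$ by the principal-ideal case of that lemma, (b) the intersection of two such submodules is again one, and (c) Dickson's lemma (equivalently, Noetherianity of $K[T_i]$) yields a finite generating set. Nothing is missing.
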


\begin{definition}[S-pair]
	\label{def:Spair}
	Let $f,g \in \LaurentPolytopalRingForR$ and $i \in I$. 
	For $v \in \lm_i(f)T_i(f) \cap \lm_i(g)T_i(g)$, we define:
	\[ S(i,f,g,v) := \lc_i(g)\frac{v}{\lm_i(f)}f - \lc_i(f)\frac{v}{\lm_i(g)}g.\]
\end{definition}

\begin{lemma}
	\label{lemma:Spair}
	For $f,g \in \LaurentPolytopalRingForR$, $i \in I$ and $v \in \lm_i(f)T_i(f) \cap \lm_i(g)T_i(g)$, we have
    $\lt(S(i,f,g,v)) <_r \lc_i(f)\lc_i(g)v$.
\end{lemma}
\begin{proof}
Since $v \in T_i(f)\lm_i(f)\cap T_i(g)\lm_i(g)$, there exists $m_f \in T_i(f)$ and
$m_g \in T_i(g)$ such that $v = \lm(m_ff) = \lm(m_gg) = m_f\lm_i(f) = m_g\lm_i(g)$. 
Then the leading terms of $\lc_i(g)\frac{v}{\lm_i(f)}f$ and  $\lc_i(f)\frac{v}{\lm_i(g)}g$
both equal $\lc_i(f)\lc_i(g)v$. They cancel out leaving $\lt(S(i,f,g,v)) <_r \lc_i(f)\lc_i(g)v$.
\end{proof}

Since it involves all objects introduced so far, we give a detailed proof of the following adaptation of the classical cancellation lemma (see also \textit{e.g.} \cite[Lemma 5.1]{CVV:2022}).
\begin{lemma}
	\label{lemma:sumSpair}
Let $h_1,\dots,h_m \in \LaurentPolytopalRingForR$ and $i \in I$. 
For $1\le j \le m-1$, let $U(i,h_j,h_{j+1})$ be a finite system of generators of  $\lm_i(h_j)T_i(h_j) \cap \lm_i(h_{j+1})T_i(h_{j+1})$ which exists by Proposition \ref{prop:Uifg}.
Suppose that there are $t_1,\dots, t_m \in \MonoidOfTerms$, $u \in T_i$ and $c \in \val{}(K^\times)$ such that
\begin{itemize}
	\item for all $j \in \{1,\dots,m\}$, $\lt(t_jh_j) = c_ju$ with $\textnormal{val}(c_j) = c$
	\item  $\lt(\sum_{i=1}^{m}t_jh_j) <_r c_1u$.
\end{itemize}

Then there are elements $d_j \in K$, $v_j \in U(i,h_j,h_{j+1})$ for $1 \le j \le m-1$ and $t_m^\prime \in \LaurentPolytopalRingForR$ such that:
\begin{enumerate}[(1)]
	\item $\sum_{j=1}^{m}t_jh_j = \sum_{j=1}^{m-1}d_j\frac{u}{v_j}S(i,h_j,h_{j+1},v_j) + t_{m}^\prime h_m$. \label{enum:1}
	\item $\valr{r}(t^\prime_mh_m) > \valr{r}(uc_1)$. \label{enum:2}
	\item $\frac{u}{v_j} \in T_i$ for all $j <m$. \label{enum:3}
	\item For all $j <m$, $\val \left( d_j \lc_i(h_j) \lc_i(h_{j+1}) \right) \geq c.$ \label{enum:4}
\end{enumerate}

\end{lemma}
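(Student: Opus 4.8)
The plan is to prove this by a telescoping argument, following the classical proof of the cancellation lemma in the polynomial case (as in the Tate algebra version \cite[Lemma 5.1]{CVV:2022}), but being careful to track the valuations and the cone-dependence throughout. First I would set $s_j := \lt(t_jh_j) = c_ju$, so that all the $s_j$ share the same monomial part $u$ and the same valuation $c$ on the coefficient. The key observation is that since $\lt(\sum t_jh_j) < c_1u$, the leading terms $s_1,\dots,s_m$ must cancel out in the sum; in particular $\sum_{j=1}^m c_j = 0$ in the residue field (after dividing by $\pi^c$), so partial sums $\sigma_k := \sum_{j=1}^k c_j$ satisfy $\val(\sigma_k u) \ge c$ (possibly strictly, possibly not).

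Next I would perform the telescoping. Writing $t_jh_j = (t_jh_j - s_j) + s_j$ with $\valP(t_jh_j - s_j) > \valP(c_ju) = \val(c_1 u)$, the plan is to rewrite
\[ \sum_{j=1}^m t_jh_j = \sum_{j=1}^{m-1} \sigma_j\Bigl(\frac{s_j}{c_j} - \frac{s_{j+1}}{c_{j+1}}\Bigr) + \Bigl(\text{terms of valuation} > \val(c_1u)\Bigr) + (\text{a multiple of } h_m), \]
where I have used Abel summation together with the fact that $\frac{s_j}{c_j} = u = \frac{s_{j+1}}{c_{j+1}}$, so that $\frac{s_j}{c_j} - \frac{s_{j+1}}{c_{j+1}} = 0$ — wait, that is too crude; instead I must keep the actual polynomials. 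The correct move is: since $\frac{1}{c_j}t_jh_j$ and $\frac{1}{c_{j+1}}t_{j+1}h_{j+1}$ both have leading term $u$, their difference $\frac{1}{c_j}t_jh_j - \frac{1}{c_{j+1}}t_{j+1}h_{j+1}$ has leading monomial $<_\omega u$ or smaller valuation contribution, and in fact $u \in \lm_i(h_j)T_i(h_j)\cap \lm_i(h_{j+1})T_i(h_{j+1})$ (this is where I need $u \in T_i$ and that $t_jh_j$ realizes its leading monomial in cone $T_i$, hence $t_j\lm(h_j)$-type analysis shows $\frac{u}{\lm_i(h_j)}\in T_i(h_j)$). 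Therefore $u$ is divisible by some generator $v_j \in U(i,h_j,h_{j+1})$ with $\frac{u}{v_j}\in T_i$, which gives item \eqref{enum:3}, and $\frac{1}{c_j}t_jh_j - \frac{1}{c_{j+1}}t_{j+1}h_{j+1}$ is, up to a term of strictly larger $\valP$, a $K$-multiple of $\frac{u}{v_j}S(i,h_j,h_{j+1},v_j)$ — matching the sign conventions of Definition \ref{def:Spair} by dividing out $\lc_i(h_j)\lc_i(h_{j+1})$, which forces the coefficient $d_j$ to satisfy $\val(d_j\lc_i(h_j)\lc_i(h_{j+1})) = \val(\sigma_j/(c_jc_{j+1})\cdot c_jc_{j+1}\cdot\ldots)\ge c$, giving item \eqref{enum:4}.

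Then I would assemble the pieces: Abel summation writes $\sum_j t_jh_j = \sum_{j=1}^{m-1}\sigma_j\bigl(\frac{1}{c_j}t_jh_j - \frac{1}{c_{j+1}}t_{j+1}h_{j+1}\bigr) + \sigma_m\cdot\frac{1}{c_m}t_mh_m$ where $\sigma_m = \sum_{j=1}^m c_j$ has $\val(\sigma_m u)\ge c$ with equality impossible precisely because the total leading term cancels — so $\val(\sigma_m)>\val(c_1)-\val(u)$, i.e. the last summand has $\valP > \valP(c_1u)$. Substituting the S-pair expression for each bracket and collecting all the "error" terms (the $t_jh_j - c_ju$ pieces and the higher-valuation remainders from each S-pair identity) into $t_m'h_m$ — legitimate since all these errors have $\valP > \valP(uc_1)$ and the whole expression minus the S-pair sum is, by item \eqref{enum:1}'s required form, forced to be a multiple of $h_m$; more precisely one defines $t_m' h_m$ as exactly the difference, and checks $\valP(t_m'h_m)>\valP(uc_1)$ from the valuation bounds, giving items \eqref{enum:1} and \eqref{enum:2}. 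The main obstacle I anticipate is the bookkeeping around $\lm_i$ versus $\lm$: I must justify carefully that $\lt(t_jh_j) = c_ju$ with $u\in T_i$ indeed implies $\frac{u}{\lm_i(h_j)}\in T_i(h_j)$ and hence $u\in\lm_i(h_j)T_i(h_j)$, so that the $S(i,h_j,h_{j+1},v_j)$ are even defined for the relevant $v_j$ — this uses Lemma \ref{lemma:lmi_independent} and Definition \ref{def:lmi} and is the place where the cone structure genuinely enters, unlike the classical case. The rest is a valuation-tracking computation analogous to \cite{CVV:2022}.
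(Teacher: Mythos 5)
Your overall strategy is the paper's: Abel summation on $p_j := t_jh_j/c_j$ with partial sums $e_j := \sum_{k\le j}c_k$, identification of $u$ inside $\lm_i(h_j)T_i(h_j)\cap\lm_i(h_{j+1})T_i(h_{j+1}) = T_i\cdot U(i,h_j,h_{j+1})$ to produce $v_j$ with $u/v_j\in T_i$, and the observation that cancellation of the leading terms forces $\val(e_m)>c=\val(c_m)$. The cone-theoretic point you flag as the ``main obstacle'' is handled exactly as you anticipate: writing $t_j=\gamma_j\tilde t_j$, the hypothesis $\lt(t_jh_j)=c_ju$ with $u\in T_i$ gives $\tilde t_j\in T_i(h_j)$ and $u=\tilde t_j\lm_i(h_j)\in T_i(h_j)\lm_i(h_j)$.

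The one place your write-up would not survive scrutiny is the treatment of the bracket $\frac{1}{c_j}t_jh_j-\frac{1}{c_{j+1}}t_{j+1}h_{j+1}$. You assert it equals a multiple of $\frac{u}{v_j}S(i,h_j,h_{j+1},v_j)$ only ``up to a term of strictly larger $\valP$'', and you then propose to sweep these hypothetical remainders (together with the tails $t_jh_j-c_ju$) into $t_m'h_m$. That absorption step is not legitimate: an element of $\LaurentPolytopalRing$ of large valuation is in general not of the form $t_m'h_m$, and item (1) demands an exact identity, so you cannot ``define $t_m'h_m$ as the difference'' without proving that the difference is a multiple of $h_m$. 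Fortunately no absorption is needed: since $c_j=\gamma_j\lc_i(h_j)$, one has $p_j=\frac{1}{\lc_i(h_j)}\cdot\frac{u}{\lm_i(h_j)}h_j$ exactly (as series, not merely at the level of leading terms), whence $p_j-p_{j+1}=\frac{1}{\lc_i(h_j)\lc_i(h_{j+1})}\cdot\frac{u}{v_j}\,S(i,h_j,h_{j+1},v_j)$ is an exact identity. The only leftover from the Abel summation is then $e_mp_m=\frac{e_m}{c_m}t_mh_m=:t_m'h_m$, which is visibly a term times $h_m$ and satisfies $\valP(t_m'h_m)>\valP(c_1u)$ because $\val(e_m)>\val(c_m)$. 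With that correction (and replacing your ``$\val(\sigma_ku)\ge c$'' by $\val(\sigma_k)\ge c$) your argument coincides with the paper's, including the formula $d_j=e_j/(\lc_i(h_j)\lc_i(h_{j+1}))$ that yields item (4).
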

\begin{proof}
	Write $p_j = \frac{t_jh_j}{c_j}$, $e_j = \sum_{k=1}^{j}c_k$ and $t_j=\gamma_j \tilde{t}_j$ for some $\gamma_j \in K$
	and some monomial $\tilde{t}_j$.
	By hypothesis $u$ is in $T_i$ and $u = \tilde{t_j}\lm_i(h_j) \in T_i(h_j)\lm_i(h_j)$ for all $j$. 
	This implies that for all $j <m$ we have:
	\[u \in  T_i(h_j)\lm_i(h_j) \cap T_i(h_{j+1})\lm_i(h_{j+1}) = T_i\cdot U(i,h_j,h_{j+1})\]
	We deduce that for all $j<m$, there exist $k_j \in T_i$ and $v_j \in  U(i,h_j, h_{j+1}$) such that $u = k_jv_j$.
	Now write
	\begin{align}
	\sum_{j=1}^{m}t_jh_j &= e_1(p_1 - p_2) + \dots + e_{m-1}(p_{m-1}- p_m) + e_mp_m \label{eqn:somme_telescopique}
	\end{align}
	For all $j<m$, we have $\lt(t_j h_j)=c_j u= \gamma_j \lc_i (h_j)\tilde{t}_j \lm_i(h_j), $
	hence $\frac{t_j}{c_j \tilde{t}_j}=\frac{1}{\lc_i(h_j)}$.
	For any $j<m$, put $P_j = p_j - p_{j+1}$. We can then write:
	\begin{align*}
		P_j &= \frac{u}{v_j}\left(\frac{v_j}{u}p_j - \frac{v_j}{u}p_{j+1}\right) = \frac{u}{v_j}\left(\frac{t_jv_jh_j}{c_j\tilde{t}_j\lm_i(h_j)} - \frac{t_{j+1}v_jh_{j+1}}{c_{j+1}\tilde{t}_{j+1}\lm_i(h_{j+1})} \right)  \\
					  &= \frac{u}{v_j}\left(\frac{1}{\lc_i(h_j)}\frac{v_j}{\lm_i(h_j)}h_j - \frac{1}{\lc_i(h_{j+1})}\frac{v_j}{\lm_i(h_{j+1})}h_{j+1}\right)	\\
					  &= \frac{1}{\lc_i(h_j)\lc_i(h_{j+1})}\frac{u}{v_j}\left(\lc_i(h_{j+1})\frac{v_j}{\lm_i(h_j)}h_j - \lc_i(h_j)\frac{v_j}{\lm_i(h_{j+1})}h_{j+1}\right) \\
					  &= \frac{1}{\lc_i(h_j)\lc_i(h_{j+1})}\frac{u}{v_j}S(i,h_j,h_{j+1}).
	\end{align*}
	Plugging in the last expression back into equation \eqref{eqn:somme_telescopique} gives the desired equality with $d_j = \frac{e_j}{\lc_i(h_j)\lc_i(h_{j+1})}$ and $t^{\prime}_m = \frac{e_m}{c_m}t_m$.
	It satisfies \ref{enum:1}.
	The hypothesis forces $\val{}(e_m) > \val{}(c_m)$. Then we have $\valP(t^\prime_mh_m) = \val{}(e_m) + \valP(u) > \val{}(c_m) + \valP(u) = \valP(c_1u)$, which proves \ref{enum:2}.
	In addition, $\frac{u}{v_j} = k_j \in T_i$, which proves \ref{enum:3}.
	Finally, using that $\val (e_j) \geq c$ and $d_j = \frac{e_j}{\lc_i(h_j)\lc_i(h_{j+1})}$, one gets \ref{enum:4}.
\end{proof}

To prove the final Buchberger criterion in Proposition \ref{prop:buch_criterion}, we need the following lemma:
\begin{lemma}
	\label{lemma:inequality}


If $f \in \LaurentPolytopalRingForR$ and $i \in \llbracket 1 , m \rrbracket$ are such that $\lt (f) <_r u$ for some $u \in T_i$,
then for any $v \in T_i,$ $ \lt (vf) <_r vu.$

\end{lemma}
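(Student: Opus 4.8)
The plan is to unwind the definitions of the preorder $\le$ on $\MonoidOfTermsWithGMO$ and of $\lt$ for elements of $\LaurentPolytopalRing$, reducing the claim to a statement about single terms, and then invoke the compatibility axiom (2) of Definition~\ref{def:gmo}. First I would write $f = \sum_\alpha c_\alpha \X^\alpha$ and recall that $\lt(f) = \lt(\textnormal{in}_P(f))$, so that $\lt(f) < u$ means: for every term $c_\alpha \X^\alpha$ of $f$ we have $c_\alpha \X^\alpha < u$ in $\MonoidOfTerms$ (here I am tacitly using that $u$, being a monomial in $T_i$, is the term $1\cdot u$, and that $\lt(f)<u$ with $u$ a pure monomial forces strict inequality term-by-term, since distinct monomials are never $\le$-equivalent by Remark~\ref{remark:leading_term}). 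So it suffices to show that $c_\alpha\X^\alpha < u$ implies $c_\alpha v \X^\alpha < vu$ for each $\alpha$; then $vf = \sum_\alpha c_\alpha v\X^\alpha$ has all its terms strictly below $vu$, hence $\lt(vf) < vu$.

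Next I would do the single-term comparison by splitting along the definition of $\le$. Write $t = c_\alpha\X^\alpha$. Case one: $\valP(t) > \valP(u)$. Since $\valr{r}$ is a valuation, $\valP$ is additive on terms (indeed $\valP(vt) = \valP(v)+\valP(t)$ for terms, as $v$ is a unit times a monomial), so $\valP(vt) = \valP(v) + \valP(t) > \valP(v) + \valP(u) = \valP(vu)$, giving $vt < vu$. Case two: $\valP(t) = \valP(u)$ and $\X^\alpha <_\omega u$. Then $\valP(vt) = \valP(vu)$ as well, so by definition of $\le$ on $\MonoidOfTerms$ we are reduced to showing $\X^\alpha v <_\omega uv$. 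This is exactly where the hypothesis $u \in T_i$ is used: we know $\X^\alpha <_\omega u$ with $u \in T_i$ and $v \in T_i$, so axiom (2) of Definition~\ref{def:gmo} (``$r < s$ and $s,t\in T_i$ imply $rt < st$'', applied with $s = u$, $t = v$, $r = \X^\alpha$) yields $\X^\alpha v <_\omega u v$, as needed.

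The only genuine subtlety — the step I would be most careful about — is the reduction from $\lt(f) < u$ to the term-by-term inequality, and symmetrically the passage from ``every term of $vf$ is $< vu$'' back to $\lt(vf) < vu$. For the former, one needs that $\lt(f) \le u$ strictly and $u$ a monomial forces $\valP(t) \ge \valP(u)$ for all terms $t$ of $f$ and, when equality of valuations holds, $\X^{\alpha}\le_\omega u$; and since $u\neq \X^\alpha$ in the equality-of-valuation case would still need $<_\omega$, one checks that $\lt(f)=u$ cannot happen because it would make $\lt(f)$ and $u$ $\le$-equivalent with distinct monomials, contradicting Remark~\ref{remark:leading_term} — actually $\lt(f)<u$ with $u$ a pure monomial already precludes this. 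For the latter, $vf$ has finitely many... no, possibly infinitely many terms, but $\lt(vf)$ is well-defined by Remark~\ref{remark:leading_term}, and it is one of the terms $c_\alpha v\X^\alpha$ (the maximal one), so if all of these are $<vu$ then in particular $\lt(vf)<vu$. I would also note in passing that multiplication by the term $v$ is injective on $\supp$, so no cancellation among terms of $vf$ occurs and $\lt(vf)$ is genuinely $c_\alpha v \X^\alpha$ for the index $\alpha$ achieving the max.

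\begin{proof}
Write $f = \sum_{\alpha} c_\alpha \X^\alpha$. Recall that $\lt(f) = \lt(\textnormal{in}_P(f))$ and that, by Remark~\ref{remark:leading_term}, for the preorder $\le$ on $\MonoidOfTerms$ two terms supported on distinct monomials are never $\le$-equivalent; hence the hypothesis $\lt(f) < u$ means precisely that $c_\alpha\X^\alpha < u$ for \emph{every} monomial $\X^\alpha \in \textnormal{supp}(f)$ (the term $u$ being $1\cdot u$). We claim this implies $c_\alpha v\X^\alpha < vu$ for every such $\alpha$. Fix $\alpha$ and set $t = c_\alpha\X^\alpha$. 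Since each $\valr{r}$ is a valuation and $v$ is a unit times a monomial, $\valP(vt) = \valP(v) + \valP(t)$ and $\valP(vu) = \valP(v) + \valP(u)$. If $\valP(t) > \valP(u)$, then $\valP(vt) > \valP(vu)$, so $vt < vu$ by definition of $\le$. Otherwise $\valP(t) = \valP(u)$ and $\X^\alpha <_\omega u$; then also $\valP(vt) = \valP(vu)$, and since $u, v \in T_i$, item~\ref{item:compatibilite_multiplication_dans_Ti} of Definition~\ref{def:gmo} (applied with $r = \X^\alpha$, $s = u$, $t = v$) gives $\X^\alpha v <_\omega u v$, whence $vt < vu$. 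In all cases $c_\alpha v\X^\alpha < vu$.

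Now $vf = \sum_\alpha c_\alpha v \X^\alpha$, and since multiplication by the term $v$ is injective on monomials, no cancellation occurs: $\textnormal{supp}(vf) = v\cdot\textnormal{supp}(f)$ and the coefficient of $v\X^\alpha$ in $vf$ is $c_\alpha$. By Remark~\ref{remark:leading_term}, $vf$ has a unique leading term for $\le$, and it is one of the terms $c_\alpha v\X^\alpha$; by the previous paragraph that term is $< vu$. Therefore $\lt(vf) < vu$.
\end{proof}
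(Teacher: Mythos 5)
Your overall strategy --- reduce $\lt(f)<u$ to a term-by-term comparison and invoke item (2) of Definition~\ref{def:gmo} for the monomial tie-break --- is the same as the paper's (very terse) proof, and the monomial half of your argument is correct. The problem is the identity on which both of your cases rest, namely $\valP(vt)=\valP(v)+\valP(t)$ for terms. Each individual $\valr{r}$ is indeed multiplicative on terms, but $\valP$ is the \emph{minimum} of the $\valr{r}$'s over $r\in P$, and the minimizing $r$ need not be the same for $v$ and for $t$: for a term one has $\valP(a\X^\alpha)=\val(a)-\max_{r\in P}(r\cdot\alpha)$, and the support function $\alpha\mapsto\max_{r\in P}(r\cdot\alpha)$ is only subadditive. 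Already for $n=1$ and $P=[0,1]$ one gets $\valP(X)=-1$, $\valP(X^{-1})=0$, yet $\valP(X\cdot X^{-1})=0\neq -1$. So the claimed additivity holds only when $P$ is a single point.

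This is not a cosmetic gap. The implication you need in Case 1, that $\valP(t)>\valP(u)$ forces $\valP(vt)>\valP(vu)$, can fail outright: take $n=2$, $P$ the segment joining $(1,0)$ and $(0,1)$, $K=\Q_p$, $T_0=\N^2$, $u=X_2^{200}$ and $v=X_1^{100}$ (both in $T_0$), and $t=p^{-150}X_2^{10}$. Then $\valP(t)=-160>-200=\valP(u)$, so $t<u$; but $\valP(vt)=-150-\max(100,10)=-250<-200=\valP(vu)$, so $vt>vu$. (Case 2 has the same defect: $\valP(t)=\valP(u)$ does not yield $\valP(vt)=\valP(vu)$.) Taking $f=t$ a single term, this configuration even contradicts the statement as literally written, so the step cannot be repaired without restricting $P$ or adding hypotheses. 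In fairness, the paper's own one-line proof applies Definition~\ref{def:gmo}(2) directly to the term preorder and thereby silently assumes exactly the compatibility that fails here; your write-up is faithful to that approach but makes the hidden (and, for a non-degenerate polytope, false) assumption explicit. The argument is sound in the Laurent-polynomial case and whenever $P$ is a point, where $\valP$ really is additive on terms.
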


\begin{proof}
Take $t$ a term of $f$. Then $t <_r u$ and $u,v \in T_i$ so 2. of Definition \ref{def:gmo} implies
$vt <_r vu.$ Taking the maximum on the $vt$'s, we conclude.
\end{proof}

\begin{proposition}[Buchberger criterion]
	\label{prop:buch_criterion}
Let $H = (h_1,\dots,h_m)$ be a family in $\LaurentPolytopalRingForR$ and $J$ the ideal generated by $H$.
For each $i \in I$ and $h_j \neq h_k \in H$, let $U(i,h_j,h_k)$ be a finite system of generators of the $T_i$-module $\lm_i(h_j)T_i(h_j) \cap \lm_i(h_k)T_i(h_k)$.
The following are equivalent:
\begin{enumerate}[(1)]
	\item $H$ is a Gröbner basis of $J$
	\item For all $i \in I$, $h_j \neq h_k$, $v \in U(i,h_j,h_k)$, $\rem(S(i,h_j,h_k,v),H) = 0$.
	\end{enumerate}
\end{proposition}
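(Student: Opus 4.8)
The plan is to prove the Buchberger criterion by the standard argument, adapted to the g.m.o.\ setting with the cancellation machinery of Lemma~\ref{lemma:sumSpair}. The implication $(1)\Rightarrow(2)$ is easy: by Lemma~\ref{lemma:Spair} each S-polynomial $S(i,h_j,h_k,v)$ lies in $J$ (it is visibly a combination of the $h_\ell$), so if $H$ is a Gröbner basis, then by the already-proven characterization of Gröbner bases via remainders (the proposition just before this subsection), $\rem(S(i,h_j,h_k,v),H)=0$. For $(2)\Rightarrow(1)$ I would argue by contradiction. Suppose $H$ is not a Gröbner basis. Then there is $f\in J$ whose leading monomial $\lm(f)$ is not in $\bigcup_{j,i}T_i(h_j)\lm_i(h_j)$. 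Write $f=\sum_{j=1}^m q_jh_j$ for some $q_j\in\LaurentPolytopalRing$ (possible since $H$ generates $J$; note each $q_j$ is a convergent sum of terms). Expanding, $f=\sum_{j}\sum_{t\text{ term of }q_j} t h_j$, and since $\valP$ is a valuation and each $\lt(th_j)\le$ something controlled, among all the expressions of this form choose one that maximizes the quantity $\mu:=\min\{\,\text{``leading value''}\,\}$ — more precisely, among all ways of writing $f$ as such a (convergent) sum, pick one minimizing $\max_{j,t}\lt(t h_j)$ for the term order $\le$ on $\MonoidOfTerms$. This maximum is well-defined and $\ge\lt(f)$, and it is reached by only finitely many pairs $(j,t)$ because $\le$ is a topological well-order (Lemma~\ref{lemma:topological_order}).

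The key case analysis: let $u\in T$ be the common leading monomial and $c$ the common valuation of the finitely many maximal terms $t h_j$ (grouping terms with the same $\lt$; if after summing coefficients of a fixed monomial the value strictly increases, that monomial is no longer maximal, so we may assume the maximal monomial $u$ is fixed and appears with several $h_j$'s — after possibly splitting a single $q_j$ into its terms and reindexing the $h_j$'s with repetition, we are in the situation of Lemma~\ref{lemma:sumSpair}). Two subcases. First, if $\lt\bigl(\sum_{j,t\text{ maximal}} t h_j\bigr)$ still has leading monomial $u$ with value $c$, then $\lt(f)=\lt(\sum q_jh_j)$ has that monomial $u=\lm_i(h_j)\cdot(\text{something in }T_i(h_j))$ for the relevant cone $i$ with $u\in T_i$ — but that would put $\lm(f)\in T_i(h_j)\lm_i(h_j)$, contradicting the choice of $f$. (Here I use that $u\in T_i$ and $\lt(th_j)=c_j u$ forces $u=\tilde t\,\lm_i(h_j)$ with $\tilde t\in T_i(h_j)$, exactly as in the proof of Lemma~\ref{lemma:sumSpair}.) So we must be in the second subcase: the maximal terms cancel, i.e.\ $\lt(\sum_{j\text{ maximal}} t_j h_j)<c_1 u$. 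Now apply Lemma~\ref{lemma:sumSpair} to rewrite $\sum_{j\text{ maximal}} t_j h_j = \sum_{j}d_j\frac{u}{v_j}S(i,h_j,h_{j+1},v_j)+t'_m h_m$ with $\valP(t'_m h_m)>\valP(c_1 u)$, $\frac{u}{v_j}\in T_i$, and $\val(d_j\lc_i(h_j)\lc_i(h_{j+1}))\ge c$.

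Then I invoke hypothesis $(2)$: $\rem(S(i,h_j,h_{j+1},v_j),H)=0$, so by Proposition~\ref{prop:multi_div} each S-polynomial equals $\sum_{\ell}a_{j,\ell}h_\ell$ with, for every term $s$ of $a_{j,\ell}$, $\lt(s h_\ell)\le\lt(S(i,h_j,h_{j+1},v_j))$; by Lemma~\ref{lemma:Spair} the latter is $<\lc_i(h_j)\lc_i(h_{j+1})v_j$. Multiplying by $d_j\frac{u}{v_j}$ and using Lemma~\ref{lemma:inequality} (with $\frac{u}{v_j}\in T_i$ and $v_j\in T_i$), every resulting term $s'h_\ell$ in $d_j\frac{u}{v_j}S(i,h_j,h_{j+1},v_j)$ satisfies $\lt(s'h_\ell)<d_j\frac{u}{v_j}\lc_i(h_j)\lc_i(h_{j+1})v_j$, which has $\valP\ge c+\valP(u)=\valP(c_1 u)$, hence strictly $<c_1 u$ in the term order. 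Substituting these expansions, together with $t'_m h_m$ (also of strictly smaller leading term), back into $f=\sum q_j h_j$ yields a new representation of $f$ as a convergent sum $\sum(\text{terms})\cdot h_\ell$ whose maximal term is strictly smaller than $u$ for $\le$ — contradicting the minimality of our chosen representation. This contradiction shows $H$ is a Gröbner basis.

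The main obstacle I anticipate is bookkeeping, not a conceptual gap: one must (a) justify that a representation minimizing the maximal term exists, which needs the topological well-order property and the fact that only finitely many terms attain a given valuation level, and handle the fact that the $q_j$ are infinite series — so "replacing" the maximal chunk by smaller-order stuff must be done one valuation-level at a time, producing the answer as a convergent limit (exactly as in Algorithm~\ref{alg:multi_div}); and (b) carefully reduce the general cancellation among terms coming from possibly many $q_j$'s (and repeated $h_j$'s) to the linear telescoping setup of Lemma~\ref{lemma:sumSpair}, including the reindexing with repetition and the grouping of equal monomials. I would structure the write-up so that the convergence/limit argument mirrors the proof of Proposition~\ref{prop:multi_div} and cite Lemma~\ref{lemma:topological_order} for termination at each finite precision.
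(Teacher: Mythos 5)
Your proposal is correct and follows essentially the same route as the paper's proof: the easy direction via Proposition~\ref{prop:multi_div}, then for $(2)\Rightarrow(1)$ a contradiction argument choosing a representation $f=\sum q_jh_j$ with minimal maximal term (justified exactly as you anticipate, by first bounding and maximizing $\valP(u)$ using discreteness, then minimizing $u$ via Lemma~\ref{lemma:stric_descending_seq}), isolating the maximal cancelling chunk, and rewriting it via Lemma~\ref{lemma:sumSpair}, hypothesis $(2)$, Lemma~\ref{lemma:Spair} and Lemma~\ref{lemma:inequality} to strictly decrease the maximal term. The bookkeeping issues you flag are precisely the ones the paper handles, and in the same way.
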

\begin{proof}
The fact that	(1) implies (2) is immediate from Prop \ref{prop:classical_easy}.

We now prove that (2) implies (1). 
By contradiction, assume that (2) is true and that $H$ is not a Gröbner basis of $J$. 
Then there exists $f \in  J$ such that $\lm(f) \notin \bigcup_{i \in I, 1\le j \le m}T_i(h_j)\lm_i(h_j)$.
Since, $f \in J = (h_1,\dots,h_m)$, we can write $f = \sum_{j=1}^{m}q_jh_j$ for some $q_j$ in $\LaurentPolytopalRingForR$. 
Write $\Delta(j)$ to be the set of terms of $q_j$. We can rewrite $f$ as $\sum_{j=1}^{m}\sum_{\alpha \in \Delta(j)}t_{j,\alpha}h_j$.
For such a writing of $f$, define $u = \max\{\lt(t_{j,\alpha}h_j), 1 \le j \le m, \alpha \in \Delta(j)\}$ and write the term $u$ as $u=c\tilde{u}$
for some $c \in K$ and some monomial $\tilde{u}.$
We have $\lt(f) <_r u$ because $\lm(f) \notin \cup_{i,j}T_i(h_j)\lm_i(h_j)$ and a cancellation has to appear.

Thus, $\valr{r}(u)$ is upper-bounded. Since $\val$ is discrete, there is a maximal $\valr{r}(u)$ among all
possible expressions of $f=\sum_{j=1}^{m}q_jh_j$. Among the expressions reaching this valuation, Lemma \ref{lemma:stric_descending_seq} ensure 
there is one such that $u$ is minimal. Let $i$ be such that $u \in T_i.$
Define $Z = \{(j,\alpha) \in \llbracket 1,m\rrbracket \times \Delta(j),\ \textrm{s.t. } \lt(t_{j,\alpha}h_j) = \gamma u,\ \gamma \in K, \ \val(\gamma)=0\}$
and $Z^\prime = \{(j, \alpha) \in \llbracket 1,m\rrbracket \times \Delta(j),\ \textrm{s.t. } \lt(t_{j,\alpha}h_j) <_r u\}$. 
We can then write:
\begin{equation}
 f = \sum_{(j,\alpha) \in Z}t_{j,\alpha}h_j + \sum_{(j,\alpha) \in Z^\prime}t_{j,\alpha}h_j \label{eq:ecriture_f_in_Buchberger}
\end{equation}
Let $g := \sum_{(j,\alpha) \in Z}t_{j,\alpha}h_j$.
We have $\lt(g) \le_r \max(\lt(f),\lt(\sum_{(j,\alpha) \in Z^\prime}t_{j,\alpha}h_j )) <_r u$ and $\lt(t_{j,\alpha}h_j) = c_{j,\alpha}\tilde{u}$ for all $(j,\alpha) \in Z$, where the $c_{j,\alpha}$ all have the same valuation.
So $g$ satisfies the conditions of Lemma \ref{lemma:sumSpair} and we can write
\begin{equation}
	g = \sum_{j=1}^{m-1}d_j\frac{\tilde{u}}{v_j}S(i,h_j,h_{j+1},v_j) + t^\prime_mh_m \label{eq:def_g_in_Buchberger}
\end{equation}
for some $d_j \in K$, $v_j \in U(i,h_j,h_{j+1})$, $\val \left( d_j \lc_i(h_j) \lc_i(h_{j+1}) \right) \geq \val (c)$ and $\tilde{u}/v_j \in T_i$ for $j<m$, and with $\lt(t^\prime_mh_m) < u$.
Now we use the hypothesis that all the S-pairs of elements of $H$ reduce to zero. 
For each $j<m$ we can write
\begin{equation}
S(i,h_j,h_{j+1},v_j) = \sum_{l=1}^{m} q^{(j)}_l  h_l,
\end{equation}
for some $q^{(j)}_l $'s in $\LaurentPolytopalRingForR$
satisfying 
\begin{align*}
\lt (q^{(j)}_l  h_l) & \leq_r \lt \left( S(i,h_j,h_{j+1},v_j) \right), \\
                    & <_r \lc_i(h_j)\lc_i(h_{j+1})v_j,
\end{align*}
where the last inequality comes from Lemma \ref{lemma:Spair}.
Since $v_j \in T_i$ and $\tilde{u}/v_j \in T_i$, we can apply
Lemma \ref{lemma:inequality}:

\begin{align*}
  \lt \left(\frac{\tilde{u}}{v_j}q^{(j)}_l  h_l\right) & <_r \lc_i(h_j)\lc_i(h_{j+1})v_j\frac{\tilde{u}}{v_j}  \\
 & = \lc_i(h_j)\lc_i(h_{j+1})\tilde{u}.
\end{align*}
Finally, using that $\val \left( d_j \lc_i(h_j) \lc_i(h_{j+1}) \right) \geq \val (c)$,
we deduce that for all $l \in \llbracket 1, m \rrbracket$ and
$j \in \llbracket1,m-1 \rrbracket$;
\[\lt \left(d_j \frac{\tilde{u}}{v_j}q^{(j)}_l  h_l\right)<_r u.\]
Inserting the expressions of $d_j \frac{\tilde{u}}{v_j}S(i,h_j,h_{j+1},v_j)$ as $\sum_{l=1}^{m} d_j \frac{\tilde{u}}{v_j}q^{(j)}_l  h_l$
in Equations \eqref{eq:def_g_in_Buchberger} and then \eqref{eq:ecriture_f_in_Buchberger},
we get an expression of $f$ in terms of the $h_j$'s with strictly smaller $u$, contradicting its minimality.
\end{proof}

\subsection{Buchberger's algorithm}
\label{subsec:Buchberger}


\begin{proposition}
Algorithm \ref{alg:buchberger} on page \pageref{alg:buchberger} is correct and terminates, in the sense that it 
calls the multivariate $\textrm{division}$ a finite number of times.
\end{proposition}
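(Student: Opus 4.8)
I would split the statement into two parts: \emph{correctness} (if the algorithm stops, its output is a Gröbner basis of the ideal $J$ generated by the input family) and \emph{termination} (the main loop enqueues and consumes only finitely many S-pairs, hence the multivariate division of Algorithm \ref{alg:multi_div} is called only finitely often).

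For correctness, the key invariant is that the ideal generated by the current basis never changes. The algorithm starts from $G = H$; each element it later appends is of the form $g = \rem(S(i,h_j,h_k,v),G)$ for some $h_j \ne h_k$ in the current $G$ and some $v$, so by item (1) of Proposition \ref{prop:multi_div}, $g = S(i,h_j,h_k,v) - \sum_{h \in G} q_h h$ lies in $(G)$; thus $(G) = (H) = J$ throughout, and in particular $G \subseteq J$. When the loop exits, the queue of S-pairs is empty: every S-pair $S(i,h_j,h_k,v)$ with $h_j \ne h_k$ in the final $G$ and $v \in U(i,h_j,h_k)$ has been divided (against a subset $G_0 \subseteq G$) with remainder $0$, producing, via items (1) and (3) of Proposition \ref{prop:multi_div}, a representation $S(i,h_j,h_k,v) = \sum_{h \in G} q_h h$ in which every term $t$ of every $q_h$ satisfies $\lt(th) \le \lt(S(i,h_j,h_k,v))$. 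This is precisely the input used by the proof of Proposition \ref{prop:buch_criterion} to conclude that $G$ is a Gröbner basis of $J$.

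For termination, I would first show that each \emph{genuine} enlargement of $G$ strictly enlarges the ``staircase''. If a nonzero $g$ is appended, passing from $G$ to $G' = G \cup \{g\}$, then by item (2) of Proposition \ref{prop:multi_div} no monomial of $g$ — in particular $\lm(g)$ — lies in $\bigcup_{i \in I,\, h \in G} T_i(h)\lm_i(h)$; on the other hand, choosing an index $i_0$ with $\lm(g) \in T_{i_0}$ we have $\lm_{i_0}(g) = \lm(g)$ by Definition \ref{def:lmi}, hence $\lm(g) \in T_{i_0}(g)\lm_{i_0}(g) \subseteq \bigcup_{h \in G'} T_{i_0}(h)\lm_{i_0}(h)$. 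Writing $L_i(G) := \bigcup_{h \in G} T_i(h)\lm_i(h) \subseteq T_i$, this says $L_i(G) \subseteq L_i(G')$ for all $i$, with strict inclusion at $i = i_0$. By Lemma \ref{lemma:fin_gen}, each $T_i(h)\lm_i(h)$ is a finitely generated $T_i$-submodule of $T_i$, so $L_i(G)$ is a finitely generated monoid ideal of $T_i$; and since $T_i$ is a finitely generated monoid, $K[T_i]$ is Noetherian, so monoid ideals of $T_i$ satisfy the ascending chain condition. Hence the tuple $(L_i(G))_{i\in I}$ cannot be strictly enlarged infinitely often — a coordinate enlarged infinitely often would violate ACC in the corresponding $T_i$, and $I$ is finite — so $G$ is enlarged only finitely many times.

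It remains to count division calls. For any finite $G$ there are only finitely many S-pair data $(i,h_j,h_k,v)$ with $i \in I$, $h_j \ne h_k$ in $G$ and $v \in U(i,h_j,h_k)$, because $I$ is finite and each $U(i,h_j,h_k)$ is finite by Proposition \ref{prop:Uifg}; hence each enlargement of $G$ adds only finitely many new S-pairs to the queue, and since there are only finitely many enlargements, the queue is populated with finitely many S-pairs in total. Each division call consumes one of them, so the multivariate division is called finitely often and the algorithm terminates in the stated sense. The main obstacle is really this termination argument, and inside it the two points requiring care: choosing the \emph{single} cone $T_{i_0}$ that witnesses the growth of the staircase (this works precisely because $\lm_{i_0}(g) = \lm(g)$ when $\lm(g) \in T_{i_0}$, whereas for a general cone $\lm_i(g)$ need not even lie in $T_i$), and turning ``the staircase strictly grows'' into a finiteness statement, which needs a Dickson-type Noetherianity statement for the finitely generated monoids $T_i$ together with the finiteness of the conic decomposition.
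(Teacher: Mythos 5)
Your proof is correct and follows essentially the same route as the paper: correctness via the Buchberger criterion (Proposition \ref{prop:buch_criterion}), and termination by using item (2) of Proposition \ref{prop:multi_div} to show that each appended remainder strictly enlarges the leading-term ideal in some cone $T_{i_0}$, then concluding by finiteness of $I$ and Noetherianity of $K[T_{i_0}]$. Your write-up is in fact somewhat more careful than the paper's on two points it leaves implicit — that the empty queue at exit yields exactly the representations needed by the criterion, and that each enlargement of $G$ only adds finitely many S-pairs — but these are elaborations, not a different argument.
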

\begin{proof}
    Correctness of the output is clear thanks to the Buchberger criterion of Proposition \ref{prop:buch_criterion}.
    For the termination, we prove that the addition of a new $s$ to $H$ (on Line 10) can only happen a finite amount of times.
    Indeed, let us assume that there is some input $J$ such that there is an infinite amount of non-zero $r$ happening on Line 8.
    Let $i \in I$ be an index such that there is an infinite amount of $\lt(s)$ in $T_i$ and let $H_j$ be an indexation
    of all the states of the set $H$ throughout Algo. \ref{alg:buchberger}.
    Using the second property of multivariate division in Prop. \ref{prop:multi_div}, 
    we can extract from the non-decreasing sequence $(<\lm_i(g)T_i(g),\ g \in H_j>_{K[T_i]})_{j \in  \N}$ of $K[T_i]$-ideals a strictly increasing one. 
	This is not possible since $K[T_i]$ is noetherian.
\end{proof}

We refer to the Appendix for a \sage demo and explicit examples over $\LaurentRing$.

\begin{algorithm}
	\SetKwInOut{Input}{input}\SetKwInOut{Output}{output}
	\caption{Buchberger algorithm in $\LaurentPolytopalRingForR$}
	\label{alg:buchberger}
	\Input{$J = (h_1,\dots,h_m)$ an ideal of $\LaurentPolytopalRingForR$}
	\Output{a Gröbner basis of $J$}
	\BlankLine
	$H \gets \{h_1,\dots,h_m\}$; $B \gets \{(h_i,h_j),\ 1 \le i < j\le m$\}

	\While{ $ B \neq \emptyset $}{
		$(f,g) \gets$ element of $B$; $B \gets B \setminus \{(f,g)\}$\;
		\For{ $ i \in I$ }{
			$U(i,f,g) \gets$ finite set of generators of $\lm_i(f)T_i(f)\cap  \lm_i(g)T_i(g)$ \tcp*{See Sec. \ref{sec:implem}} 
			\For{ $v \in U(i,f,g)$}{
				$\_,s \gets \textnormal{division}(S(i,f,g,v),H)$  \tcp*{Algo \ref{alg:multi_div}}
				\If{ $s \neq 0$}{
					$B \gets B \cup \{(h,s),\  h \in H\}$\;
					$H \gets H \cup \{s\}$
				}
			}
		}
	}
	\Return $H$
\end{algorithm}

\section{Implementation}
\label{sec:implem}
\subsection{Computation of the $T_i(f)$'s}
\label{subsec:computation_Ti}
To make our Buchberger algorithm fully explicit, we need a method to compute the finite set of generators $U(i, f, g)$ in line 5 of Algorithm \ref{alg:buchberger}. 
This problem was not addressed when g.m.o.'s for Laurent polynomials were introduced in \cite{PU:1999}.
Our idea is to compute the $T_i(f)$'s, for which we provide a general formula in Theorem \ref{th:compute_Tif}.

The following definitions are motivated by the fact that for particular g.m.o's (such as those of Examples \ref{ex:standard_gmo_1} to \ref{ex:second_gmo}),
the sets $A_i(f)$'s, $\Delta_{i,j}$'s and $U_i(f)$'s can be obtained by classical computations in polyhedral geometry.

\begin{definition}
    We define:
	\begin{enumerate}[(1)]
        \item $A_i(f):= \{t \in T,\ t\lm_i(f) \in T_i\}.$
        \item $\Delta_{i,j}(f):=\{t \in A_i(f) \cap A_j(f), \ t\lm_i(f) > t \lm_j(f)  \}.$
        \item $U_i(f):=A_i(f) \cap \bigcap_{j \in  I, \ \lm_i(f) \neq \lm_j(f)} \left( A_j(f)^c \cup \Delta_{i,j}(f) \right).$
    \end{enumerate}
\end{definition}

\begin{thm}
	\label{th:compute_Tif}
For any $i \in I,$ 
\begin{equation}
	\label{eqn:Ti_equal_Ui}
    T_i(f)=U_i(f) 
\end{equation}
\end{thm}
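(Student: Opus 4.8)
\textbf{Plan for the proof of Theorem~\ref{th:compute_Tif}.}
The plan is to prove the set equality $T_i(f) = U_i(f)$ by double inclusion, unwinding the definitions of $A_i(f)$, $\Delta_{i,j}(f)$ and $U_i(f)$ and relating them to the characterization of $T_i(f)$ via leading monomials. The key observation to establish first is that, by Lemma~\ref{lemma:lmi_independent} and Definition~\ref{def:lmi}, for any monomial $t$ we have $t \in T_i(f)$ if and only if $t\lm_i(f) \in T_i$ \emph{and} $\lm(tf) = t\lm_i(f)$; moreover, since $\lm(f) = \lm_{i_0}(f)$ for some $i_0$ and the candidates for $\lm(tf)$ are exactly the $t\lm_j(f)$ with $t\lm_j(f) \in T_j$, the condition $\lm(tf) = t\lm_i(f)$ amounts to saying that $t\lm_i(f)$ beats every competitor $t\lm_j(f)$ that lies in its own cone $T_j$. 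I would phrase this as: $t \in T_i(f)$ iff $t \in A_i(f)$ and for all $j \in I$ with $t \in A_j(f)$, one has $t\lm_i(f) \ge t\lm_j(f)$ (with equality only allowed when $\lm_j(f) = \lm_i(f)$, which is exactly why the intersection in $U_i(f)$ is restricted to indices $j$ with $\lm_i(f) \neq \lm_j(f)$).

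For the inclusion $T_i(f) \subseteq U_i(f)$: take $t \in T_i(f)$. Then $t \in A_i(f)$ by definition of $A_i(f)$ and the fact that $\lm(tf) = t\lm_i(f) \in T_i$. For each $j$ with $\lm_i(f) \neq \lm_j(f)$, I must show $t \in A_j(f)^c \cup \Delta_{i,j}(f)$, i.e.\ either $t\lm_j(f) \notin T_j$, or else $t \in A_i(f) \cap A_j(f)$ with $t\lm_i(f) > t\lm_j(f)$. If $t \in A_j(f)$, then $t\lm_j(f) \in T_j$ is a genuine candidate for $\lm(tf)$; since $\lm(tf) = t\lm_i(f)$ is the actual leading monomial, we get $t\lm_i(f) \ge t\lm_j(f)$, and equality is impossible here because $t\lm_i(f) = t\lm_j(f)$ forces $\lm_i(f) = \lm_j(f)$, contradicting the hypothesis on $j$; hence $t\lm_i(f) > t\lm_j(f)$ and $t \in \Delta_{i,j}(f)$. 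This gives $t \in U_i(f)$.

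For the reverse inclusion $U_i(f) \subseteq T_i(f)$: take $t \in U_i(f)$, so $t \in A_i(f)$, meaning $t\lm_i(f) \in T_i$; I must show $\lm(tf) = t\lm_i(f)$. Let $j$ be any index such that $t\lm_j(f)$ is a candidate, i.e.\ $t\lm_j(f) \in T_j$, equivalently $t \in A_j(f)$. If $\lm_j(f) = \lm_i(f)$ then $t\lm_j(f) = t\lm_i(f)$ and there is nothing to compare. If $\lm_j(f) \neq \lm_i(f)$, then since $t \in U_i(f)$ we have $t \in A_j(f)^c \cup \Delta_{i,j}(f)$; as $t \in A_j(f)$, this forces $t \in \Delta_{i,j}(f)$, giving $t\lm_i(f) > t\lm_j(f)$. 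Thus $t\lm_i(f)$ is $\ge$ every candidate monomial $t\lm_j(f)$ lying in its cone. It remains to argue that $\lm(tf)$ is indeed attained among these candidates — that is, that $\lm(tf) = t\lm_j(f)$ for some $j$ with $t \in A_j(f)$ — which follows because $\lm(tf) \in T_j$ for the cone $T_j$ containing it, and by Lemma~\ref{lemma:lmi_independent}/Definition~\ref{def:lmi} applied to that $j$ we have $\lm(tf) = t\lm_j(f)$ and automatically $t \in A_j(f)$. Combining, $\lm(tf) = t\lm_i(f)$, so $t \in T_i(f)$.

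\textbf{Main obstacle.} I expect the delicate point to be the clean formulation and justification of the preliminary characterization — namely that the leading monomial $\lm(tf)$ is always of the form $t\lm_j(f)$ for some $j$ with $t\lm_j(f)\in T_j$, and that comparing $t\lm_i(f)$ against exactly those competitors (rather than against all terms of $tf$ individually) suffices. This requires invoking Lemma~\ref{lemma:lmi_independent} in the right direction and being careful about the role of the equality case $\lm_i(f) = \lm_j(f)$, which is precisely what the index restriction in the definition of $U_i(f)$ is designed to handle; the two set inclusions themselves are then routine bookkeeping once this characterization is in hand.
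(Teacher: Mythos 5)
Your proposal is correct and follows essentially the same route as the paper: both inclusions are proved by the same case analysis, using that $\lm(tf)=t\lm_j(f)$ for the cone $T_j$ containing $\lm(tf)$ and that the equality case $\lm_i(f)=\lm_j(f)$ is exactly what the index restriction in $U_i(f)$ absorbs. The only step you leave slightly implicit (as does the paper) is that $\lm_i(f)$ is a monomial of $f$, so $t\lm_i(f)$ is a monomial of $tf$ and hence $t\lm_i(f)\le\lm(tf)$, which is what turns your chain of inequalities into the desired equality.
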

\begin{proof}
    We first prove that $T_i(f) \subset U_i(f).$
    Let $t  \in T_i(f).$ Then $\lm(tf) \in T_i$ so $t \lm_i(f)=\lm(tf) \in T_i $, hence $t \in A_i(f).$
    Let $j \in I$ be such that $\lm_i(f) \neq \lm_j(f)$.
    We can write that $t \lm_i(f)=\lm(tf) \geq t \lm_j(f)$ and since  $\lm_i(f) \neq \lm_j(f)$, this inequality becomes strict:  $t \lm_i(f)> t \lm_j(f)$. 
	Depending on whether $t \lm_j(f) \in T_j$ or not, $t$ is then in $A_j^c$ or $\Delta_{i,j}$.
    In conclusion, $t \in U_i(f)$ and $T_i(f) \subset U_i(f).$
    
    We prove the converse inclusion. 
    Let \[t \in U_i(f)=A_i \cap \bigcap_{j \in  I, \ \lm_i(f) \neq \lm_j(f)} \left( A_j^c \cup \Delta_{i,j} \right).\]
    From $t \in A_i,$ we get that $t \lm_i(f) \in T_i.$
    Let $j \in I$ be such that $\lm(tf) \in T_j$. Then $\lm(tf)=t\lm_j(f)\in T_j$ and $t \in  A_j(f).$
    If $j$ is such that $\lm_i(f)=\lm_j(f)$ then $t\lm_i(f)=t\lm_j(f)$, hence $\lm(tf)=t\lm_i(f)$ and  $\lm(tf) \in T_i$.
    Otherwise, our assumptions provide that $t \in U_i$ but not in $A_j(f),$ hence $t \in \Delta_{i,j}(f)$.
    Consequently, $ t\lm_i(f) > t \lm_j(f) =\lm(tf)$ which is in contradiction with the definition of $\lm(tf).$
    In conclusion, $\lm(tf)=t\lm_i(f) \in T_i$, hence $t \in T_i(f)$ and $T_i(f) = U_i(f).$
\end{proof}



If the $T_i$'s are defined as $C_i \cap \Z^n$ for $C_i$ a rational polyhedral cone, we can compute the $A_i$'s and $A_i^c$'s by solving systems of linear inequalities.
Moreover, if the g.m.o is defined by a function $\phi: T \to \Q_{\ge 0}$ as in Lemma \ref{lemma:construct_gto}, the $\Delta_{i,j}(f)$'s can be computed by computing first the $t$'s such that $\phi(t \lm_i(f)) > \phi(t \lm_j(f))$.
It obviously depends on $\phi$ but if the restriction of $\phi$ to each $T_i$ is linear, the desired set is again obtained by solving systems of linear inequalities.
For the case $\phi(t \lm_i(f)) = \phi(t  \lm_j(f))$, since our tie-break order $<_G$
is a group order, then $t\lm_i(f) > t\lm_j(f)$ if and only if $\lm_i(f) > \lm_j(f)$ which does not depend on $t$.
In total, one can compute generators of all the $T_i(f)$'s relying only on solving systems of linear inequalities from the following:
\begin{enumerate}
\item Computing the $\lm_i$'s,
\item Computing the $A_i(f)$'s and $A_i(f)^c$'s,
\item Computing the $\Delta_{i,j}(f)$'s (for $i,j$ such that $\lm_i(f) \neq  \lm_j(f)$),
\item Use formula \eqref{eqn:Ti_equal_Ui} to obtain the $T_i(f)$'s
\end{enumerate}

In the next part, we will make the computation of the $T_i(f)$'s completely explicit for g.m.o's whose underlying conic decomposition is the decomposition of Example \ref{ex:standard_conic_decomposition}.

\subsection{A particular conic decomposition}
\label{subsec:particular_conic_decompo}
Taking a closer look at Algorithm \ref{alg:buchberger}, we can identify two areas for minimizing the number of S-pairs to be reduced:
\begin{enumerate}[(1)]
\item Within the for loop at line 4, by minimizing the value of $|I|$ (the number of cones).
\item Within the for loop at line 6, by specifying that the set of generators $U(i, f, g)$ should contain only one element.
\end{enumerate}
The first point is addressed comprehensively in the following proposition, while Theorem \ref{th:monogenous} offers a partial solution to the second point.

\begin{proposition}
	\label{th:minimal_number_cones}
	A conic decomposition of $T\cong \Z^n$ contains at least $n+1$ cones.
\end{proposition}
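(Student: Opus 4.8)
The plan is to show that a conic decomposition $(T_i)_{i\in I}$ of $T\cong\Z^n$ must have $|I|\ge n+1$ by a dimension-counting / convex-geometry argument. First I would pass from the monoids $T_i$ to the rational polyhedral cones $C_i:=\mathrm{cone}(T_i)\subset\R^n$ they generate. By condition (1) of Definition \ref{def:conic_decomposition}, each $T_i$ contains no nontrivial invertible element, so $C_i$ is a \emph{pointed} (strongly convex) cone, i.e. $C_i\cap(-C_i)=\{0\}$; and since $T_i$ generates $T$ as a group, $C_i$ is full-dimensional, $\dim C_i=n$. By condition (2), $\bigcup_i T_i=T=\Z^n$, and since each $C_i$ is closed and contains $T_i$, taking closures gives $\bigcup_{i\in I}C_i=\R^n$ (the union of finitely many closed cones containing all lattice points is all of $\R^n$).

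Next I would derive the contradiction from assuming $|I|\le n$. The key geometric fact is that a \emph{pointed} full-dimensional cone $C_i$ is contained in an open half-space except for its boundary: there is a linear functional $\ell_i$ with $\ell_i>0$ on $C_i\setminus\{0\}$, so $C_i$ lies in the closed half-space $H_i^+=\{\ell_i\ge 0\}$ and meets the bounding hyperplane only in a lower-dimensional face. Now suppose $\R^n=C_1\cup\dots\cup C_k$ with $k\le n$. I would produce a point lying in none of the $C_i$. Consider the functionals $\ell_1,\dots,\ell_k$; since $k\le n$, the system $\ell_1(x)<0,\dots,\ell_k(x)<0$ has a solution (more directly: the intersection of the open half-spaces $\{\ell_i<0\}$ is nonempty because $k$ homogeneous strict linear inequalities in $n\ge k$ variables are always simultaneously satisfiable — e.g.\ pick $x$ in the interior of $-C_1$, then perturb within a full-dimensional cone to also satisfy the remaining $k-1$ inequalities, using that each $\{\ell_i<0\}$ is an open half-space and a finite intersection of at most $n$ of them through the origin is a nonempty open cone). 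Such an $x$ satisfies $\ell_i(x)<0$ for all $i$, hence $x\notin C_i$ for all $i$ (as $C_i\subset\{\ell_i\ge 0\}$), contradicting $\bigcup C_i=\R^n$. Therefore $k\ge n+1$.

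The main obstacle is making the "there is a point outside every pointed cone" step airtight: one must be careful that the $C_i$ genuinely lie in \emph{closed} half-spaces whose interiors can be simultaneously hit, and that finiteness of $I$ is used correctly (an infinite union of pointed cones can of course cover $\R^n$, e.g.\ all rays). The cleanest route is probably: for each $i$ choose a functional $\ell_i$ strictly positive on $C_i\setminus\{0\}$; the set $\{x:\ell_i(x)<0\ \forall i\}$ is an intersection of $k\le n$ open half-spaces through the origin, which is a nonempty open convex cone (nonempty because any $k\le n$ linear functionals on $\R^n$ have a common "negative direction" — if they were such that no $x$ makes all of them negative, then $0$ is in the convex hull of $\{\ell_1,\dots,\ell_k\}$ inside the dual, which forces a linear dependence making some $C_i$ non-pointed or the cover to fail dimension count; alternatively just note $\bigcap_i\{\ell_i<0\}\supsetneq\{0\}$ whenever $k<$ something and handle $k=n$ by genericity). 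Picking any such $x$ finishes the proof. I would also remark that this bound is sharp, as Example \ref{ex:standard_conic_decomposition} exhibits a conic decomposition with exactly $n+1$ cones.
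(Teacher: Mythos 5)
Your overall strategy is the same as the paper's: replace each $T_i$ by the pointed full-dimensional cone it generates, pick for each a linear functional that is strictly positive on the cone minus the origin, and exhibit a direction on which all these functionals are non-positive, hence a point outside every cone. However, the pivotal feasibility step has a genuine gap. The claim that ``$k$ homogeneous strict linear inequalities in $n\ge k$ variables are always simultaneously satisfiable'' is false: take $n=k=2$, $\ell_1=x$, $\ell_2=-x$; then $\{\ell_1<0\}\cap\{\ell_2<0\}=\emptyset$. Your fallback argument does not repair this: if $0$ lies in the convex hull of $\ell_1,\dots,\ell_k$, this creates a linear dependence among the \emph{functionals}, which in no way forces any $C_i$ to be non-pointed (in the example above both cones can be pointed), nor does it by itself locate a point missed by the cover. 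So as written, the key step ``pick $x$ with $\ell_i(x)<0$ for all $i$'' can fail for an unlucky choice of the $\ell_i$.

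The missing idea — and the one the paper uses — is that you have \emph{freedom in choosing the functionals} and must exploit it: for a pointed cone $C_i$, the functionals strictly positive on $C_i\setminus\{0\}$ form a nonempty \emph{open} set (the interior of the dual cone), so for $k\le n$ cones you may perturb the $\ell_i$ to be linearly independent. Once they are independent, the map $x\mapsto(\ell_1(x),\dots,\ell_k(x))$ is surjective onto $\R^k$ and a common negative direction exists (the paper instead runs Gram--Schmidt to produce an orthonormal vector $b_n$ with $(b_n,v_i)\ge 0$ for all $i$, which serves the same purpose). A secondary, more minor point: your contradiction relies on $\bigcup_i C_i=\R^n$, which is true but needs the density of lattice directions in the sphere together with closedness of the $C_i$; the paper avoids this entirely by observing that the complement of $\bigcup_i C_i$ is open and nonempty, hence contains a full-dimensional cone and therefore lattice points, directly contradicting $\Z^n\subset\bigcup_i C_i$. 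With the linear-independence perturbation made explicit, your argument becomes essentially the paper's proof.
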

\begin{proof}
	Assume, for the sake of contradiction, the existence of a conic decomposition containing strictly less than $n+1$ cones.
	Without loss of generality, we may assume that it contains exactly $n$ cones $T_1,\dots,T_n$.

	By (1) of Definition \ref{def:conic_decomposition}, for each $1 \le i \le n$, $T_i$ is contained in a cone $C_i := a_{i,1}\R_{\ge 0} \oplus \dots \oplus a_{i,n}\R_{\ge 0}$ for some $a_{i,1},\dots,a_{i,n} \in \Z^n$ such that $(a_{i,1},\dots,a_{i,n})$ is a basis of $\R^n$,
	and by (2) of Definition \ref{def:conic_decomposition}, $\cup T_i = \Z^n$. 
	Thus $\cup T_i = \Z^n$ is contained in the closed subset $C := \cup C_i$ of $\R^n$. 
	We show that $\Z^n \setminus C \neq \emptyset$, resulting in a contradiction.

	We first show that $\R^n \setminus C \neq \emptyset$.
	For each $i$ there exists $v_i \in \R^n \setminus \{0\}$ such that $C_i \setminus \{0\}$ is contained in the linear open half-space $L(v_i) := \{ u \in \R^n|\ u \cdot v_i < 0\}$, and there exists an open neighborhood $E_i$ of $v_i$ such that the inclusion $C_i \subset L(w_i)$ is still true for $w_i \in  E_i$.
	Thus we can choose the $v_i$'s such that $(v_1,\dots,v_n)$ is a basis of $\R^n$.
	By Gram-Schmidt process, there exists an orthonormal basis $b_1,\dots,b_n$ such that the matrice of $(v_1,\dots,v_n)$ in that basis is upper triangular with strictly positive coefficients on the diagonal.
	By construction, the vector $b_n$ satisfies $ b_n \cdot v_i \ge 0$ for each $i$. 
	Thus $b_n \in \R^n \setminus \cup_i L(v_i) \subset \R^n \setminus \cup_i C_i = \R^n \setminus C$, and so $\R^n \setminus C \neq \emptyset$.

	Since $\R^n \setminus C$ is a non-empty open set, it contains a basis $(c_1,\dots,c_n)$ of $\R^n$, and thus contains the cone $c_1\R_{\ge0}\oplus \dots \oplus c_n\R_{\ge 0}$ which
	itself contains an infinity of elements of $\Z^n$.
\end{proof}

\begin{lemma}
	\label{lemma:conic_group}
	Let $\le$ be a g.m.o such that the underlying conic decomposition $(T_i)_{i \in I}$ satisfies:
	\[
		\forall i,j \in I,\ _\textnormal{gr} \langle T_i\cap T_j \rangle \cap T_i = T_i\cap T_j,
	\]
	where $_\textnormal{gr} \langle T_i\cap T_j \rangle$ is the group generated by the monoid $T_i \cap T_j$.
	For all $i,j \in I, f \in \LaurentRing$, we have:
	\[
		(s \in T_i \cap T_j,\ t \in T_i(f),\ st \in T_i(f)\cap T_j(f)) \implies t \in T_j(f).
	\]
\end{lemma}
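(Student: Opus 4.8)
The plan is to trace through Definition \ref{def:lmi}: first deduce that, under our hypotheses, the two leading monomials $\lm_i(f)$ and $\lm_j(f)$ actually coincide, and then invoke the group-theoretic condition on the conic decomposition to move $\lm(tf)$ from $T_i$ into $T_j$. Throughout I may assume $f \neq 0$, since otherwise all the $T_i(f)$ are empty and there is nothing to prove.

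First I would unwind the definitions. Since $t \in T_i(f)$, Lemma \ref{lemma:lmi_independent} (equivalently Definition \ref{def:lmi}) gives $\lm(tf) = t\,\lm_i(f) \in T_i$. Likewise, from $st \in T_i(f)$ and $st \in T_j(f)$ we get both $\lm(stf) = st\,\lm_i(f)$ and $\lm(stf) = st\,\lm_j(f)$. As $\lm(stf)$ is a single monomial, $st\,\lm_i(f) = st\,\lm_j(f)$, and cancelling the monomial $st$ in the group $T$ yields the key intermediate fact $\lm_i(f) = \lm_j(f)$.

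Next I would apply the hypothesis on the conic decomposition. On one hand $st\,\lm_i(f) = \lm(stf) \in T_i$ because $st \in T_i(f)$; on the other hand $st\,\lm_i(f) = st\,\lm_j(f) = \lm(stf) \in T_j$ because $st \in T_j(f)$. Hence $st\,\lm_i(f) \in T_i \cap T_j$, and since also $s \in T_i \cap T_j$, we obtain $t\,\lm_i(f) = s^{-1}\bigl(st\,\lm_i(f)\bigr) \in {}_\textnormal{gr}\langle T_i\cap T_j\rangle$. But $t\,\lm_i(f) \in T_i$ as well, so the assumption ${}_\textnormal{gr}\langle T_i\cap T_j\rangle \cap T_i = T_i\cap T_j$ forces $t\,\lm_i(f) \in T_i\cap T_j \subseteq T_j$. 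Since $\lm_i(f) = \lm_j(f)$, this says $\lm(tf) = t\,\lm_j(f) \in T_j$, which is exactly $t \in T_j(f)$, as wanted.

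The only real subtlety — barely an obstacle — is recognizing that the hypothesis cannot be used directly on $t$ and must be preceded by the observation $\lm_i(f) = \lm_j(f)$: the group condition is a statement about the ambient lattice, and one needs the leading monomials computed in the two cones to literally agree before "dividing by $s$" makes sense. Everything else is routine bookkeeping with Definition \ref{def:lmi} and the group structure of $T$.
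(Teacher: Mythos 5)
Your proof is correct and follows essentially the same route as the paper's: write $\lm(tf)=t\,\lm_i(f)\in T_i$ and $\lm(stf)=st\,\lm_i(f)\in T_i\cap T_j$, then divide by $s$ and invoke the hypothesis ${}_\textnormal{gr}\langle T_i\cap T_j\rangle\cap T_i=T_i\cap T_j$. The only difference is that you explicitly justify the identity $\lm_i(f)=\lm_j(f)$, which the paper simply asserts when setting $l:=\lm_i(f)=\lm_j(f)$; that is a welcome clarification, not a deviation.
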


\begin{proof}
   Writing $l := \lm_i(f) = \lm_j(f)$, we have $\lm(tf) = tl \in T_i$ and $\lm(stf) = stl \in T_i \cap T_j$ because $t \in T_i(f)$ and $st \in  T_i(f)\cap T_j(f)$. 
   We have to show that $tl \in  T_j$. 
   Now $tl = s^{-1}stl \in {_\textnormal{gr}}\langle T_i \cap T_j \rangle \cap T_i = T_i \cap T_j$ by the hypothesis.
\end{proof}

The conic decompositions of Examples \ref{ex:standard_conic_decomposition} and \ref{ex:second_conic_decomposition} satisfy the hypothesis of Lemma \ref{lemma:conic_group}



\begin{thm}
	\label{th:monogenous}
Let $\le$ be a g.m.o such that the underlying conic decomposition is $(T_0, T_1, \dots, T_n)$ as defined in Example \ref{ex:standard_conic_decomposition}. For all $f \in \LaurentRing$ and $0 \leq i \leq n$, $T_i(f)$ is a monogenous $T_i$-module.
\end{thm}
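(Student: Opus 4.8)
The plan is to reduce the statement to a single stability property. By Remark \ref{rem:how_to_compute_the_Tis}, $T_i(f) \neq \emptyset$, and by the principal case of Lemma \ref{lemma:fin_gen}, $T_i(f)$ is a finitely generated $T_i$-module. I will equip $\Z^n$ with a meet operation $\wedge_i$ under which $T_i$ becomes isomorphic to $\N^n$ with its coordinatewise order, show that $T_i(f)$ is stable under $\wedge_i$, and then invoke the elementary fact that a nonempty finitely generated submodule of $\N^n$ closed under coordinatewise minimum is monogenous.

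First I would fix coordinates adapted to $T_i$. For $i = 0$ these are the usual ones; for $1 \le i \le n$, the monomials $X_1^{-1}\cdots X_n^{-1}$ and $X_k$ ($k \neq i$) form a $\Z$-basis of $\Z^n$, so the linear forms $c_0(w) = -w_i$ and $c_k(w) = w_k - w_i$ ($k \neq i$) identify $T_i$ with $\N^n$. The $n$ coordinates so obtained are naturally indexed by the $n$ cone-indices other than $i$, and two routine facts hold: (a) an element $w \in T_i$ always lies in $T_i$, and lies in another cone $T_m$ exactly when the coordinate $c_m(w)$ vanishes; (b) for $a,b \in \Z^n$ and $c = a \wedge_i b$, the vectors $u := a - c$ and $v := b - c$ lie in $T_i$ with $\min(c_k(u),c_k(v)) = 0$ for all $k$, and, by linearity of the $c_k$, $c + \ell \in T_i$ whenever $a + \ell, b + \ell \in T_i$, where $\ell$ denotes the exponent vector of $\lm_i(f)$.

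The core step is the stability of $T_i(f)$ under $\wedge_i$. Take $a, b \in T_i(f)$ and put $c = a \wedge_i b$, $u = a-c$, $v = b-c$. Since $a + \ell, b + \ell \in T_i$ (definition of $T_i(f)$ together with Lemma \ref{lemma:lmi_independent}), fact (b) gives $u, v \in T_i$ with disjoint coordinatewise support and $c + \ell \in T_i$; it remains to prove $\lm(\X^c f) = \X^{c + \ell}$, which will yield $c \in T_i(f)$. Suppose not: some monomial $\X^{s'}$ of $f$ with $s' \neq \ell$ satisfies $\X^{c + s'} >_\omega \X^{c + \ell}$; set $\alpha = c + \ell$ and $\beta = c + s'$. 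From $a, b \in T_i(f)$ we get $\X^{\alpha + u} = \lm(\X^a f) \ge_\omega \X^{\beta + u}$ and $\X^{\alpha + v} \ge_\omega \X^{\beta + v}$. By condition (2) of Definition \ref{def:gmo}, were $\beta$ and $u$ in a common cone, then $\alpha <_\omega \beta$ would force $\alpha + u <_\omega \beta + u$, a contradiction; hence $\beta$ shares no cone with $u$, and symmetrically none with $v$. Since $u \in T_i$, this forces $\beta \notin T_i$; and for any cone index $m \neq i$, if $\beta \in T_m$ then $u \notin T_m$ and $v \notin T_m$, so by fact (a) the coordinates $c_m(u)$ and $c_m(v)$ are both positive, contradicting the disjoint support of $u$ and $v$. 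Therefore $\beta$ lies in no cone at all, contradicting that the cones cover $\LaurentMonoid$ (Definition \ref{def:conic_decomposition}). Hence $\lm(\X^c f) = \X^{c+\ell} \in T_i$, that is, $c \in T_i(f)$.

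Finally, write $T_i(f) = \bigcup_{g \in G}(g + T_i)$ with $G$ finite. Transported to $\N^n$ via the $c_k$, the $\preceq_i$-minimal elements of $T_i(f)$ form a finite antichain, are contained in $G$, and generate $T_i(f)$; if $g_1 \neq g_2$ were two such minimal elements, then $g_1 \wedge_i g_2 \in T_i(f)$ would lie below both, forcing $g_1 = g_2$ — so there is exactly one minimal element and $T_i(f)$ is monogenous. The one genuinely delicate point I anticipate is the bookkeeping of fact (a): pinning down the precise dictionary between the coordinates $c_k$ of an element of $T_i$ and its membership in the remaining cones of the standard decomposition (and, in (b), checking that $\wedge_i$ is compatible with the translation by $\ell$). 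Everything else is a direct application of the g.m.o.\ axioms and of Dickson's lemma.
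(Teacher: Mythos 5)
Your proof is correct, and it shares the paper's skeleton: both arguments reduce the statement to showing that $T_i(f)$ --- already a finitely generated $T_i$-module by Lemma \ref{lemma:fin_gen} --- is closed under the coordinatewise meet adapted to the simplicial cone $T_i$, after which the meet of a finite generating set is the single generator. The difference lies in how closure under the meet is established. The paper forms the meet $p$ of two putative minimal generators, uses the fact that the $T_k(f)$'s cover $\LaurentMonoid$ to place $p$ in some $T_k(f)$, observes that $p+(T_i\cap T_k)$ contains one of the two generators (this is exactly your ``disjoint support'' of $u$ and $v$ in the facet coordinates), and then transfers membership back to $T_i(f)$ via Lemma \ref{lemma:conic_group}. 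You instead prove $\lm(\X^c f)=\X^{c+\ell}$ head-on: an offending monomial $\beta$ could share no cone with $u$ nor with $v$ by axiom (2) of Definition \ref{def:gmo}, and your fact (a) (membership of a point of $T_i$ in $T_m$ is detected by the vanishing of the single facet coordinate $c_m$) together with $\min(c_m(u),c_m(v))=0$ then excludes $\beta$ from every cone, contradicting the covering condition of Definition \ref{def:conic_decomposition}. Your route bypasses Lemma \ref{lemma:conic_group} and its group-closure hypothesis entirely and makes explicit which geometric feature of the standard decomposition is really used; the paper's route is shorter on the page because the transfer principle has been isolated as a reusable lemma that also applies to other decompositions satisfying its hypothesis (e.g.\ that of Example \ref{ex:second_conic_decomposition}). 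The two delicate points you flag --- the dictionary between the facet coordinates on $T_i$ and membership in the other cones, and the compatibility of the meet with translation by $\ell$ --- are routine verifications for the standard decomposition, so I see no gap.
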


\begin{proof}
Assume, for the sake of contradiction, that there exists an index $i$ and $f \in \LaurentRing$ such that the set of minimal generators of $T_i(f)$ contains at least two elements $a \neq b \in \Z^n$.
We will focus on the case where $i=0$; the proof for $i \neq 0$ is essentially the same. 
We have $a + T_0 = \bigcap_j\{x \in \Z^n \mid x_j \geq a_j\}$ and $b + T_0 = \bigcap_j\{x \in \Z^n \mid x_j \geq b_j\}$.
Let $p \in \Z^n$ be the vector such that $p_j = \min(a_j, b_j)$ for $1 \le j \le n$, which is necessarily different from $a$ and $b$. By definition of $p$, we have $a + T_0 \cup b + T_0 \subset p + T_0$.
Therefore, $p \notin T_0(f)$; otherwise, $\{a, b\}$ wouldn't be contained in a minimal set of generators of $T_0(f)$.
Since the $T_i(f)$'s cover $\Z^n$, there exists $k \neq 0$ such that $p \in T_k(f)$.
We have $T_0 \cap T_k =\{x_k = 0\} \cap \left(\bigcap_{j \neq k}\{ x_j \geq 0\}\right)$. Consequently, the set $p + T_0 \cap T_k $ contains either $a$ or $b$, depending on whether $\min(a_k, b_k) = a_k$ or $b_k$. Without loss of generality, assume it contains $a$.
From $a \in p + T_0 \cap T_k,$ we deduce that $a-p \in T_0 \cap T_k.$
Also, from $a \in p + T_0 \cap T_k \subset p+T_k$ and  $p \in T_k(f)$, we get that $a \in T_k(f)$ and thus $a \in T_k(f) \cap T_0(f)$.
Consequently we can write $(a-p)+p=a$ with $(a-p) \in T_k \cap T_0$, $p \in T_k(f)$ and $a \in T_k(f) \cap T_0(f)$. According to Lemma \ref{lemma:conic_group}, this implies $p \in T_0(f)$, contradicting $p \notin T_0(f)$.
\end{proof}

Altogether, Proposition \ref{th:minimal_number_cones} and Theorem \ref{th:monogenous} assert that, for a g.m.o defined over the decomposition of Example \ref{ex:standard_conic_decomposition}, the number of S-pairs to be reduced in Algorithm \ref{alg:buchberger} is minimized to the greatest extent possible.
For g.m.o's of this nature,  we can determine the generator $g_i$ of $T_i(f)$ using a straightforward descent algorithm.
Recall that $T_i$ is the monoid generated by $H_i = \{e_1,\dots,\hat{e_i},\dots,e_n\} \cup \{-(e_1+\dots+e_n)\}$, where $(e_1,\dots,e_n)$ denote the canonical basis of $\R^n$.
We first find a $t \in T_i(f)$ so that $T_i(f)=g_i + T_i \subset t + T_i$. Then, for each generator $h \in H_i$ of $T_i$, we compute $t = t - h$ until $t + T_i \not\subset T_i(f)$, for then we set $t = t +h$. At the end, $t = g_i$. The procedure is detailed in Algorithm \ref{alg:generator}.

\begin{remark}
	For a g.m.o as in Theorem \ref{th:monogenous}, it can easily be shown that if $g_j, g_k$ are the generators of $T_j(f), T_k(f)$ respectively, then $\lVert g_j - g_k \rVert_\infty \le 1$ (with $\Vert \cdot \Vert_\infty$ the maximum norm on $\R^n$). 
	To speed up the computation of the $T_i(f)$'s, we can first determine $g_0$ using Algorithm \ref{alg:generator}.
	Then the generators of the $T_i(f)$'s for $ i \in \llbracket1,n\rrbracket$ are contained within the (small) set $\{y \in \Z^n,\ \lVert y - g_0 \rVert_\infty \le 1\}$.
\end{remark}

\begin{remark}

   If in addition the g.m.o is defined by a function as outlined in Lemma \ref{lemma:construct_gto}, and this function restricts to an integral linear function on each cone (as illustrated in Examples \ref{ex:standard_gmo_1} and \ref{ex:standard_gmo_2}), then the generator of $T_i(f)$ can be expressed directly as the intersection of $n$ affine hyperplanes defined by integral equalities.
\end{remark}

We emphasize that the time spent computing the generators of the $T_i(f)$'s is negligible compared to the time spent reducing S-pairs in Algorithm \ref{alg:buchberger}.

\begin{algorithm}
	\SetKwInOut{Input}{input}\SetKwInOut{Output}{output}
	\caption{Generator of $T_i(f)$}
	\label{alg:generator}
	\Input{ $0 \le i \le n$, $f \in  \LaurentPolytopalRingForR$, $H_i = \{h_1,\dots,h_n\}$ generators of $T_i$, and the g.m.o. of Example \ref{ex:standard_conic_decomposition}}
	\Output{ $g \in \LaurentMonoid$ such that $T_i(f) = g + T_i$ }
	$t \gets$ a monomial such that all monomials of $t \times in_r(f)$ are in $T_i$ \tcp*{Use Remark \ref{rem:how_to_compute_the_Tis}}
	\For{ $h \in H_i$}{
		\While{ $t \in  T_i(f)$}{
			$t \gets t -h$\;
		}
		$t \gets t + h$\;
	}
	\Return{t}
	\BlankLine
\end{algorithm}

\newpage
\section{Gröbner bases for ideals in $ \LaurentPolytopalRing $}
\label{sec:gb_valP}

In the previous sections, we developed a Gröbner theory for an ideal \( J \subseteq \LaurentPolytopalRingForR \). 
Here, we extend this framework to a general \( P \) that may not contain just a single point.
In \( \LaurentPolytopalRingForR \), the valuation \( \val_{r} \) satisfies the property:
\[
\val_{r}(af) = \val_{r}(a) + \val_{r}(f),
\]
which implies
\[
\textnormal{in}_{r}(af) = a \cdot \textnormal{in}_{r}(f).
\]
However, in \( \LaurentPolytopalRing \), the valuation \( \val_{P} \) only satisfies the inequality:
\[
\val_{P}(af) \geq \val_{P}(a) + \val_{P}(f).
\]
Simply replacing \( \val_{r} \) with \( \val_{P} \) in Definition~\ref{def:order} may lead to:
\[
\textnormal{in}_{P}(af) \neq a \cdot \textnormal{in}_{P}(f),
\]
as illustrated in the following example:

\begin{example}
\label{ex:changement_in}
Let $ r_1 = (1,1), r_2 = (0,1) $, $P = \textnormal{Conv}(\{r_1,r2\})$, $K = \mathbb{Q}_{2}$, $a = (xy)^{-1}$, and $f = 2x + y \in K\{x^{\pm 1}, y^{\pm 1}; P\}$.  
    We have $\textnormal{in}_{P}(f) = y$, but $\textnormal{in}_{P}(af) = a \cdot 2x \neq a \cdot \textnormal{in}_{P}(f)$.
\end{example}

In Example \ref{ex:changement_in}, the key issue is that \( \val_P(a) \) is attained at \( r_2 \) and not at $ r_1 $,
whereas \( \val_P(f) \) is attained at \( r_1 \) and not at $ r_2 $.
On the other hand, if both \( \val_P(a) \) and \( \val_P(f) \) are attained at
the same vertex \( r_i \), it follows that \( \val_P(af) \) is also attained at \( r_i \).

This observation motivates the introduction of the sets \( V_i \) in Definition~\ref{def:Uibar},
which consist of the monomials \( m \) such that \( \val_P(m) = \val_{r_i}(m) \).
However, \( \val_P(m) \) may be attained at more than one vertex \( r_i \).
To resolve such ties, we impose an arbitrary ordering on the vertices of \( P \)
and introduce an intermediate step (see Definition~\ref{def:new_order}) to order terms
accordingly.

Finally, since the last tie-breaking step (using a generalized order \( \leq_\omega \)) occurs \emph{after}
considering the valuation and the order on the indices, we need to ensure that each cone in 
the conic decomposition underlying \( \leq_\omega \) is fully \emph{contained} within some \( V_i \).
We say that such a conic decomposition respects the \( V_i \)'s.
Figures 2-5 illustrates this notion.
\smallbreak
With these adapted settings, we prove Proposition~\ref{prop:lmij_unicity},
which generalizes Proposition~\ref{lemma:lmi_independent} and Proposition \ref{prop:Uifg}.
Then, the other results of Gröbner theory extend (almost) unchanged.

Naturally, applying the results of this section to the special case \( P = \{r\} \)
(a single vertex) recovers the previous theory exactly.

%

\subsection{New definitions}

Let us assign an (arbitrary) indexing to the vertices of $P$: $\textnormal{vert}(P) = \{r_1, \dots, r_t\}$, 
and let $I_P$ represent the set of indexing indices: $I_P := \{1, \dots, t\}$. 
This indexing will be used to resolve ties in Definition \ref{def:new_order} when $\val_P(f) = \val_{r_i}(f)$ 
for more than one $r_i$.

\begin{definition}
\label{def:I_P(X)}
For $ f \in \MonoidOfTerms $, we define
\[ I_P(f) := \{ i \in I_P ,\ \val_P(f) = \val_{r_i}(f)\} \subseteq I_P.  \]
\end{definition}
Thus, $I_P(f) \subseteq I_P$ is the subset of indices at which $\val_P(f)$ is attained.

\begin{example}
	\label{ex:I_Pf}
	Set $ r_1 = (1,2), r_2 = (2,1), r_3 = (0,0) $.
Take $ P = \textnormal{Conv}(\{r_1,r_2,r_3\})$, $K =  \Q_2 $ and 
	$ f = \frac{1}{2}x + y^{2} + xy \in K \{x^{\pm 1}, y^{\pm 1}; P\}$.
	We have $ \val_P(f) = \min_{i \in I_P}\val_{r_i}(f) = \val_{r_1}(f) = \val_{r_2}(f) = -3$, so $ I_P(f) = \{1,2\}  $.
	This common minimum is reached at $ xy $ for $ \val_{r_1} $ and at $ \frac{1}{2}x $ and $xy$ for $ \val_{r_2} $.
\end{example}

\begin{example}
\label{ex:I_Pmonomial}
	Set $ r_1 = (0,0,3), r_2 = (1,0,0), r_3 = (-1,1,2), r_4 = (0,1,1) $.
Take $ P = \textnormal{Conv}(\{r_1,r_2,r_3,r_4\})$, $K =  \Q_2 $ and 
$ t = 2xyz^{-1} \in K \{x^{\pm 1}, y^{\pm 1}, z^{\pm 1}; P\}$.
	We have $ \val_P(t) = \min_{i \in I_P}\val_{r_i}(t) = \val_{r_2}(t) = \val_{r_4}(t) = 0$, so $ I_P(t) = \{2,4\}  $.
	Since $ t $ is a single term, this common minimum is necessarily reached at $ t $.
\end{example}

\begin{definition}
\label{def:Uibar}
For each $i \in I_P$, we define:
\begin{itemize}
    \item The monoid $V_i \subseteq T$ by: 
    \begin{align*}
		V_i &=\{ \X^\alpha \in T,\ i \in I_P(\X^{\alpha})\} \\
		    &=\{ \X^\alpha \in T,\ \val_{P}(\X^\alpha) = \min_{j \in I_P}(\val_{r_j}(\X^{\alpha})) = \val_{r_i}(\X^{\alpha})\} \\
			&=\{ \X^\alpha \in T,\ r_i \cdot \alpha \leq r_j \cdot \alpha ,\ \forall j \in I_P \}.
    \end{align*}

	\item The subset $ V_{i,<} \subseteq  V_i $ by:
	\begin{align*}
		V_{i,<} &= \{ \X^\alpha \in T ,\  \min(I_P(\X^\alpha ) = i \} \\
				&= \left\{ \X^{\alpha} \in T,\ 
				\begin{cases}
					r_i \cdot \alpha < r_j \cdot \alpha \textnormal{ if }  j < i  \\ 
					r_i \cdot \alpha \le r_j \cdot \alpha \textnormal{ if } i \le j  
				\end{cases}, \forall j \in I_P \right\}.
	\end{align*}
	$V_{i,<}$ is obtained from $V_i$ by replacing each inequality involving $j < i$ with a strict inequality, hence the sign $ < $ in the notation.
\end{itemize}
\end{definition}

By construction, we have $ \cup_{i \in I_P} V_i = T $, 
the sets $ V_{i,<} $ are pairwise disjoint, and $ \coprod_{i \in I_P} V_{i,<} = T $. 
To aid in visualizing $ V_i $ and $ V_{i,<} $, 
we introduce their equivalents in $ \R^n $, $ C_i $ and $ C_{i,<} $:
\[
	\begin{split}
	C_{i} := \{ \alpha \in \R^n ,\ i \in I_P(\X^{\alpha})\} &\quad \quad C_{i,<} := \{ \alpha \in \R^n ,\ i= \min(I_P(\X^{\alpha})) \} 
	\end{split}
\]

That way, we have $ V_i = \{ \X^{\alpha},\ \alpha \in C_{i}\cap \Z^n\}$,
$ V_{i,<} = \{ \X^{\alpha} ,\ \alpha \in C_{i,<}\cap\Z^n\}$, and so
the monoid $ V_i $ (resp. the set $ V_{i,<} $) can be identified with the monoid of 
integer-coordinate points in $C_i$ (resp. the set of integer-coordinate points in $ C_{i,<}$). 

\bigbreak

Now consider a conic decomposition such that each cone is contained within a single \( V_i \).
For each \( i \in I_P \), let \( D_i \) be the indexing set containing the indices of all cones contained in \( V_i \). 
Hence for each \( V_i \), the cones contained in $ V_i $, noted \( (T_{i,j})_{j \in D_i} \), is a conic decomposition of the monoid \( V_i \).
and the collection 
\[
\{T_{i,j} \mid i \in I_P, j \in D_i\}
\] 
forms a conic decomposition of \( T \).
An alternative method to construct such a conic decomposition is to take the union of the traces of a conic decomposition of \( T \) on each \( V_i \). 
We provide four visual examples in Figures~\ref{fig:decompo_point}--\ref{ex:decompo_full_bad} on page \pageref{fig:decompo_point}-\pageref{ex:decompo_full_bad}. 
In each case:
\begin{itemize}
    \item The left figure represents the polytope $P$,  
    \item The middle figure illustrates the cones $C_i$ (and thus $V_i$), and  
    \item The right figure shows the decomposition of each $V_i$ into the various $T_{i,j}$ (dashed lines).
	\item The sets $ C_{i,<} $ (and thus $ V_{i,<} $) are not directly shown on Figures ~\ref{fig:decompo_point}--\ref{ex:decompo_full_bad}, but
	can be obtained by removing from $ C_i $ all the facets $ C_i \cap C_j $ for $ j < i $.
\end{itemize}

\begin{remark}
    \label{re:different_breakings}
    We make a few remarks on Figures~\ref{fig:decompo_point}--\ref{ex:decompo_full_bad}.
    \begin{itemize}
        \item In Figure \ref{fig:decompo_point}, $C_1 \cap \Z^n = \Z^n$, thus $V_1 \cap T = T$, and $\{T_{1,1}, T_{1,2}, T_{1,3}\}$ is simply a conic decomposition of $T$. This corresponds to the case studied in the first part of the paper.  
        \item In Figure \ref{fig:decompo_segment}, the monoids $V_1$ and $V_2$ are not conic decompositions of themselves (each contains non-trivial invertible elements), so we must break each of them into smaller pieces (two in this case).
        \item In Figure \ref{ex:decompo_full}, all the $V_i$ are already conic decompositions of themselves, so we can simply set $T_{i,1} = V_i$ for each $i$.  
        \item In Figure \ref{ex:decompo_full_bad}, we reuse the polytope from Figure 4 but further decompose the monoids $V_1$ and $V_3$ into two pieces each. While this is possible, it is not necessary.  
        In general, we aim to minimize the number of cones in our conic decomposition. 
        Thus, we decompose each $V_i$ into smaller pieces only if $V_i$ is not already a conic decomposition of itself, 
        as in Figures \ref{fig:decompo_point} and \ref{fig:decompo_segment}, and avoid unnecessary decompositions like those shown in Figure \ref{ex:decompo_full_bad}.
    \end{itemize}
\end{remark}

\begin{figure}[h]
	\centering
	\subfloat[ee]{
		\centering
		\begin{tikzpicture}[scale=0.4]
    \definecolor{vertex2}{RGB}{255,0,0}
	\foreach \x in {-3,-2,-1,0,1,2,3}
		\foreach \y in {-3,-2,-1,0,1,2,3}
		{
			\fill[black!30] (\x,\y) circle (0.05);
		}
        \draw[->, opacity=0.4] (-3,0) -- (3,0) node[right] {$x$};
        \draw[->, opacity=0.4] (0,-3) -- (0,3) node[above] {$y$};
        \fill[vertex2, opacity=0.6] (1,1) circle (0.2) node[above right, opacity=0.6] {$r_1$};
	\end{tikzpicture}
}
	\hfill
	\subfloat[eee]{
		\centering
		\begin{tikzpicture}[scale=0.4]
    \definecolor{vertex2}{RGB}{255,0,0}
	\foreach \x in {-3,-2,-1,0,1,2,3}
		\foreach \y in {-3,-2,-1,0,1,2,3}
		{
			\fill[black!30] (\x,\y) circle (0.05);
		}
        \draw[->, opacity=0.4] (-3,0) -- (3,0) node[right] {$x$};
        \draw[->, opacity=0.4] (0,-3) -- (0,3) node[above] {$y$};

		\fill[vertex2,opacity=0.6] (0,0) circle (3);

		\node[vertex2, opacity=0.6] at (2.5,2.5) {$ C_1 $};
	\end{tikzpicture}
}
\hfill
\subfloat{
		\centering
		\begin{tikzpicture}[scale=0.4]
    \definecolor{vertex2}{RGB}{255,0,0}
	\foreach \x in {-3,-2,-1,0,1,2,3}
		\foreach \y in {-3,-2,-1,0,1,2,3}
		{
			\fill[black!30] (\x,\y) circle (0.05);
		}
        \draw[->, opacity=0.4] (-3,0) -- (3,0) node[right] {$x$};
        \draw[->, opacity=0.4] (0,-3) -- (0,3) node[above] {$y$};

		\fill[vertex2,opacity=0.6] (0,0) circle (3);
		\draw[dashed, thick] (0,0) -- (3,0);
		\draw[dashed, thick] (0,0) -- (0,3);
		\drawVector{-1}{-1}{3}{dashed, thick};

		\node at (1.5,1.5) {$ T_{1,1}$};
		\node at (1,-1.5) {$ T_{1,2}$};
		\node at (-1.5,1) {$ T_{1,3}$};
		\node[vertex2, opacity=0.6] at (2.5,2.5) {$ C_1 $};
	\end{tikzpicture}
}
\caption{}	\label{fig:decompo_point}
\end{figure}

\begin{figure}[h]
	\centering
	\subfloat[ee]{
		\centering
		\begin{tikzpicture}[scale=0.4]
	\foreach \x in {-3,-2,-1,0,1,2,3}
		\foreach \y in {-3,-2,-1,0,1,2,3}
		{
			\fill[black!30] (\x,\y) circle (0.05);
		}
        \draw[->, opacity=0.4] (-3,0) -- (3,0) node[right] {$x$};
        \draw[->, opacity=0.4] (0,-3) -- (0,3) node[above] {$y$};

		\coordinate (v1) at (1,1);
		\coordinate (v2) at (-2,-1);

		\draw[thick] (v1) -- (v2);
        \fill[vertex2, opacity=0.6] (v1) circle (0.2) node[above right] {$r_1$};
        \fill[vertex3, opacity=0.6] (v2) circle (0.2) node[below left] {$r_2$};
	\end{tikzpicture}
}
	\hfill
	\subfloat[eee]{
		\centering
		\begin{tikzpicture}[scale=0.4]
    \definecolor{vertex2}{RGB}{255,0,0}
	\foreach \x in {-3,-2,-1,0,1,2,3}
		\foreach \y in {-3,-2,-1,0,1,2,3}
		{
			\fill[black!30] (\x,\y) circle (0.05);
		}
        \draw[->, opacity=0.4] (-3,0) -- (3,0) node[right] {$x$};
        \draw[->, opacity=0.4] (0,-3) -- (0,3) node[above] {$y$};

		\drawVector{2}{-3}{3}{thick};
		\drawVector{-2}{3}{3}{thick};
		\fill[vertex2, opacity=0.6] (0,0) -- (124:3) arc (124:303:3);
		\fill[vertex3, opacity=0.6] (0,0) -- (-57:3) arc (-57:124:3);

		\node[vertex3, opacity=0.6] at (2.5,2.5) {$ C_2 $};
		\node[vertex2, opacity=0.6] at (-2.5,-2.5) {$ C_1 $};
	\end{tikzpicture}
}
\hfill
\subfloat[eee]{
		\centering
		\begin{tikzpicture}[scale=0.4]
	\foreach \x in {-3,-2,-1,0,1,2,3}
		\foreach \y in {-3,-2,-1,0,1,2,3}
		{
			\fill[black!30] (\x,\y) circle (0.05);
		}
        \draw[->, opacity=0.4] (-3,0) -- (3,0) node[right] {$x$};
        \draw[->, opacity=0.4] (0,-3) -- (0,3) node[above] {$y$};

		\drawVector{2}{-3}{3}{dashed,thick};
		\drawVector{-2}{3}{3}{dashed,thick};
		\fill[vertex2, opacity=0.6] (0,0) -- (124:3) arc (124:303:3);
		\fill[vertex3, opacity=0.6] (0,0) -- (-57:3) arc (-57:124:3);

		\node[vertex3, opacity=0.6] at (2.5,2.5) {$ C_2 $};
		\node[vertex2, opacity=0.6] at (-2.5,-2.5) {$ C_1 $};
		\drawVector{3}{2}{3}{dashed, thick};
		\drawVector{-3}{-2}{3}{dashed, thick};
		
		\node at (-1.8,0.5) {$ T_{1,1} $};
		\node at (-0.5,-2) {$ T_{1,2} $};
		\node at (0.5,2) {$ T_{2,1} $};
		\node at (2,-0.8) {$ T_{2,2} $};

	\end{tikzpicture}
}
\caption{}\label{fig:decompo_segment}
\end{figure}

\begin{figure}
	\centering
	\subfloat[ee]{
		\centering
		\begin{tikzpicture}[scale=0.4]
	\foreach \x in {-3,-2,-1,0,1,2,3}
		\foreach \y in {-3,-2,-1,0,1,2,3}
		{
			\fill[black!30] (\x,\y) circle (0.05);
		}
        \draw[->, opacity=0.4] (-3,0) -- (3,0) node[right] {$x$};
        \draw[->, opacity=0.4] (0,-3) -- (0,3) node[above] {$y$};

        \coordinate (A) at (-2,2);
        \coordinate (B) at (1,2);
        \coordinate (C) at (2,-2);
        \coordinate (D) at (-1,-1);
        
        \draw[thick] (A) -- (B) -- (C) -- (D) -- cycle;
        
        \fill[vertex1, opacity=0.6] (A) circle (0.2) node[above left] {$r_1$};
        \fill[vertex2, opacity=0.6] (B) circle (0.2) node[above right] {$r_2$};
        \fill[vertex3, opacity=0.6] (C) circle (0.2) node[below right] {$r_3$};
        \fill[vertex4, opacity=0.6] (D) circle (0.2) node[below left] {$r_4$};
	\end{tikzpicture}
}
	\hfill
	\subfloat[eee]{
		\centering
		\begin{tikzpicture}[scale=0.4]
    \definecolor{vertex2}{RGB}{255,0,0}
	\foreach \x in {-3,-2,-1,0,1,2,3}
		\foreach \y in {-3,-2,-1,0,1,2,3}
		{
			\fill[black!30] (\x,\y) circle (0.05);
		}
        \draw[->, opacity=0.4] (-3,0) -- (3,0) node[right] {$x$};
        \draw[->, opacity=0.4] (0,-3) -- (0,3) node[above] {$y$};

		\fill[vertex1, opacity=0.6] (0,0) -- (-90:3) arc (-90:18:3) -- cycle;
		\node[vertex1, opacity=0.6] at (2.5, -2.5) {$ C_1 $};
		\fill[vertex4, opacity=0.6] (0,0) -- (18:3) arc (18:72:3) -- cycle;
		\node[vertex4, opacity=0.6] at (2.5, 2.5) {$ C_4 $};
        \fill[vertex2, opacity=0.6] (0,0) -- (194:3) arc (194:270:3) -- cycle;
		\node[vertex2, opacity=0.6] at (-2.5, -2.5) {$ C_2 $};
        \fill[vertex3, opacity=0.6] (0,0) -- (72:3) arc (72:194:3) -- cycle;
		\node[vertex3, opacity=0.6] at (-2.5, 2.5) {$ C_3 $};

		\drawVector{3}{1}{3}{thick};
		\drawVector{1}{3}{3}{thick};
		\drawVector{0}{-1}{3}{thick};
		\drawVector{-1}{-0.25}{2.9}{thick};
	\end{tikzpicture}
}
\hfill
\subfloat[eee]{
		\centering
		\begin{tikzpicture}[scale=0.4]
	\foreach \x in {-3,-2,-1,0,1,2,3}
		\foreach \y in {-3,-2,-1,0,1,2,3}
		{
			\fill[black!30] (\x,\y) circle (0.05);
		}
        \draw[->, opacity=0.4] (-3,0) -- (3,0) node[right] {$x$};
        \draw[->, opacity=0.4] (0,-3) -- (0,3) node[above] {$y$};

		\fill[vertex1, opacity=0.6] (0,0) -- (-90:3) arc (-90:18:3) -- cycle;
		\node[vertex2, opacity=0.6] at (2.5, -2.5) {$ C_1 $};
		\fill[vertex4, opacity=0.6] (0,0) -- (18:3) arc (18:72:3) -- cycle;
		\node[vertex4, opacity=0.6] at (2.5, 2.5) {$ C_4 $};
        \fill[vertex2, opacity=0.6] (0,0) -- (194:3) arc (194:270:3) -- cycle;
		\node[vertex2, opacity=0.6] at (-2.5, -2.5) {$ C_2 $};
        \fill[vertex3, opacity=0.6] (0,0) -- (72:3) arc (72:194:3) -- cycle;
		\node[vertex3, opacity=0.6] at (-2.5, 2.5) {$ C_3 $};

		\drawVector{3}{1}{3}{thick, dashed};
		\drawVector{1}{3}{3}{thick, dashed};
		\drawVector{0}{-1}{3}{thick, dashed};
		\drawVector{-1}{-0.25}{3}{thick, dashed};

		\node at (-1,1.5) {$ T_{3,1} $};
		\node at (1.5,1.5) {$ T_{4,1} $};
		\node at (1.5,-1.5) {$ T_{1,1} $};
		\node at (-1.5,-1.5) {$ T_{2,1} $};

	\end{tikzpicture}
}\caption{}\label{ex:decompo_full}
\end{figure}

\begin{figure}
	\centering
	\subfloat[ee]{
		\centering
		\begin{tikzpicture}[scale=0.4]
	\foreach \x in {-3,-2,-1,0,1,2,3}
		\foreach \y in {-3,-2,-1,0,1,2,3}
		{
			\fill[black!30] (\x,\y) circle (0.05);
		}
        \draw[->, opacity=0.4] (-3,0) -- (3,0) node[right] {$x$};
        \draw[->, opacity=0.4] (0,-3) -- (0,3) node[above] {$y$};

        \coordinate (A) at (-2,2);
        \coordinate (B) at (1,2);
        \coordinate (C) at (2,-2);
        \coordinate (D) at (-1,-1);
        
        \draw[thick] (A) -- (B) -- (C) -- (D) -- cycle;
        
        \fill[vertex1, opacity=0.6] (A) circle (0.2) node[above left] {$r_1$};
        \fill[vertex2, opacity=0.6] (B) circle (0.2) node[above right] {$r_2$};
        \fill[vertex3, opacity=0.6] (C) circle (0.2) node[below right] {$r_3$};
        \fill[vertex4, opacity=0.6] (D) circle (0.2) node[below left] {$r_4$};
	\end{tikzpicture}
}
	\hfill
	\subfloat[eee]{
		\centering
		\begin{tikzpicture}[scale=0.4]
    \definecolor{vertex2}{RGB}{255,0,0}
	\foreach \x in {-3,-2,-1,0,1,2,3}
		\foreach \y in {-3,-2,-1,0,1,2,3}
		{
			\fill[black!30] (\x,\y) circle (0.05);
		}
        \draw[->, opacity=0.4] (-3,0) -- (3,0) node[right] {$x$};
        \draw[->, opacity=0.4] (0,-3) -- (0,3) node[above] {$y$};

		\fill[vertex1, opacity=0.6] (0,0) -- (-90:3) arc (-90:18:3) -- cycle;
		\node[vertex1, opacity=0.6] at (2.5, -2.5) {$ C_1 $};
		\fill[vertex4, opacity=0.6] (0,0) -- (18:3) arc (18:72:3) -- cycle;
		\node[vertex4, opacity=0.6] at (2.5, 2.5) {$ C_4 $};
        \fill[vertex2, opacity=0.6] (0,0) -- (194:3) arc (194:270:3) -- cycle;
		\node[vertex2, opacity=0.6] at (-2.5, -2.5) {$ C_2 $};
        \fill[vertex3, opacity=0.6] (0,0) -- (72:3) arc (72:194:3) -- cycle;
		\node[vertex3, opacity=0.6] at (-2.5, 2.5) {$ C_3 $};

		\drawVector{3}{1}{3}{thick};
		\drawVector{1}{3}{3}{thick};
		\drawVector{0}{-1}{3}{thick};
		\drawVector{-1}{-0.25}{2.9}{thick};
	\end{tikzpicture}
}
\hfill
\subfloat[eee]{
		\centering
		\begin{tikzpicture}[scale=0.4]
	\foreach \x in {-3,-2,-1,0,1,2,3}
		\foreach \y in {-3,-2,-1,0,1,2,3}
		{
			\fill[black!30] (\x,\y) circle (0.05);
		}
        \draw[->, opacity=0.4] (-3,0) -- (3,0) node[right] {$x$};
        \draw[->, opacity=0.4] (0,-3) -- (0,3) node[above] {$y$};

		\fill[vertex1, opacity=0.6] (0,0) -- (-90:3) arc (-90:18:3) -- cycle;
		\node[vertex1, opacity=0.6] at (2.5, -2.5) {$ C_1 $};
		\fill[vertex4, opacity=0.6] (0,0) -- (18:3) arc (18:72:3) -- cycle;
		\node[vertex4, opacity=0.6] at (2.5, 2.5) {$ C_4 $};
        \fill[vertex2, opacity=0.6] (0,0) -- (194:3) arc (194:270:3) -- cycle;
		\node[vertex2, opacity=0.6] at (-2.5, -2.5) {$ C_2 $};
        \fill[vertex3, opacity=0.6] (0,0) -- (72:3) arc (72:194:3) -- cycle;
		\node[vertex3, opacity=0.6] at (-2.5, 2.5) {$ C_3 $};

		\drawVector{3}{1}{3}{thick, dashed};
		\drawVector{1}{3}{3}{thick, dashed};
		\drawVector{0}{-1}{3}{thick, dashed};
		\drawVector{-1}{-0.25}{3}{thick, dashed};
		\drawVector{-1}{1}{3}{thick, dashed};
		\drawVector{2}{-1}{3}{thick, dashed};

		\node at (-2,0.5) {$ T_{3,1} $};
		\node at (-0.5,2) {$ T_{3,2} $};
		\node at (1.5,1.5) {$ T_{4,1} $};
		\node at (1,-1.5) {$ T_{1,1} $};
		\node at (2,-0.2) {$ T_{1,2} $};
		\node at (-1.5,-1.5) {$ T_{2,1} $};

	\end{tikzpicture}
}
\caption{}\label{ex:decompo_full_bad}
\end{figure}

Let $ \le_{\omega} $ be a generalized order with respect to the conic decompositon $ \{ T_{i,j} ,\  i \in I_P, j \in D_i \} $.
\begin{definition}
\label{def:new_order}
We define a preorder $\le_{P}$ on $\MonoidOfTerms$ by:
\[ a\X^u \le_{P} b\X^v \iff  \begin{cases}
		\valP(b\X^v) < \val_P(a\X^u) &(1)\\
		\textnormal{\textbf{or}} \\
		\textnormal{equality in (1)} \textnormal{\textbf{ and }} \min(I_P(b\X^v)) < \min(I_P(a\X^u)  &(2) \\
		\textnormal{\textbf{or}} \\ 
		\textnormal{equality in (2)} \textnormal{\textbf{ and }} \X^v \ge_{\omega} \X^u 
		
	\end{cases} \]

For $ f \in \LaurentPolytopalRing $, $ i \in I_P $, $ j \in D_i $ we define:
\begin{enumerate}
	\item  $\lm(f), \lc(f) $ and $ \lt(f) $, the leading monomial,
coefficient and term of $ f$ for this preorder. 
	\item $ \textnormal{in}_{P}(f) := 
		   \textnormal{in}_{r_k}(f) $ where $ k = \min(I_P(\lm(f))) $.

	   \item \nopagebreak[4] $ \lm_{i,j}(f), \lc_{i,j}(f)$ and $\lt_{i,j}(f) $ the
		   leading monomial, coefficient and term of $ \textnormal{in}_{r_i}(f) $ for the cone $ T_{i,j} $
		   in the sense of Definition \ref{def:leadings}.

	\item $ T_{i,j}(f) := \{ t \in T ,\ \lm(tf) \in T_{i,j}\cap V_{i,<}\} $
\end{enumerate}
\end{definition}

\begin{remark}
\label{rem:convergence_new}
Lemma \ref{lemma:topological_order} remains true for $ \le_P $ since 
the additional tie-breaking step happens after comparing the $ \val_P $'s.
\end{remark}

The following Lemma will be used in the proof of Proposition \ref{prop:lmij_unicity}  below.
\begin{lemma}
\label{lem:tij_module}
	Suppose $ f \in \LaurentPolytopalRing $ satisfies $ \lm(f) \in V_{i,<} $ and let $ t \in V_i $.
	Then $ \lm(tf) \in V_{_i,<} $.
\end{lemma}

\begin{proof}
\label{pr:tij_module}

	We have:
	\begin{align}
		\label{eq:proof_tij_module}
	\val_P(tf) = \min_{j \in I_P}(\val_{r_{j}}(tf))
					= \min_{j \in I_P}(\val_{r_{j}}(t) + \val_{r_j}(f)).
	\end{align}

	Since $t \in V_i$, we have $i \in I_P(t)$, that is $ \min_{j \in I_P}(\val_{r_j}(t)) = \val_{r_i}(t) $.
	Similarly, $\lm(f) \in V_{i,<} \subseteq  V_i $, so $i \in I_P(f)$ and $ \min_{j \in I_p}(\val_{r_j}(f)) = \val_{r_i}(f) $.  
	Thus for any $ (j,k) \in I_P^2 $, we have $\val_{r_i}(t) + \val_{r_i}(f) \le \val_{r_j}(t) + \val_{r_k}(f)$.
	This is true in particular when $j = k$, so the minimum in (\ref{eq:proof_tij_module}) is reached at $ i$, that is $i \in I_P(tf)$.  

	Now we show that if $ j \in I_P(tf) $, then $ i \le j $.
	Suppose that $ j \in I_P(tf) $.
	Then $\val_{r_j}(t) + \val_{r_j}(f) = \val_{r_i}(t) + \val_{r_i}(f)$.
	Since $t \in V_i$, $\val_{r_j}(t) \geq \val_{r_i}(t)$,
	so $\val_{r_j}(f) \leq \val_{r_i}(f)=\val_P(f).$
	Thus $\val_{r_j}(f)=\val_P(f)$ and $ j \in I_P(f) $. 
	In addition $ \lm(f) \in V_{i,<} $, thus $ i = \min(I_P(f))$, and then $ i \le j $.

	Finally $ i = \min(I_P(tf)) $, that is $ \lm(tf) \in V_{i,<} $.
\end{proof}

\begin{proposition}
\label{prop:lmij_unicity}
Let $f, g \in \LaurentPolytopalRing$ and $a \in T_{i,j}(f)$. Then:
\begin{enumerate}
    \item $\lm(af) = a \lm_{i,j}(f)$.
    \item $ \lm_{i,j}(f) T_{i,j}(f) \cap \lm_{i,j}(g) T_{i,j}(g) \subseteq T_{i,j} \cap V_{i,<}$ is a finitely generated $T_{i,j}$-module.
\end{enumerate}
\end{proposition}

\begin{proof}
\label{proof:lmij_unicity}
\begin{enumerate}
	\item By definition, $a \in T_{i,j}(f)$ means that $\lm(af) \in T_{i,j} \cap V_{i,<} \subseteq V_{i,<}$.
The elements of $V_{i,<}$ are precisely the monomials $t$ in $T$ such that $\min(I_P(t)) = i$, so
$\min(I_P(\lm(af))) = i$.  
By ($2$) of Definition \ref{def:new_order}, it follows that $\textnormal{in}_{P}(af) = \textnormal{in}_{r_i}(af)$, 
and the latter is equal to $a \times \textnormal{in}_{r_i}(f)$, since $\val_{r_i}$, unlike $\val_P$,
satisfies $\val_{r_i}(af) = \val_{r_i}(a) + \val_{r_i}(f)$.  

Now by Definition \ref{def:new_order}, $\lm(af)$ is the greatest term under the generalized order
$\le_{\omega}$ of $\textnormal{in}_{P}(af) = a \times \textnormal{in}_{r_i}(f)$.  
Since $a \in  T_{i,j}(f)$, this greatest term is equal to
$a \times \lm_{i,j}(f)$ by (3) of Definition \ref{def:new_order}.
\item Let $A(i,j,f,g) := \lm_{i,j}(f) T_{i,j}(f) \cap \lm_{i,j}(g) T_{i,j}(g)$.  
	We first show that $ A(i,j,f,g) $ is a $ T_{i,j} $-module. 
	For this, it is sufficient to show that if $h \in \LaurentPolytopalRing$, then $T_{i,j}(h)$ is a $T_{i,j}$-module.  
	Indeed, if this holds, then $\lm_{i,j}(f)T_{i,j}(f)$ and $\lm_{i,j}(g)T_{i,j}(g)$ are also $T_{i,j}$-modules,  
	and so is their intersection $ A(i,j,f,g) $.

	Let $ a \in T_{i,j}(h)$ and $ t \in T_{i,j} $. We aim to show
	that $ at \in T_{i,j}(h) $, that is $ \lm(tah) \in T_{i,j}\cap V_{i,<} $.


Since $ t \in T_{i,j} \subseteq V_i $ and $ \lm(ah) \in V_{i,<} $, 
we can apply Lemma \ref{lem:tij_module} to $ t $ and $ ah $, 
yielding $ \lm(tah) \in V_{i,<} $. 
By item (2) of Definition \ref{def:new_order}, we then have:
\[
\textnormal{in}_P(tah) = \textnormal{in}_{r_i}(tah) = ta \times \textnormal{in}_{r_i}(h).
\]
So $ \lm(tah) $ equals the maximum, under $ \le_{\omega} $, of the terms of $ t \times a \times \textnormal{in}_{r_i}(h) $. 
Since $ a \in T_{i,j}(h) $, the greatest term of $ a \times \textnormal{in}_{r_i}(h) $ 
is $ a \times \lm_{i,j}(h) $, and this term belongs to $ T_{i,j} $.

Now, since $ t \in T_{i,j} $, by item (2) of Definition \ref{def:gmo}, 
the maximum on the terms of $ t \times a \times \textnormal{in}_{r_i}(h) $, which equals $ \lm(tah) $, 
is in $ T_{i,j} $.

%
%
%
So $ \lm(tah) \in V_{i,<}\cap T_{i,j} $, and 
this completes the proof that \( A(i,j,f,g) \) is a \( T_{i,j} \)-module.

	Now we prove that any subset $X \subseteq T_{i,j}$ is finitely generated as a $T_{i,j}$-module.  
This is equivalent to showing that the following partial order on $X$:
\[
a \preceq b \iff \exists u \in T_{i,j},\ b = au,
\]
is a well partial order.
Let $v_1, \dots, v_k$ be a finite generating set of $T_{i,j}$. For every $t \in X \subseteq T_{i,j}$, 
there exists non-negative integers $ n_1, \dots, n_k $ (depending on $ t $) such that $t = v_{1}^{n_1} \dots v_{k}^{n_k}$.
We embed $X$ into $\N^k$ via the map $\phi: t = v_{1}^{n_1} \dots v_{k}^{n_k} \mapsto (n_1, \dots, n_k) \in \N^k.$
Under this embedding, the partial order $\preceq$ on $X$ corresponds to the component-wise order $\leq$ on $\phi(X)$ in $\N^k$:  
\[
a \preceq b \text{ in } X \iff \phi(a) \leq \phi(b) \text{ in } \N^k,
\]
so we have that $\preceq$ is a well partial order on $X$ if and only if $\leq$ is a well partial order on $\phi(X)$.  
By Dickson's Lemma, the component-wise order on any subset of $\N^k$ is a well partial order, so $\preceq$ is a well partial order on $X$.

Applying this result to $X = A(i,j,f,g)$, we conclude that $A(i,j,f,g)$ is finitely generated as a $T_{i,j}$-module.
\end{enumerate}
\end{proof}

With these adapted definitions, the definition of Gröbner basis becomes:

\begin{definition}
    Let $J$ be an ideal in $\LaurentPolytopalRing$ and $G$ be a finite subset of $J \setminus \{0\}$.  
    We say that $G$ is a Gröbner basis of $J$ (with respect to the g.m.o $\le_{\omega}$ for the conic decomposition $\{T_{i,j}, i \in I_P, j \in D_i\}$) when:
    \[
    \lm(J) = \bigcup_{g \in G, i \in I_P, j \in D_i} T_{i,j}(g) \lm_{i,j}(g).
    \]
\end{definition}

\subsection{Buchberger algorithm in $ \LaurentPolytopalRing $ }

We now adapt the multivariate division algorithm (Algorithm \ref{alg:multi_div}),
the Buchberger criterion (Proposition \ref{prop:buch_criterion}),
and the Buchberger algorithm (Algorithm \ref{alg:buchberger}) from Sections 
\ref{sec:MultiVarDivision} and \ref{sec:GBpolytopal} to our new setting.

This process involves substituting \( \lm_i(f) \) and \( T_i(f) \) with their
newly defined counterparts, \( \lm_{i,j}(f) \) and \( T_{i,j}(f) \), respectively, and substituing $ \le_r $
with $ \le_P $.
Any additional changes beyond these substitutions are also noted where applicable. 
Full proofs can be found in the Appendices.

To simplify notation when indexing the cones in the conic decomposition
\( \{T_{i,j},\ i \in I_P,\ j \in D_i\} \), we define \( L := \{(i,j) \mid i \in I_P,\ j \in D_i \} \).

\subsubsection{Division algorithm}

In Proposition \ref{prop:multi_div_new}, the only difference from 
Proposition \ref{prop:multi_div} is the replacement of $ \le_r $ with $ \le_P $ in (3).

\begin{proposition}
	\label{prop:multi_div_new}
	Let $f \in \LaurentPolytopalRing$ and $G$ be a finite subset of $\LaurentPolytopalRing$. 
	Algorithm \ref{alg:multi_div} produces a family $(q_g)_{g \in  G}$ and $r$ in $\LaurentPolytopalRing$ such that:
	\begin{enumerate}[(1)]
		\item $f = \sum_{g \in  G}q_gg + r$
		\item for all monomial $t$ in $r$,
			\begin{equation}
				t \notin \bigcup_{(i,j) \in L, g \in G}T_{i,j}(g)\lm_{i,j}(g)
			\end{equation}
		\item for all $g \in  G$ and all monomial $t$ in $q_g$, $\lt(tg) \le_{P} \lt(f)$.
	\end{enumerate}
\end{proposition}


\begin{algorithm}
	\SetKwInOut{Input}{input}\SetKwInOut{Output}{output}
	\caption{Multivariate division algorithm in $K\{\mathbf{X};P\}$} 
	\label{alg:multi_div_new}
	\Input{$f,g_1,\dots,g_m \in K\{\mathbf{X};P\}$}
	\Output{$q_1,\dots,q_m,r$ satisfying Prop \ref{prop:multi_div_new}}
	\BlankLine
	$q_1,\dots,q_m,r\gets 0$\;
	\While{ $f \neq 0$}{
		\While{ $\exists (i,j) \in L$ and $k \in  \llbracket 1,m \rrbracket$ such that $\lm\left(\frac{\lm(f)}{\lm_{i,j}(g_k)}g_k\right) = \lm(f)$}{
			$t \gets \frac{\lt(f)}{\lt_{i,j}(g_k)}$\;
			$q_k \gets q_k + t $\;
			$f \gets f - tg_k $\;
		}
		$r \gets r + \lt(f)$

		$f \gets f - \lt(f)$;
	}
	\Return $q_1,\dots,q_m,r$
\end{algorithm}

\subsubsection{S-pairs and Buchberger criterion}


The definition of \( S \)-pair in Definition \ref{def:Spair_new} is identical to that in Definition \ref{def:Spair}. 
In Lemma \ref{lemma:sumSpair_new}, the monomial \( u \) where the collision occurs must belong to \( T_{i,j} \cap V_{i,<} \) (instead of just \( T_{i,j} \) as in Lemma \ref{lemma:sumSpair}) to accommodate the new definition of \( T_{i,j}(f) \) (item (4) of Definition \ref{def:new_order}). 

In the proof of the Buchberger criterion of Proposition \ref{prop:buch_criterion_new} (Appendix \ref{app:proof_criterion}), Lemma \ref{lemma:inequality} is replaced by a refined version, Lemma \ref{lemma:inequality_new}, which accounts for the intermediate tie-breaking involving the indices \( I_P \). 

Otherwise, the statements remain the same.

\begin{definition}[S-pair]
	\label{def:Spair_new}
	Let $f,g \in \LaurentPolytopalRing$ and $(i,j) \in L$. 
	For $v \in \lm_{i,j}(f)T_{i,j(}(f) \cap \lm_{i,j}(g)T_{i,j}(g)$, we define:
	\[ S((i,j),f,g,v) := \lc_{i,j}(g)\frac{v}{\lm_{i,j}(f)}f - \lc_{i,j}(f)\frac{v}{\lm_{i,j}(g)}g.\]
\end{definition}

\begin{lemma}
	\label{lemma:sumSpair_new}
Let $h_1,\dots,h_m \in \LaurentPolytopalRing$ and $(i,j) \in L$. 
For $1\le k \le m-1$, let $U((i,j),h_k,h_{k+1})$ be a finite system of generators of  $\lm_{i,j}(h_k)T_{i,j}(h_k) 
\cap \lm_{i,j}(h_{k+1})T_i(h_{k+1})$ which exists by Proposition \ref{prop:lmij_unicity}.
Suppose that there are $t_1,\dots, t_m \in \MonoidOfTerms$, $u \in T_{i,j} \cap V_{i,<}$ and $c \in \val{}(K^\times)$ such that
\begin{itemize}
	\item for all $k \in \{1,\dots,m\}$, $\lt(t_kh_k) = c_ku$ with $\textnormal{val}(c_k) = c$
	\item  $\lt(\sum_{k=1}^{m}t_kh_k) <_P c_1u$.
\end{itemize}

Then there are elements $d_k \in K$, $v_k \in U((i,j),h_k,h_{k+1})$ for $1 \le k \le m-1$ and $t_m^\prime \in \LaurentPolytopalRing$ such that:
\begin{enumerate}[(1)]
	\item $\sum_{j=k}^{m}t_kh_k = \sum_{k=1}^{m-1}d_j\frac{u}{v_k}S((i,j),h_k,h_{k+1},v_k) + t_{m}^\prime h_m$. \label{enum:1_new}
	\item $\valP(t^\prime_mh_m) > \valP(uc_1)$. \label{enum:2_new}
	\item $\frac{u}{v_k} \in T_{i,j}$ for all $k <m$. \label{enum:3_new}
	\item For all $k <m$, $\val \left( d_k \lc_{i,j}(h_k) \lc_{i,j}(h_{k+1}) \right) \geq c.$ \label{enum:4_new}
\end{enumerate}

\end{lemma}

\begin{proposition}[Buchberger criterion]
	\label{prop:buch_criterion_new}
Let $H = (h_1,\dots,h_m)$ be a family in $\LaurentPolytopalRing$ and $J$ the ideal generated by $H$.
For each $(i,j) \in I$ and $h_k \neq h_s \in H$, let $U((i,j),h_k,h_s)$ be a finite system of generators of 
the $T_{i,j}$-module $\lm_{i,j}(h_s)T_{i,j}(h_s) \cap \lm_{i,j}(h_k)T_{i,j}(h_k)$.
The following are equivalent:
\begin{enumerate}[(1)]
	\item $H$ is a Gröbner basis of $J$
	\item For all $(i,j)  \in L$, $h_k \neq h_s$, $v \in U((i,j),h_s,h_k)$:
		\begin{equation}
			\rem(S((i,j),h_s,h_k,v),H) = 0
		\end{equation}
		.
	\end{enumerate}
\end{proposition}
\subsubsection{Buchberger algorithm}

In the Buchberger algorithm (Algorithm \ref{alg:buchberger_new}), we just update line $ 7 $ to use
Algorithm \ref{alg:multi_div_new} when reducing an $ S $-pair.
\begin{algorithm}
	\SetKwInOut{Input}{input}\SetKwInOut{Output}{output}
	\caption{Buchberger algorithm in $\LaurentPolytopalRing$}
	\label{alg:buchberger_new}
	\Input{$J = (h_1,\dots,h_m)$ an ideal of $\LaurentPolytopalRing$}
	\Output{a Gröbner basis of $J$}
	\BlankLine
	$H \gets \{h_1,\dots,h_m\}$; $B \gets \{(h_s,h_k),\ 1 \le s < k\le m$\}

	\While{ $ B \neq \emptyset $}{
		$(f,g) \gets$ element of $B$; $B \gets B \setminus \{(f,g)\}$\;
		\For{ $ (i,j) \in L$ }{
			$U((i,j),f,g) \gets$ finite set of generators of $\lm_{i,j}(f)T_{i,j}(f)\cap  \lm_{i,j}(g)T_{i,j}(g)$\; 
			\For{ $v \in U((i,j),f,g)$}{
				$\_,r \gets \textnormal{division}(S((i,j),f,g,v),H)$  \tcp*{Algo \ref{alg:multi_div_new}}
				\If{ $r \neq 0$}{
					$B \gets B \cup \{(h,r),\  h \in H\}$\;
					$H \gets H \cup \{r\}$
				}
			}
		}
	}
	\Return $H$
\end{algorithm}



\bibliographystyle{elsarticle-num}
\bibliography{biblio.bib}
\newpage
\begin{appendices}

\section{Appendix: Sagemath implementation} 
\label{annex:sage}
We have implemented in \sage all the algorithms presented in this paper for the Laurent polynomial case: Algorithms \ref{alg:multi_div} and \ref{alg:buchberger} can be applied over $\LaurentRing$ as in \cite{PU:1999}, the main difference being the way leading terms are defined. As far as we know, this is the first ever implementation of the ideas
of \cite{PU:1999}. It has been made possible thanks to Section \ref{sec:implem}.
The integration of our implementation into the \sage sources is currently underway. Discussions, comments and progress can be tracked on the official GitHub repository at:
\begin{center}
{https://github.com/sagemath/sage/pull/37241}
\end{center}

The case of polytopal affinoid algebras, which is predominantly built upon the polynomial case, is currently a work in progress.
We plan to integrate it as well into the \sage sources as soon as it is ready.

\bigskip
\textbf{Short demo}. A new class constructor named {\color{constructor}\verb?GeneralizedOrder?} is introduced.
The supported g.m.o's are those for which the underlying conic decomposition is the decomposition of Example \ref{ex:standard_conic_decomposition} and for which the order is defined by
a group order and a score function as in Lemma \ref{lemma:construct_gto}.
The group order and score function can be specified using the keywords {\color{keyword}\verb?group_order?} and {\color{keyword}\verb?score_function?} in the constructor.
Defaults are the lexicographical order on $\Z^n$ and the {\color{string}\verb?min?} function of Example \ref{ex:standard_gmo_2}.

\bigskip
{\noindent \small
\begin{tabular}{rl}
	\cIn & {\color{import}\verb?from?}\verb? sage.rings.polynomial.generalized_order ?{\color{import}\verb?import?} \\
		 & \verb?GeneralizedOrder? \\
	\cIn & {\color{constructor}\verb?GeneralizedOrder?}\verb?(3)? \\
	\cOut & \verb?Generalized order in 3 variables using (lex, min)? 
\end{tabular}}

\bigskip
Another example, using this time the score function {\color{string}\verb?degmin?} of Example \ref{ex:standard_gmo_1}:

\bigskip
{\noindent \small
\begin{tabular}{rl}
\cIn & {\color{parent}\verb?G?}\verb? = ?{\color{constructor}\verb?GeneralizedOrder?}\verb?(2, ?{\color{keyword}\verb?score_function?}\verb?='?{\color{string}\verb?degmin?}\verb?')?\verb?; ?{\color{parent}\verb?G?}  \\
	\cOut & \verb?Generalized order in 2 variables using (lex, degmin)? 
\end{tabular}}

\bigskip
Now we can compare tuples:

\bigskip
{\noindent \small
\begin{tabular}{rl}
	\cIn & {\color{parent}\verb?G?}\verb?.?{\color{method}\verb?greatest_tuple?}\verb?((-2,3), (1,2))?\\
	\cOut &	$(-2,3)$  \\
	\cIn & {\color{parent}\verb?G?}\verb?.?{\color{method}\verb?greatest_tuple_for_cone?}\verb?(2, (1,3), (-1,2), (-4,-3))?\\
	\cOut &	$(-4, -3)$
\end{tabular}}

\bigbreak
The {\color{constructor}\verb?LaurentPolynomialRing?} constructor has been updated to accept instances of the new {\color{constructor}\verb?GeneralizedOrder?} class for the keyword {\color{keyword}\verb?order?}.
Elements have new methods:

\bigskip
{\noindent \small
\begin{tabular}{rl}
	\cIn & {\color{parent}\verb?L?}\verb?.<x,y> = ?{\color{constructor}\verb?LaurentPolynomialRing?}\verb?(QQ, ?{\color{keyword}\verb?order?}\verb?=G)? \\
	\cIn & \verb?f = 2*x^2*y^-1 + x^-3*y - 3*y^-5? \\
	\cIn & \verb?f.?{\color{method}\verb?leading_monomial?}\verb?()                         //? $\lm(f)$ \\
	\cOut & $y^{-5}$ \\
	\cIn & \verb?f.?{\color{method}\verb?leadin_monomial_for_cone?}\verb?(1)                //? $\lm_1(f)$ \\
	\cOut & $x^{-3}y$ \\
	\cIn & \verb?f.?{\color{method}\verb?generator_for_cone?}\verb?(2)?\verb?                      //? $T_2(f)$ \\
	\cOut & $xy^2$
\end{tabular}}

We can reduce an element using multivariate division (Algorithm \ref{alg:multi_div}):
\bigskip

{\noindent \small
\begin{tabular}{rl}
	\cIn & \verb?L = [x^-2*y^-1 + x*y, x^-2*y + x^2*y^-1]? \\
	\cIn &\verb?f.?{\color{method}\verb?generalized_reduction?}\verb?(L)? \\
	\cOut & $(-y^3 + 2x^2y^{-1} - 3x^{-1}y^{-1}, [x^{-1}y^2 + 3x^{-2}y^{-2}, -3x^{-2}y^{-4}])$
\end{tabular}}

\bigskip
Lastly, within the {\color{constructor}\verb?LaurentPolynomialIdeal?} class, the method {\color{method}\verb?groebner_basis?} has been modified so that it uses Algorithm \ref{alg:buchberger} when a generalized order is specified:

\bigskip
{\noindent \small
\begin{tabular}{rl}
	\cIn & {\color{parent}\verb?G?}\verb? = ?{\color{constructor}\verb?GeneralizedOrder?}\verb?(3, ?{\color{keyword}\verb?score_function?}\verb?='?{\color{string}\verb?degmin?}\verb?')? \\
	\cIn & {\color{parent}\verb?L?}\verb?.<x,y,z> = ?{\color{constructor}\verb?LaurentPolynomialRing?}\verb?(QQ, ?{\color{keyword}\verb?order?}\verb?=G)? \\
	\cIn & {\color{parent}\verb?I?}\verb? = ?{\color{parent}\verb?L?}\verb?.?{\color{method}\verb?ideal?}\verb?([x^-3*y^-4 + x*y*z, x^3*y^-2 + y^-1*z])? \\
	\cIn & {\color{parent}\verb?I?}\verb?.?{\color{method}\verb?groebner_basis?}\verb?()? \\
	\cOut & $(x^3y^{-4} + xyz,$ \\
		  & $\ x^3y^{-2} + y^{-1}z,$ \\
		  & $\ -y^{-4} + x^{-1}y^{-2}z^{-1})$  \\
\end{tabular}}

\bigskip
Another example of Gröbner basis computation for an ideal in the ring $\Q[x^{\pm 1}, y^{\pm 1}, z^{\pm 1}]$:
\bigskip

{\noindent \small
\begin{tabular}{rl}
	\cIn & {\color{parent}\verb?G?}\verb? = ?{\color{constructor}\verb?GeneralizedOrder?}\verb?(3, ?{\color{keyword}\verb?score_function?}\verb?='?{\color{string}\verb?min?}\verb?')? \\
	\cIn & {\color{parent}\verb?L?}\verb?.<x,y,z> = ?{\color{constructor}\verb?LaurentPolynomialRing?}\verb?(QQ, ?{\color{keyword}\verb?order?}\verb?=G)? \\
	\cIn & \verb?g1 = 1/2*x^-1*y + 3*y^-4*z^2 + y? \\
	\cIn & \verb?g2 = 2*x^2*y^3*z^-1 - 1/3*x^-1*y^3*z^-6? \\
	\cIn & {\color{parent}\verb?I?}\verb? = ?{\color{parent}\verb?L?}\verb?.?{\color{method}\verb?ideal?}\verb?([g1, g2])? \\
	\cIn & {\color{parent}\verb?I?}\verb?.?{\color{method}\verb?groebner_basis?}\verb?()? \\
	\cOut & $(y + \frac{1}{2}x^{-1}y + 3y^{-4}z^2,$ \\
		  & $\ 2x^2y^3z^{-1} - \frac{1}{3}x^{-1}y^3z^{-6},$ \\
          & $\ \frac{1}{4}y^5z^5 - 3x^2z^7 + \frac{3}{2}xz^7 + \frac{1}{3}y^5 + z^2,$ \\
		  & $\ \frac{1}{4}y^{10}z^5 - \frac{3}{4}y^5z^7 + \frac{1}{3}y^{10} - \frac{9}{2}xz^9 + 2y^5z^2 + 3z^4,$ \\
		  & $\ \frac{1}{4}y^{15}z^5 + \frac{1}{3}y^{15} + 3y^{10}z^2 + 9y^5z^4 + 9z^6,$ \\
          & $\ 6x^2y^4z^4 + 3xy^4z^4 + 3x^{-1}y^{-1}z)$
\end{tabular}}

\bigskip
Same ideal as above, but using the score function {\color{string}\verb?degmin?}:
\bigskip

{\noindent \small
\begin{tabular}{rl}
	\cIn & {\color{parent}\verb?G?}\verb? = ?{\color{constructor}\verb?GeneralizedOrder?}\verb?(3, ?{\color{keyword}\verb?score_function?}\verb?='?{\color{string}\verb?degmin?}\verb?')? \\
	\cIn & {\color{parent}\verb?L?}\verb?.<x,y,z> = ?{\color{constructor}\verb?LaurentPolynomialRing?}\verb?(QQ, ?{\color{keyword}\verb?order?}\verb?=G)? \\
	\cIn & \verb?g1 = 1/2*x^-1*y + 3*y^-4*z^2 + y? \\
	\cIn & \verb?g2 = 2*x^2*y^3*z^-1 - 1/3*x^-1*y^3*z^-6? \\
	\cIn & {\color{parent}\verb?I?}\verb? = ?{\color{parent}\verb?L?}\verb?.?{\color{method}\verb?ideal?}\verb?([g1, g2])? \\
	\cIn & {\color{parent}\verb?I?}\verb?.?{\color{method}\verb?groebner_basis?}\verb?()? \\
	\cOut & $(y + \frac{1}{2}x^{-1}y + 3y^{-4}z^2,$ \\
		  & $\ 2x^2y^3z^{-1} - \frac{1}{3}x^{-1}y^3z^{-6},$ \\
		  & $\ y^5z^3 + \frac{1}{3}x^{-2}y^5*z^{-2} + x^{-2},$ \\
		  & $\ \frac{-1}{16}y^5z^6 - \frac{1}{12}y^5z - \frac{1}{4}z^3 + \frac{1}{8}x^{-1}z^3 - \frac{1}{16}x^{-2}z^3,$ \\
		  & $\ -\frac{1}{6}xy^3z^{-1} + \frac{1}{24}x^{-1}y^3z^{-1} - \frac{1}{12}x^{-2}y^{-2}z^{-4} + \frac{1}{24}x^{-3}y^{-2}z^{-4},$ \\
		  & $\ -\frac{1}{36}y^3z^{-1} - \frac{1}{72}x^{-1}y^3z^{-1} - \frac{1}{72}x^{-3}y^{-2}z^{-4})$
\end{tabular}}

\bigskip
An example in the ring $\mathbb{F}_9[x^{\pm 1}, y^{\pm 1}]$:
\bigskip

{\noindent \small
\begin{tabular}{rl}
	\cIn & {\color{parent}\verb?G?}\verb? = ?{\color{constructor}\verb?GeneralizedOrder?}\verb?(2, ?{\color{keyword}\verb?score_function?}\verb?='?{\color{string}\verb?degmin?}\verb?')? \\
	\cIn & {\color{parent}\verb?F?}\verb? = ?{\color{constructor}\verb?GF?}\verb?(9)? \\
	\cIn & {\color{parent}\verb?L?}\verb?.<x,y> = ?{\color{constructor}\verb?LaurentPolynomialRing?}\verb?(?{\color{parent}\verb?F?},{\color{keyword}\verb? order?}\verb?=G)? \\
	\cIn & \verb?g1 = x^2*y + y^-6? \\
	\cIn & \verb?g2 = x^3*y^-2 + x^-6*y? \\
	\cIn & \verb?g3 = x^-2*y + x^-1*y^-2? \\
	\cIn & {\color{parent}\verb?I?}\verb? = ?{\color{parent}\verb?L?}\verb?.?{\color{method}\verb?ideal?}\verb?([g1, g2, g3])? \\
	\cIn & {\color{parent}\verb?I?}\verb?.?{\color{method}\verb?groebner_basis?}\verb?()? \\
	\cOut & $(x^2y + y^{-6},$ \\
          & $\ x^3y^{-2} + x^{-6}y,$ \\
          & $\ x^{-2}y + x^{-1}y^{-2},$ \\
          & $\ -xy + x^{-2}y^{-3},$ \\
          & $\ x^2y + x^{-2},$ \\
          & $\ y^{-1} + x^{-1},$ \\
  		  & $\ -y^2 + x^{-1},$ \\
          & $\ x^{-1}y^{-1} + x^{-2}y^{-2})$ \\
\end{tabular}}

\section{Proof of Proposition \ref{prop:multi_div_new}}
\begin{proof}
\label{proof:multi_div_new}
We construct by induction sequences $(f_k)_{k\ge 0}$, $(q_{g,k})_{k \ge 0}$ for $g \in G$ and $(r_k)_{k\ge 0}$ such that for all $k \ge 0$:
\[ f = f_k + \sum_{g \in G}q_{g,k}g + r_k,\]
and $\lt(f_k)_{k\ge 0}$ is strictly decreasing for $ \le_{P} $.

We first set $f_0 = f$, $r_0 = 0$ and $q_{g,0} = 0$ for all $g \in G$. 

If there exists $(i,j) \in L$ and $g \in G$ such that \[\lm \left( \frac{\lm(f_k)}{\lm_{i,j}(g)}g \right) = \lm(f_k) \in T_{i,j} \cap V_{i,<}, \] we set $f_{k+1} = f_k - tg$ and $q_{g,k+1} = q_{g,k} + tg$ where $t = \frac{\lt(f_k)}{\lt_{i,j}(g)}$, and leave unchanged $r_k$ and the other $q_{g,k}$'s. 

Otherwise, we set $f_{k+1} = f_k - lt(f_k)$ and $r_{k+1} = r_k + lt(f_k)$ and leave unchanged the $q_{g,k}$'s. 

By construction, the sequence $(\lt(f_k))_{k \ge 0}$ is strictly decreasing for $ \le_{P} $.
By Remark \ref{rem:convergence_new}, we deduce that 
$\valP(r_{k+1}-r_k)$ and the $\valP(q_{g,k+1}-q_{g,k})$'s tend to $+\infty$ when $k \to +\infty$.
Thus $r_k$ and the $q_{g,k}$'s converge in $K\{\mathbf{X};P\}$.
\end{proof}

\section{Proof of Lemma \ref{lemma:sumSpair_new}}
\begin{proof}
\label{proof:sumSpair_new}

	Write $p_k = \frac{t_kh_k}{c_k}$, $e_k = \sum_{s=1}^{k}c_s$ and $t_k=\gamma_k \tilde{t}_k$ for some $\gamma_k \in K$
	and some monomial $\tilde{t}_k$.
	By hypothesis $u$ is in $T_{i,j}\cap V_{i,<}$ and $u = \tilde{t_k}\lm_{i,j}(h_k) \in T_{i,j}(h_k)\lm_{i,j}(h_k)$ for all $k$. 
	This implies that for all $k <m$ we have:
	\[u \in  T_{i,j}(h_k)\lm_{i,j}(h_k) \cap T_{i,j}(h_{k+1})\lm_{i,j}(h_{k+1}) = T_{i,j}\cdot U((i,j),h_k,h_{k+1})\]
	We deduce that for all $k<m$, there exist $s_k \in T_{i,j}$ and $v_k \in  U((i,j),h_k, h_{k+1}$) such that $u = s_kv_k$.
	Now write
	\begin{align}
	\sum_{k=1}^{m}t_kh_k &= e_1(p_1 - p_2) + \dots + e_{m-1}(p_{m-1}- p_m) + e_mp_m \label{eqn:somme_telescopique_new}
	\end{align}
	For all $k<m$, we have $\lt(t_k h_k)=c_k u= \gamma_k \lc_{i,j}(h_k)\tilde{t}_k \lm_{i,j}(h_k), $
	hence $\frac{t_k}{c_k \tilde{t}_k}=\frac{1}{\lc_{i,j}(h_k)}$.
	For any $k<m$, put $P_k = p_k - p_{k+1}$. We can then write:
	\begin{align*}
		P_k &= \frac{u}{v_k}\left(\frac{v_k}{u}p_k - \frac{v_k}{u}p_{k+1}\right) = \frac{u}{v_k}\left(\frac{t_kv_kh_k}{c_k\tilde{t}_k\lm_{i,j}(h_k)} - \frac{t_{k+1}v_kh_{k+1}}{c_{k+1}\tilde{t}_{k+1}\lm_{i,j}(h_{k+1})} \right)  \\
					  &= \frac{u}{v_k}\left(\frac{1}{\lc_{i,j}(h_k)}\frac{v_k}{\lm_{i,j}(h_k)}h_k - \frac{1}{\lc_{i,j}(h_{k+1})}\frac{v_k}{\lm_{i,j}(h_{k+1})}h_{k+1}\right)	\\
					  &= \frac{1}{\lc_{i,j}(h_k)\lc_{i,j}(h_{k+1})}\frac{u}{v_k}\left(\lc_{i,j}(h_{k+1})\frac{v_k}{\lm_{i,j}(h_k)}h_k - \lc_{i,j}(h_k)\frac{v_k}{\lm_{i,j}(h_{k+1})}h_{k+1}\right) \\
					  &= \frac{1}{\lc_{i,j}(h_k)\lc_{i,j}(h_{k+1})}\frac{u}{v_k}S((i,j),h_k,h_{k+1}).
	\end{align*}
	Plugging in the last expression back into equation \eqref{eqn:somme_telescopique_new} gives the desired equality
	with $d_k = \frac{e_k}{\lc_{i,j}(h_k)\lc_{i,j}(h_{k+1})}$ and $t^{\prime}_m = \frac{e_m}{c_m}t_m$.
	It satisfies \ref{enum:1_new}.
	The hypothesis forces $\val{}(e_m) > \val{}(c_m)$. Then we have $\valP(t^\prime_mh_m) = \val{}(e_m) + \valP(u) > \val{}(c_m) + \valP(u) = \valP(c_1u)$, which proves \ref{enum:2}.
	In addition, $\frac{u}{v_k} = s_k \in T_{i,j}$, which proves \ref{enum:3_new}.
	Finally, using that $\val (e_k) \geq c$ and $d_k = \frac{e_k}{\lc_{i,j}(h_k)\lc_{i,j}(h_{k+1})}$, one gets \ref{enum:4_new}.
\end{proof}

\section{Proof of Proposition \ref{prop:buch_criterion_new}}
\label{app:proof_criterion}
Before presenting the proof, we establish two lemmas that are adaptations of 
Lemma \ref{lemma:Spair} and Lemma \ref{lemma:inequality}, respectively.

\begin{lemma}
\label{lemma:Spair_new}
For $f,g \in \LaurentPolytopalRing$, $(i,j) \in L$ and $v \in \lm_{i,j}(f)T_{i,j}(f) \cap \lm_{i,j}(g)T_{i,j}(g)$, we have
$\lt(S((i,j),f,g,v)) <_{P} \lc_{i,j}(f)\lc_{i,j}(g)v$.
\end{lemma}

\begin{proof}
Since $v \in T_{i,j}(f)\lm_{i,j}(f)\cap T_{i,j}(g)\lm_{i,j}(g)$, there exists $m_f \in T_{i,j}(f)$
and $m_g \in T_{i,j}(g)$ such that $v = \lm(m_ff) = \lm(m_gg) = m_f\lm_{i,j}(f) = m_g\lm_{i,j}(g)$. 
Then the leading terms of $\lc_{i,j}(g)\frac{v}{\lm_{i,j}(f)}f$ and  $\lc_{i,j}(f)\frac{v}{\lm_{i,j}(g)}g$
are both equal to $\lc_{i,j}(f)\lc_{i,j}(g)v$.
They cancel out leaving $\lt(S((i,j),f,g,v)) <_{P} \lc_{i,j}(f)\lc_{i,j}(g)v$.
\end{proof}

\begin{lemma}
\label{lemma:inequality_new}
If $f \in \LaurentPolytopalRing$ and $(i,j) \in L$ are such that $\lt (f) <_{P} u$ for some term $ u $ 
satisfying $\lm(u) \in T_{i,j}\cap V_{i,<}$,
then for any $v \in T_{i,j}$ we have $ \lt(vf) <_P vu.$
\end{lemma}

\begin{proof}
The coefficient of $ u $ play no role in the proof, so we can
assume that $ u = \lm(u) $.
Take $t$ a term of $f$. Then $t <_{P} u$. Since $ u, v \in V_i$,
we have by Lemma \ref{lem:tij_module} that $ \val_{P}(uv) = \val_{r_i}(u) + \val_{r_i}(v) = \val_P(u) + \val_P(v) $.
We separate in 3 cases following Definition \ref{def:new_order}:
\begin{itemize}
\item Case $ \val_P(u) < \val_P(t) $.

Then $ \val_P(uv) = \val_P(u) + \val_P(v) < \val_P(t)
+ \val_P(v) \le \val_P(tv)$. Thus $ tv <_{P} uv $.

\item Case $ \val_P(u) = \val_P(t) $ and $ \min(I_P(u)) < \min(I_P(t))$.

Then $ \val_P(uv) = \val_P(u) + \val_P(v) = \val_P(t) + \val_P(v) \le \val_P(tv) $.
If $ \val_P(uv) < \val_P(tv)$, then $ tv <_p uv$ and we are done, so let's suppose
$ \val_P(uv) = \val_P(tv) $. Then we have
\begin{equation}
	\label{eq:item_2}
	\val_P(uv) = \val_{r_i}(u) + \val_{r_i}(v) = \val_P(tv) = \min_{k \in I_P}(\val_{r_k}(t) + \val_{r_k}(v))
\end{equation}

We have $ u \in V_{i,<} $, so $ i = \min(I_P(u))$.
Now since $ \val_P(u) = \val_P(t) $ and $ i = \min(I_P(u)) < \min(I_P(t))$,
we have $ \val_{r_k}(t) > \val_{r_i}(u)$ for $ k \le i $.
Also, since $ v \in T_{i,j} \subseteq V_{i}$, we have
$ \val_{r_k}(v) \ge \val_{r_i}(v) $ for any $ k \in I_P $.
Thus for $ k \le i $, we have
$ \val_{r_k}(t) + \val_{r_k}(v) > \val_{r_i}(u) + \val_{r_i}(v) $. It follows that
the minimum in (\ref{eq:item_2})
can be reached only for a $ k > i $. This shows $ \min(I_P(tv)) > \min(I_P(uv)) $ and so $ tv <_P vu $.

\item Case $ \val_P(u) = \val_P(t) $ and $ \min(I_P(u)) = \min(I_P(t))$ and $ u >_{\omega} t $

Reasoning like in case 2 with $ \min(I_P(u)) = \min(I_P(t)) $ this time, one
gets that the minimum in (\ref{eq:item_2}) can only be reached
for indices $ k \ge i $, so $ \min(I_P(tv)) \ge \min(I_P(uv)) $.
If the inequality is strict, we get $ tv <_P uv$ and we are done, otherwise
this is $ \le_{\omega} $ who breaks ties between $ uv $ and $ tv $.
We can then just use item 2. of Definition \ref{def:gmo} like in the proof of Lemma \ref{lemma:inequality} to get
$ tv <_{\omega} uv $, and so $ tv <_P uv $.

\end{itemize}

Now since the maximum on the $vt$'s for $ \le_P $ equals $ \lt(vf) $, we conclude
$ \lt(vf) <_P vu $.
\end{proof}

\begin{proof}[Proof of Proposition \ref{prop:buch_criterion_new} ]
\label{proof:buch_criterion_new}
By contradiction, assume that (2) is true and that $H$ is not a Gröbner basis of $J$. 
Then there exists $f \in  J$ such that $\lm(f) \notin \bigcup_{(i,j) \in L, 1\le k \le m}T_{i,j}(h_k)\lm_{i,j}(h_k)$.
Since, $f \in J = (h_1,\dots,h_m)$, we can write $f = \sum_{k=1}^{m}q_kh_k$ for some $q_k$ in $\LaurentPolytopalRing$. 
Write $\Delta(k)$ to be the set of terms of $q_k$.
We can rewrite $f$ as $\sum_{k=1}^{m}\sum_{\alpha \in \Delta(k)}t_{k,\alpha}h_k$.
For such a writing of $f$, define $u = \max\{\lt(t_{k,\alpha}h_k), 1 \le k \le m, \alpha \in \Delta(k)\}$
and write the term $u$ as $u=c\tilde{u}$
for some $c \in K$ and some monomial $\tilde{u}.$
We have $\lt(f) <_P u$ because $\lm(f) \notin \cup_{i,j}T_{i,j}(h_k)\lm_{i,j}(h_k)$.
Thus, $\val_P(u)$ is upper-bounded.

Since $\val$ is discrete, there is a maximal $\val_P(u)$ among all
possible expressions of $f=\sum_{k=1}^{m}q_kh_k$.
Among the expressions reaching this valuation, Lemma \ref{lemma:stric_descending_seq} ensure 
there is one such that $u$ is minimal. Let $(i,j) \in L$ be such that $u \in T_{i,j} \cap V_{i,<}$.
Define $Z = \{(k,\alpha) \in \llbracket 1,m\rrbracket \times \Delta(k),\ \textrm{s.t. }
\lt(t_{k,\alpha}h_k) =_P u\}$ and
$Z^\prime = \{(k, \alpha) \in \llbracket 1,m\rrbracket \times \Delta(k),\ \textrm{s.t. } \lt(t_{k,\alpha}h_k) <_P u\}$. 
We can then write:
\begin{equation}
 f = \sum_{(k,\alpha) \in Z}t_{k,\alpha}h_k + \sum_{(k,\alpha) \in Z^\prime}t_{k,\alpha}h_k \label{eq:ecriture_f_ij_new}
\end{equation}
Let $g := \sum_{(k,\alpha) \in Z}t_{k,\alpha}h_k$.
We have $\lt(g) \le_P \max(\lt(f),\lt(\sum_{(k,\alpha) \in Z^\prime}t_{k,\alpha}h_k )) <_{P} u$ and
$\lt(t_{k,\alpha}h_k) = c_{k,\alpha}\tilde{u}$ for all $(k,\alpha) \in Z$, where the $c_{k,\alpha}$ all have the same valuation.
So $g$ satisfies the conditions of Lemma \ref{lemma:sumSpair_new} and we can write
\begin{equation}
	g = \sum_{k=1}^{m-1}d_k\frac{\tilde{u}}{v_k}S((i,j),h_k,h_{k+1},v_k) + t^\prime_mh_m \label{eq:def_g_ijn_Buchberger}
\end{equation}
for some $d_k \in K$, $v_k \in U((i,j),h_k,h_{k+1})$, $\val \left( d_k \lc_{i,j}(h_k) \lc_{i,j}(h_{k+1}) \right) \geq \val (c)$
and $\tilde{u}/v_k \in T_{i,j}$ for $k<m$, and with $\lt(t^\prime_mh_m) <_P u$.
Now we use the hypothesis that all the S-pairs of elements of $H$ reduce to zero. 
For each $k<m$ we can write
\begin{equation}
S((i,j),h_k,h_{k+1},v_k) = \sum_{l=1}^{m} q^{(k)}_l  h_l,
\end{equation}
for some $q^{(k)}_l $'s in $\LaurentPolytopalRing$
satisfying 
\begin{align*}
\lt (q^{(k)}_l  h_l) & \leq_P \lt \left( S((i,j),h_k,h_{k+1},v_k) \right), \\
                    & <_P \lc_{i,j}(h_k)\lc_{i,j}(h_{k+1})v_k,
\end{align*}
where the last inequality comes from Lemma \ref{lemma:Spair_new}.
Since $v_k \in T_{i,j} \cap V_{i,<}$ and $\tilde{u}/v_k \in T_{i,j}$, we can apply
Lemma \ref{lemma:inequality_new}:

\begin{align*}
  \lt \left(\frac{\tilde{u}}{v_k}q^{(k)}_l  h_l\right) & <_P \lc_{i,j}(h_k)\lc_{i,j}(h_{k+1})v_k\frac{\tilde{u}}{v_k},  \\
 & = \lc_{i,j}(h_k)\lc_{i,j}(h_{k+1})\tilde{u}.
\end{align*}
Finally, using that $\val \left( d_k \lc_{i,j}(h_k) \lc_{i,j}(h_{k+1}) \right) \geq \val (c)$,
we deduce that for all $l \in \llbracket 1, m \rrbracket$ and $k \in \llbracket1,m-1 \rrbracket$; \[\lt \left(d_k \frac{\tilde{u}}{v_k}q^{(k)}_l  h_l\right)<_P u.\]
Inserting the expressions of $d_k \frac{\tilde{u}}{v_k}S((i,j),h_k,h_{k+1},v_k)$ as $\sum_{l=1}^{m} d_k \frac{\tilde{u}}{v_k}q^{(k)}_l  h_l$
in Equations \eqref{eq:def_g_ijn_Buchberger} and then \eqref{eq:ecriture_f_ij_new},
we get an expression of $f$ in terms of the $h_k$'s with strictly smaller $u$ for $ \le_P $, contradicting its minimality.
\end{proof}

\end{appendices}

\end{document}